\numberwithin{equation}{section}
\theoremstyle{remark}
\newtheorem*{notation*}{\protect\notationname}
\theoremstyle{plain}
\newtheorem{thm}{\protect\theoremname}[section]
\theoremstyle{plain}
\newtheorem{ax}[thm]{\protect\axiomname}
\theoremstyle{plain}
\newtheorem{lem}[thm]{\protect\lemmaname}
\theoremstyle{plain}
\newtheorem{prop}[thm]{\protect\propositionname}
\theoremstyle{definition}
\newtheorem{example}[thm]{\protect\examplename}
\theoremstyle{remark}
\newtheorem{rem}[thm]{\protect\remarkname}
\theoremstyle{definition}
\newtheorem{defn}[thm]{\protect\definitionname}
\newcommand{\qw}[1][-1]{\ar @{-} [0,#1]}
\newcommand{\gate}[1]{*{\xy *+<.6em>{#1};p\save+LU;+RU **\dir{-}\restore\save+RU;+RD **\dir{-}\restore\save+RD;+LD **\dir{-}\restore\POS+LD;+LU **\dir{-}\endxy} \qw}
\newcommand{\measureD}[1]{*{\xy*+=+<.5em>{\vphantom{\rule{0em}{.1em}#1}}*\cir{r_l};p\save*!R{#1} \restore\save+UC;+UC-<.5em,0em>*!R{\hphantom{#1}}+L **\dir{-} \restore\save+DC;+DC-<.5em,0em>*!R{\hphantom{#1}}+L **\dir{-} \restore\POS+UC-<.5em,0em>*!R{\hphantom{#1}}+L;+DC-<.5em,0em>*!R{\hphantom{#1}}+L **\dir{-} \endxy} \qw}
\newcommand{\multimeasureD}[2]{*+<1em,.9em>{\hphantom{#2}}\save[0,0].[#1,0];p\save !C *{#2},p+LU+<0em,0em>;+RU+<-.8em,0em> **\dir{-}\restore\save +LD;+LU **\dir{-}\restore\save +LD;+RD-<.8em,0em> **\dir{-} \restore\save +RD+<0em,.8em>;+RU-<0em,.8em> **\dir{-} \restore \POS !UR*!UR{\cir<.9em>{r_d}};!DR*!DR{\cir<.9em>{d_l}}\restore \qw}
\newcommand{\multigate}[2]{*+<1em,.9em>{\hphantom{#2}} \qw \POS[0,0].[#1,0];p !C *{#2},p \save+LU;+RU **\dir{-}\restore\save+RU;+RD **\dir{-}\restore\save+RD;+LD **\dir{-}\restore\save+LD;+LU **\dir{-}\restore}
\newcommand{\ghost}[1]{*+<1em,.9em>{\hphantom{#1}} \qw}
\newcommand{\Qcircuit}[1][0em]{\xymatrix @*=<#1>} 
\newcommand{\pureghost}[1]{*+<1em,.9em>{\hphantom{#1}}}
\newcommand{\multiprepareC}[2]{*+<1em,.9em>{\hphantom{#2}}\save[0,0].[#1,0];p\save !C
  *{#2},p+RU+<0em,0em>;+LU+<+.8em,0em> **\dir{-}\restore\save +RD;+RU **\dir{-}\restore\save
  +RD;+LD+<.8em,0em> **\dir{-} \restore\save +LD+<0em,.8em>;+LU-<0em,.8em> **\dir{-} \restore \POS
  !UL*!UL{\cir<.9em>{u_r}};!DL*!DL{\cir<.9em>{l_u}}\restore}
\newcommand{\prepareC}[1]{*{\xy*+=+<.5em>{\vphantom{#1\rule{0em}{.1em}}}*\cir{l^r};p\save*!L{#1} \restore\save+UC;+UC+<.5em,0em>*!L{\hphantom{#1}}+R **\dir{-} \restore\save+DC;+DC+<.5em,0em>*!L{\hphantom{#1}}+R **\dir{-} \restore\POS+UC+<.5em,0em>*!L{\hphantom{#1}}+R;+DC+<.5em,0em>*!L{\hphantom{#1}}+R **\dir{-} \endxy}}
\newcommand{\poloFantasmaCn}[1]{{{}^{#1}_{\phantom{#1}}}}
\newcommand{\Tr}{\mathrm{Tr}}
\newcommand{\rA}{\mathrm{A}}
\newcommand{\rB}{\mathrm{B}}
\newcommand{\rR}{\mathrm{R}}
\newcommand{\rS}{\mathrm{S}}
\newcommand{\cA}{\mathcal{A}}
\newcommand{\cB}{\mathcal{B}}
\newcommand{\cC}{\mathcal{C}}
\providecommand{\axiomname}{Axiom}
\providecommand{\definitionname}{Definition}
\providecommand{\examplename}{Example}
\providecommand{\lemmaname}{Lemma}
\providecommand{\notationname}{Notation}
\providecommand{\propositionname}{Proposition}
\providecommand{\remarkname}{Remark}
\providecommand{\theoremname}{Theorem}
\begin{document}
\title{Necessary and Sufficient Conditions on Measurements of Quantum Channels}
\author{John Burniston}
\affiliation{Department of Mathematics \& Statistics, University of Calgary, Calgary,
AB, Canada}
\affiliation{Institute for Quantum Science and Technology, University of Calgary,
Calgary, AB, Canada}
\author{Michael Grabowecky}
\affiliation{Institute for Quantum Computing, University of Waterloo, Waterloo,
ON, Canada}
\affiliation{Department of Mathematics \& Statistics, University of Calgary, Calgary,
AB, Canada}
\affiliation{Institute for Quantum Science and Technology, University of Calgary,
Calgary, AB, Canada}
\author{Carlo Maria Scandolo}
\email{carlomaria.scandolo@ucalgary.ca}

\affiliation{Department of Mathematics \& Statistics, University of Calgary, Calgary,
AB, Canada}
\affiliation{Institute for Quantum Science and Technology, University of Calgary,
Calgary, AB, Canada}
\author{Giulio Chiribella}
\affiliation{Department of Computer Science, The University of Hong Kong, Hong
Kong, China}
\affiliation{Department of Computer Science, University of Oxford, Oxford, UK}
\author{Gilad Gour}
\affiliation{Department of Mathematics \& Statistics, University of Calgary, Calgary,
AB, Canada}
\affiliation{Institute for Quantum Science and Technology, University of Calgary,
Calgary, AB, Canada}
\begin{abstract}
Quantum supermaps are a higher-order generalization of quantum maps,
taking quantum maps to quantum maps. It is known that any completely
positive, trace non-increasing (CPTNI) map can be performed as part
of a quantum measurement. By providing an explicit counterexample
we show that, instead, not every quantum supermap sending a quantum
channel to a CPTNI map can be realized in a measurement on quantum
channels. We find that the supermaps that can be implemented in this
way are exactly those transforming quantum channels into CPTNI maps
even when tensored with the identity supermap. We link this result
to the fact that the principle of causality fails in the theory of
quantum supermaps.
\end{abstract}
\maketitle

\section{Introduction}

One of the most puzzling aspects of quantum mechanics has always been
the need to consider probabilistic processes to describe the observation
of physical systems. The development of quantum information theory
has turned this puzzling feature into a resource for many protocols.
Think, for instance, of the implementation of quantum computation
through measurements (measurement-based quantum computation) \citep{MBQC0,MBQC1},
of quantum cryptographic protocols \citep{BB84,Mayers-Yao,Device-independent1,Device-independent2},
or of the generation of random numbers \citep{QRNG}.

Focusing our attention on finite-dimensional quantum systems, the
most general quantum measuring device can be described by a set of
linear maps that are completely positive and trace non-increasing
(CPTNI). The maps in this set must sum to a quantum channel, namely
to a completely positive and trace-preserving (CPTP) linear map \citep{Nielsen2010,Wilde,Watrous}.
This situation is described by a \emph{quantum instrument} \citep{Davies1970,Holevo-book,Barchielli,Watrous}:
a quantum channel that takes a quantum system as input, and outputs
a a classical-quantum system, where the classical system represents
the `meter' read by the experimenter. From the classical outcome
read on the meter, one can infer which CPTNI map occurred during the
experiment. This characterization of quantum experiments, in conjunction
with the fact that quantum channels with trivial (i.e.\ 1-dimensional)
input represent states \citep{Chiribella2008}, singles out channels
as the fundamental objects of quantum theory, encapsulating all the
other processes.

For this reason, it is important to understand how to manipulate quantum
channels. The study of such manipulations, initiated in \citep{Chiribella2008,Hierarchy-combs,Switch},
has both practical \citep{Gisin,Gutoski,Circuit-architecture,Chiribella2008,Chiribella2016,Process-tensor,Chiribella-communication,Ebler2,Chiribella-zero,Milz_2018,Abbott,chiribella2019quantum,Chiribella-indefinite-resource}
and foundational consequences \citep{Gisin,Quartic-theory,Process-matrix,Switch,Giarmatzi,Milz_2018},
and has led to the development of new research areas, such as resource
theories of quantum processes \citep{Resource-theories,Fong,Coherence-beyond-states,Pirandola-LOCC,Berta-cost,Wilde-cost,Gour2018,Gour2018a,Coherence-beyond2,Rosset-resource,Li,Gour-review,Magic-channels,Wang_magic,Thermal-capacity,Coherence-operations,Resource-channels-1,Resource-channels-2,Gour-Winter,Gour-Scandolo,Wilde-entanglement,Process-Markov,Gaurav,Takagi-communication,Chiribella-indefinite-resource}.
The manipulation of quantum channels is implemented by \emph{supermaps}
\citep{Chiribella2008,Hierarchy-combs,Switch,Perinotti1,Perinotti2,Gour2018},
which are linear transformations sending linear maps to linear maps.
In this setting, \emph{superchannels} \citep{Chiribella2008,Gour2018}
represent the way a channel can evolve deterministically, in the same
way as channels represent the deterministic evolution of a quantum
state. Superchannels are the supermaps that take quantum channels
to quantum channels even when tensored with the identity supermap
\citep{Switch}. Measurements on channels are then described by a
set of supermaps that sum to a superchannel, giving rise to the notion
of a quantum \emph{super-instrument}.

In this article we focus on measurements performed on quantum channels,
and we show that a naive application of a condition similar to CPTNI
in quantum theory is \emph{not} enough to single out physical supermaps,
viz.\ those that can arise in an experiment performed on quantum
channels.

We will adopt the following notation.
\begin{notation*}
$\mathfrak{B}\left(\mathcal{H}\right)$ denotes the set of bounded
linear operators on the finite-dimensional Hilbert space $\mathcal{H}$,
$\mathfrak{B}_{h}\left(\mathcal{H}\right)$ the set of bounded hermitian
operators on $\mathcal{H}$, and $\mathfrak{D}\left(\mathcal{H}\right)$
the set of density matrices on $\mathcal{H}$. Every letter without
a subscript denotes a pair of systems $A:=A_{0}A_{1}$, where $A_{0}$
is usually regarded as an input system, and $A_{1}$ as an output
system. Thus $\mathcal{E}^{A}:=\mathcal{E}^{A_{0}\rightarrow A_{1}}$
denotes a linear map with input $A_{0}$ and output $A_{1}$, and
$\mathfrak{L}^{A}:=\mathfrak{L}^{A_{0}\rightarrow A_{1}}$ is the
set of such linear maps, from $\mathfrak{B}\left(\mathcal{H}^{A_{0}}\right)$
to $\mathfrak{B}\left(\mathcal{H}^{A_{1}}\right)$. $\left|A_{0}\right|$
denotes the dimension of $\mathcal{H}^{A_{0}}$. A supermap $\Theta^{A\rightarrow B}$
takes elements of $\mathfrak{L}^{A}$ to elements of $\mathfrak{L}^{B}$,
and its action on a linear map $\mathcal{E}^{A}$ is denoted with
square brackets: $\Theta^{A\rightarrow B}\left[\mathcal{E}^{A}\right]$.
Finally, a tilde over a system, as in $A_{0}\widetilde{A}_{0}$, indicates
that we are considering two identical copies of a system (in this
case $A_{0}$). We adopt the following convention concerning partial
traces: if $M^{AB}$ is a matrix on $A_{0}A_{1}B_{0}B_{1}$, $M^{AB_{0}}$
denotes the partial trace on the missing system $B_{1}$: $M^{AB_{0}}:=\mathrm{Tr}_{B_{1}}\left[M^{AB}\right]$.
In summary, when a superscript is missing, we have taken the partial
trace over the missing system of the original matrix.
\end{notation*}

\section{CPTNI-preserving supermaps}

The first condition one must require of physical supermaps is that
they be \emph{completely CP-preserving} (CPP): they should send CP
maps to CP maps even when tensored with the identity supermap. In
formula, a supermap $\Theta^{A\rightarrow B}$ is CPP if for all bipartite
CP maps $\mathcal{E}^{RA}\in\mathfrak{L}^{RA}$, we have that
\begin{equation}
\left(\mathbf{1}^{R}\otimes\Theta^{A\rightarrow B}\right)\left[\mathcal{E}^{RA}\right],\label{eq:CPP}
\end{equation}
is still a CP map, where $\mathbf{1}^{R}:=\mathbf{1}^{R\rightarrow R}$
is the identity supermap. This is analogous to the CP condition for
quantum maps.

The second condition, analogous to being TNI for quantum maps, is
that a physical supermap should send CPTNI maps to CPTNI maps. If
a supermap is CPP, demanding this is equivalent to requiring that
it should take \emph{CPTP} maps to CPTNI maps (see appendix~\ref{sec:General-facts-about}).
More precisely, a supermap $\Theta^{A\rightarrow B}$ is \emph{CPTNI-preserving}
if it is CPP and
\begin{equation}
\mathrm{Tr}\left[\Theta^{A\rightarrow B}\left[\mathcal{N}^{A}\right]\left(\rho^{B_{0}}\right)\right]\leq1,\label{eq:CPTNI}
\end{equation}
for any CPTP map $\mathcal{N}^{A}\in\mathfrak{L}^{A}$ and any $\rho^{B_{0}}\in\mathfrak{D}\left(\mathcal{H}^{B_{0}}\right)$.
The analogy between CPTNI quantum maps and CPTNI-preserving supermaps
is illustrated in Fig.~\ref{fig:CPTNI-CPTNI}.
\begin{figure}
\begin{centering}
\includegraphics[scale=0.3]{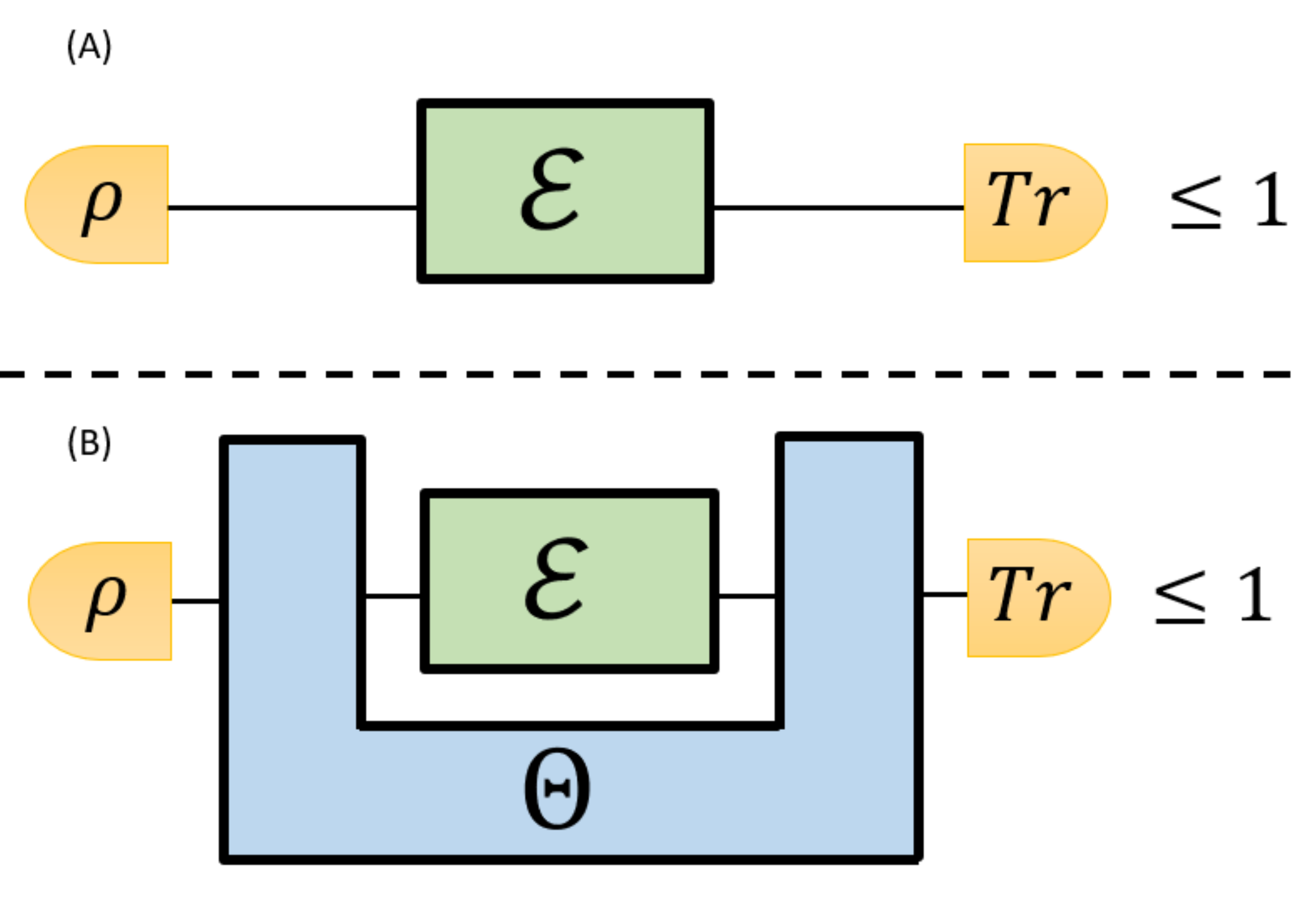}
\par\end{centering}
\caption{\label{fig:CPTNI-CPTNI}The condition for a CP map $\mathcal{E}$
to be a CPTNI map is represented in (a). The condition for a CPP supermap
$\Theta$ to be a CPTNI-preserving supermap is depicted in (b). Note
that the two conditions are formally very similar, the only difference
being the presence of the vice representing the supermap $\Theta$.
Therefore, we can rightfully say that CPTNI-preserving supermaps are
the higher-level analogy of CPTNI maps.}

\end{figure}

A measurement on quantum channels (called a \emph{super-measurement})
is described by a set of CPTNI-preserving supermaps $\left\{ \Theta_{x}^{A\rightarrow B}\right\} _{x\in X}$,
indexed by the outcome $x$ of the measurement, such that $\sum_{x\in X}\Theta_{x}^{A\rightarrow B}$
is a superchannel. This gives rise to the super-instrument:
\begin{equation}
\Upsilon^{A\rightarrow X_{1}B}\left[\mathcal{E}^{A}\right]=\sum_{x\in X}\left|x\right\rangle \left\langle x\right|^{X_{1}}\otimes\Theta_{x}^{A\rightarrow B}\left[\mathcal{E}^{A}\right],\label{eq:super-instrument}
\end{equation}
for every CP map $\mathcal{E}^{A}$, where system $X_{1}$ represents
the classical meter and $\left\{ \left|x\right\rangle ^{X_{1}}\right\} $
is an orthonormal basis of $X_{1}$.

Our main result is that, surprisingly, \emph{not all} CPTNI-preserving
supermaps can arise in a quantum super-measurement, therefore \emph{not
all} CPTNI-preserving supermaps are physical. An example is the supermap
$\Theta^{A\rightarrow B}$ whose action on a generic CP map $\mathcal{E}^{A}$
is:
\begin{equation}
\Theta^{A\rightarrow B}\left[\mathcal{E}^{A}\right]\left(\rho^{B_{0}}\right)=\mathrm{Tr}\left[\mathcal{E}^{A_{0}\rightarrow B_{0}}\left(u^{A_{0}}\right)Y^{B_{0}}\left(\rho^{B_{0}}\right)^{\mathrm{T}}Y^{B_{0}}\right]u^{B_{1}},\label{eq:example}
\end{equation}
where all systems are qubits, $u$ is the maximally mixed state, and
$Y$ is the Pauli $Y$ matrix ($\rho^{B_{0}}$ is a generic density
matrix, used to define the action of the CPTNI map $\Theta^{A\rightarrow B}\left[\mathcal{E}^{A}\right]$
on its input). Note that, if $\mathcal{E}^{A}$ is CPTP, one has indeed
$\mathrm{Tr}\left[\Theta^{A\rightarrow B}\left[\mathcal{E}^{A}\right]\left(\rho^{B_{0}}\right)\right]\leq1$,
because $\rho^{B_{0}}$ is a density matrix. This means that the supermap
$\Theta^{A\rightarrow B}$ is CPTNI-preserving (cf.\ Eq.~\eqref{eq:CPTNI}).
Full details are presented in appendix~\ref{sec:counterexample}.

\section{Completely CPTNI-preserving supermaps}

We find that the right condition to ensure that a CPTNI-preserving
supermap $\Theta^{A\rightarrow B}$ is physical is that it be \emph{completely}
CPTNI-preserving. This means that it is CPTNI-preserving even when
tensored with the identity supermap:
\begin{equation}
\mathrm{Tr}\left[\left(\mathbf{1}^{R}\otimes\Theta^{A\rightarrow B}\right)\left[\mathcal{N}^{RA}\right]\left(\rho^{R_{0}B_{0}}\right)\right]\leq1,\label{eq:c-CPTNI}
\end{equation}
where $\mathcal{N}^{RA}$ is a CPTP map, and $\rho^{R_{0}B_{0}}\in\mathfrak{D}\left(\mathcal{H}^{R_{0}B_{0}}\right)$.
The example in Eq.~\eqref{eq:example} highlights that, in general,
not all CPTNI-preserving supermaps are completely CPTNI-preserving.

For superchannels the situation is different: it is sufficient to
demand that they be  CPP and TP-preserving (TPP), without requiring
that they be TPP in a \emph{complete} sense \citep{Gour2018}. The
situation of generic supermaps is also different from linear maps
acting on quantum states. In the latter case, to have a physical CP
map, it is enough to require that it be TNI, without demanding it
in a complete sense. The ultimate reason for these different behaviours
is related to causality and no-signalling \citep{Chiribella-purification},
and it is fully examined in appendix~\ref{sec:OPT-interpretation-of}.

\section{The main result}

Following \citep{Hierarchy-combs,Gour2018}, we work in the Choi picture
for quantum maps and supermaps. A summary of useful facts is presented
in appendix~\ref{subsec:Useful-results-about}. Let $\mathbf{J}_{\Theta}^{AB}$
be the Choi matrix of a supermap $\Theta^{A\rightarrow B}$, and $J_{\mathcal{E}}^{A}$
the Choi matrix of a linear map $\mathcal{E}^{A}\in\mathfrak{L}^{A}$.
Then $\Theta^{A\rightarrow B}$ is a CPTNI-preserving supermap if
and only if $\mathbf{J}_{\Theta}^{AB}\geq0$ (since it is  CPP), and
it satisfies the additional condition deriving from Eq.~\eqref{eq:CPTNI}:
\begin{equation}
\mathrm{Tr}\left[\mathbf{J}_{\Theta}^{AB_{0}}\left(J_{\mathcal{N}}^{A}\otimes\rho^{B_{0}}\right)^{\mathrm{T}}\right]\leq1,\label{eq:choicond}
\end{equation}
for every CPTP map $\mathcal{N}^{A}\in\mathfrak{L}^{A}$, and every
$\rho^{B_{0}}\in\mathfrak{D}\left(\mathcal{H}^{B_{0}}\right)$ (see
appendix~\ref{subsec:Useful-results-about}). Notice the similarity
with the definition of CPTNI maps $\mathcal{E}$ in the Choi picture,
namely
\[
\mathrm{Tr}\left[J_{\mathcal{E}}^{A}\left(\left(\rho^{A_{0}}\right)^{\mathrm{T}}\otimes I^{A_{1}}\right)\right]\leq1,
\]
for every $\rho^{A_{0}}\in\mathfrak{D}\left(\mathcal{H}^{A_{0}}\right)$.

In a similar spirit, we can express the requirement of complete CPTNI
preservation in Eq.~\eqref{eq:c-CPTNI} in the Choi picture as $\mathbf{J}_{\Theta}^{AB}\geq0$
plus the remarkably simple additional constraint
\begin{equation}
\mathrm{Tr}\left[\mathbf{J}_{\Theta}^{AB_{0}}\left(M^{AB_{0}}\right)^{\mathrm{T}}\right]\leq1,\label{eq:completechoicond}
\end{equation}
for every positive semi-definite matrix $M^{AB_{0}}$ with marginal
$M^{A_{0}B_{0}}=I^{A_{0}}\otimes\rho^{B_{0}}$, for some $\rho^{B_{0}}\in\mathfrak{D}\left(\mathcal{H}^{B_{0}}\right)$.
The technical details are provided in appendix~\ref{subsec:Some-technical-derivations}.

It is not hard to check that all the matrices of the form $J_{\mathcal{N}}^{A}\otimes\rho^{B_{0}}$,
with $\mathcal{N}^{A}$ CPTP, are a strict subset of the matrices
$M^{AB_{0}}$, confirming that complete CPTNI preservation is at least
as strict a condition as CPTNI preservation. In fact, it is stricter,
as our counterexample in Eq.~\eqref{eq:example} shows: the supermap
in Eq.~\eqref{eq:example} is CPTNI-preserving but \emph{not completely}
CPTNI-preserving. Consequently, the set of completely CPTNI-preserving
supermaps is strictly contained in the set of CPTNI-preserving supermaps.
The situation is illustrated in Fig.~\ref{fig:Inclusions}.
\begin{figure}
\begin{centering}
\includegraphics[scale=0.9]{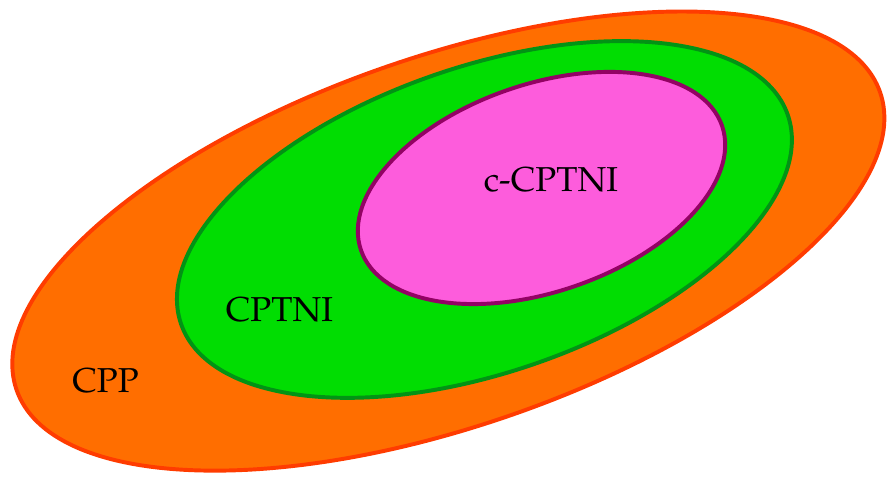}
\par\end{centering}
\caption{\label{fig:Inclusions}Inclusions between sets of supermaps. Here
c-CPTNI denotes completely CPTNI-preserving supermaps.}
\end{figure}

To obtain our main result, namely the characterization of which CPTNI-preserving
supermaps are physical, we consider a semi-definite program (SDP)
inspired by Eq.~\eqref{eq:completechoicond}:
\begin{eqnarray}
\textrm{Find} & \quad & \alpha=\max_{M}\mathrm{Tr}\left[\mathbf{J}_{\Theta}^{AB_{0}}\left(M^{AB_{0}}\right)^{\mathrm{T}}\right]\nonumber \\
\textrm{Subject to:} & \quad & M^{AB_{0}}\geq0\nonumber \\
 & \quad & M^{A_{0}B_{0}}=I^{A_{0}}\otimes\rho^{B_{0}}.\label{eq:completeSDPprimal}
\end{eqnarray}
If we consider the dual of the SDP~\eqref{eq:completeSDPprimal}
\begin{eqnarray*}
\textrm{Find} & \quad & \beta=\left|A_{0}\right|\min r\\
\textrm{Subject to:} & \quad & r\left|A_{0}\right|\mathbf{J}_{\Phi}^{A_{0}B_{0}}\otimes u^{A_{1}}\geq\mathbf{J}_{\Theta}^{AB_{0}}\\
 & \quad & \mathbf{J}_{\Phi}^{A_{0}B_{0}}\geq0\\
 & \quad & \mathbf{J}_{\Phi}^{A_{1}B_{0}}=I^{A_{1}B_{0}}\\
 & \quad & r\geq0\\
 & \quad & r\in\mathbb{R}\\
 & \quad & \Phi\textrm{ superchannel},
\end{eqnarray*}
we convert Eq.~\eqref{eq:completeSDPprimal} from an SDP having a
constraint on $M^{AB_{0}}$ into one having an explicit condition
on $\mathbf{J}_{\Theta}^{AB_{0}}$. This condition is exactly what
we need to derive our main result.
\begin{thm}
A CPTNI-preserving supermap can be completed to a superchannel if
and only if it is \emph{completely} CPTNI-preserving.
\end{thm}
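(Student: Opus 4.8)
The plan is to prove both directions separately, using the SDP duality set up just above the theorem statement. The easy direction is that a supermap arising as part of a super-measurement is completely CPTNI-preserving: if $\Theta^{A\rightarrow B}$ is one of the $\Theta_x^{A\rightarrow B}$ summing to a superchannel $\Phi^{A\rightarrow B}$, then for any CPTP map $\mathcal{N}^{RA}$ the map $(\mathbf{1}^R\otimes\Phi^{A\rightarrow B})[\mathcal{N}^{RA}]$ is CPTP, hence trace-preserving, while $(\mathbf{1}^R\otimes\Theta^{A\rightarrow B})[\mathcal{N}^{RA}]$ is one of finitely many CP summands of it; since the remaining summands are also CP and the total trace on any state $\rho^{R_0B_0}$ is exactly $1$, each summand contributes at most $1$, which is precisely Eq.~\eqref{eq:c-CPTNI}. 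Taking the Choi picture, this gives Eq.~\eqref{eq:completechoicond}.

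For the converse — the substantive direction — I would start from the assumption that $\Theta^{A\rightarrow B}$ is completely CPTNI-preserving, so that $\mathbf{J}_\Theta^{AB}\ge 0$ and Eq.~\eqref{eq:completechoicond} holds. The first step is to observe that Eq.~\eqref{eq:completechoicond} says exactly that the optimal value $\alpha$ of the primal SDP~\eqref{eq:completeSDPprimal} is at most $1$. By strong duality for SDPs (one should check a Slater-type strict feasibility condition, e.g. that $M^{AB_0}$ can be taken strictly positive with the right marginal, and that the dual is feasible — this is the one place where a small technical verification is needed), the dual optimal value $\beta$ also satisfies $\beta\le 1$. Reading off the dual, this yields a superchannel $\Phi^{A\rightarrow B}$ and a scalar $r\ge 0$ with $r\le 1/|A_0|$ such that
\begin{equation}
r\left|A_{0}\right|\mathbf{J}_{\Phi}^{A_{0}B_{0}}\otimes u^{A_{1}}\geq\mathbf{J}_{\Theta}^{AB_{0}},\label{eq:dualineqproof}
\end{equation}
and in particular $\mathbf{J}_\Phi^{A_0B_0}\otimes u^{A_1}\ge \mathbf{J}_\Theta^{AB_0}$. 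The idea is now to build a super-measurement with $\Theta$ as one of its elements by taking $\Phi$ as the completing superchannel: one sets $\Theta_1^{A\rightarrow B}:=\Theta^{A\rightarrow B}$ and needs to exhibit a complementary supermap $\Theta_2^{A\rightarrow B}$ with Choi matrix $\mathbf{J}_{\Theta_2}^{AB}:=\mathbf{J}_\Phi^{AB}-\mathbf{J}_\Theta^{AB}$ that is itself CPTNI-preserving, and in fact CPP.

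The heart of the argument is therefore checking that $\mathbf{J}_\Phi^{AB}-\mathbf{J}_\Theta^{AB}\ge 0$ and that it defines a legitimate (CPP, hence in particular CP) supermap; the superchannel normalization of $\Phi$ then automatically guarantees that $\Theta_1+\Theta_2=\Phi$ is a superchannel, completing the super-measurement. The positivity $\mathbf{J}_\Phi^{AB}\ge\mathbf{J}_\Theta^{AB}$ is what the dual inequality is engineered to deliver — but notice the dual constrains only the marginal $\mathbf{J}_\Theta^{AB_0}$ (traced over $B_1$), so one must leverage the specific structure of $\mathbf{J}_\Theta^{AB}$ as the Choi matrix of a supermap sending channels to maps with a fixed-dimensional output, together with the characterization of superchannel Choi matrices (the linear constraints on $\mathbf{J}_\Phi$ recalled in the appendix referenced in the excerpt), to lift the marginal inequality to the full matrices. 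I expect this lifting step — reconstructing the full operator inequality and the complete-positivity/CPP structure of $\mathbf{J}_{\Theta_2}$ from the dual certificate on the marginal — to be the main obstacle, and it is where the hypotheses on $\Phi$ being a genuine superchannel (not merely a positive matrix with the right marginals) do the essential work. Once $\Theta_2$ is verified to be CPP, the pair $\{\Theta_1,\Theta_2\}$ is the desired super-measurement, establishing that $\Theta$ can be completed to a superchannel.
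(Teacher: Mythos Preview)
Your outline is essentially the paper's own argument: necessity via the Choi picture and the superchannel normalization, sufficiency via the primal/dual SDP and strong duality. The one place where you hesitate --- lifting the marginal inequality $\mathbf{J}_\Phi^{AB_0}\ge\mathbf{J}_\Theta^{AB_0}$ to a full operator inequality $\mathbf{J}_\Phi^{AB}\ge\mathbf{J}_\Theta^{AB}$ --- is indeed the only step not made fully explicit in the paper either, but it is much easier than you suggest, and it does \emph{not} require any special structure of $\mathbf{J}_\Theta^{AB}$.

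The point is that the dual SDP only determines the marginal $\mathbf{J}_\Phi^{AB_0}$, so you are free to choose the full Choi matrix of the completing superchannel. Rather than fixing $\Phi$ first and then checking $\mathbf{J}_\Phi^{AB}-\mathbf{J}_\Theta^{AB}\ge0$, simply \emph{define} the complement directly: set
\[
\mathbf{J}_{\Theta_2}^{AB}:=\bigl(\mathbf{J}_\Phi^{AB_0}-\mathbf{J}_\Theta^{AB_0}\bigr)\otimes\sigma^{B_1}
\]
for any density matrix $\sigma^{B_1}$. This is positive semi-definite by the dual inequality, so $\Theta_2$ is CPP. Then $\mathbf{J}_\Theta^{AB}+\mathbf{J}_{\Theta_2}^{AB}$ is positive semi-definite (sum of two PSD matrices) and has marginal on $AB_0$ equal to $\mathbf{J}_\Phi^{AB_0}$, which already satisfies the superchannel marginal constraints $\mathbf{J}_\Phi^{AB_0}=\mathbf{J}_\Phi^{A_0B_0}\otimes u^{A_1}$ and $\mathbf{J}_\Phi^{A_1B_0}=I^{A_1B_0}$. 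Hence $\Theta+\Theta_2$ is a superchannel, and you are done. No ``lifting'' is needed: the freedom in the $B_1$ factor does all the work.
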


The full proof is presented in appendix~\ref{sec:Completion-of-Supermaps}.

\section{Realization of completely CPTNI-preserving supermaps}

Using the Choi picture, we can re-obtain a result of \citep[theorem 2]{Chiribella2008},
namely that every completely CPTNI-preserving supermap can be expressed
in terms of a CPTP pre-processing map and a CPTNI post-processing
map, as depicted in Fig.~\ref{fig:Completely}.
\begin{figure}
\begin{centering}
\includegraphics[width=0.5\columnwidth]{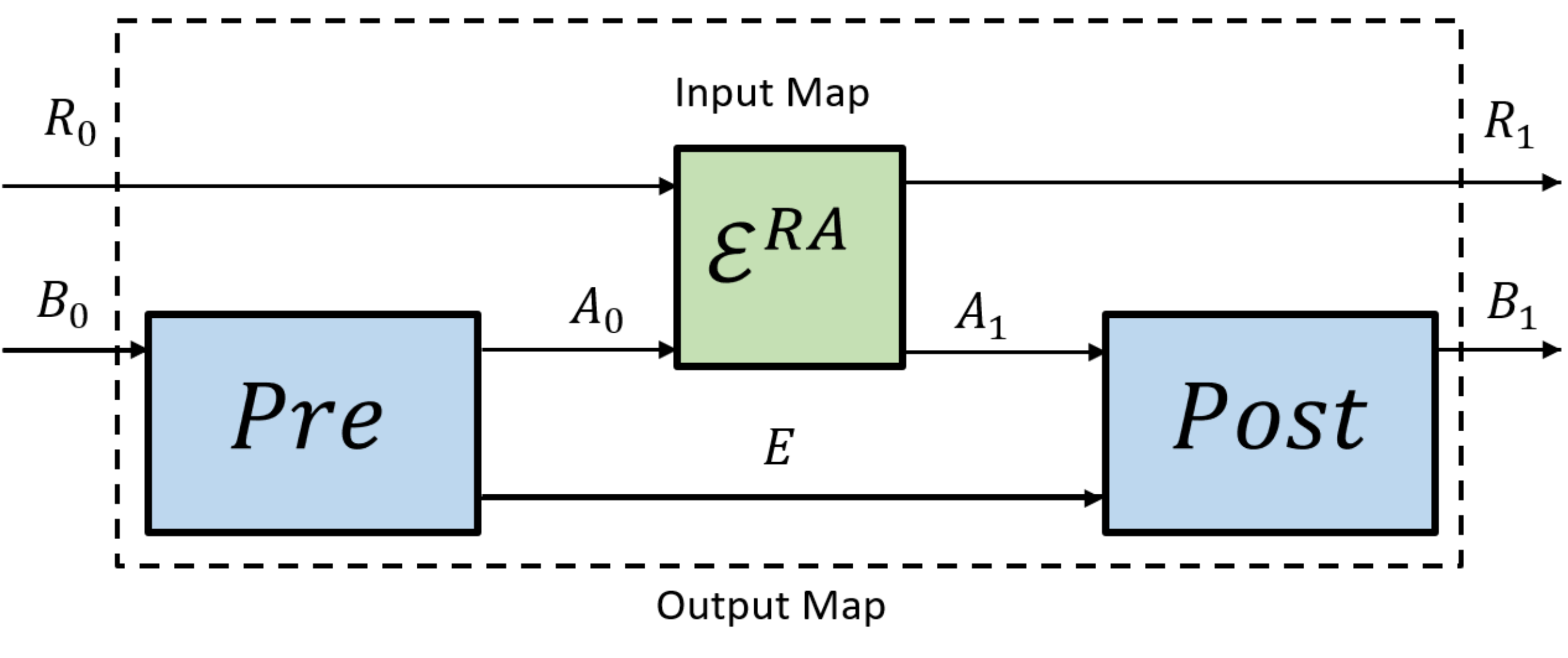}
\par\end{centering}
\caption{\label{fig:Completely}Representation of a completely CPTNI-preserving
supermap. Here, a bipartite input map $\mathcal{E}^{RA}$ is inserted
between a CPTP pre-processing map and a CPTNI post-processing map.
The output is a bipartite CPTNI map. Note the presence of the ancillary
system $E$, which acts as a ``memory'' between the pre-processing
and the post-processing. This realization of a supermap is called
a quantum 1-comb \citep{Circuit-architecture}.}
\end{figure}
 In formula, if $\Theta_{x}^{A\rightarrow B}$ is a completely CPTNI-preserving
supermap, associated with the outcome $x$ of a quantum super-instrument,
its Choi matrix $\mathbf{J}_{\Theta_{x}}^{AB}$ can be expressed in
the following form:
\begin{equation}
\mathbf{J}_{\Theta_{x}}^{AB}=\left(\mathcal{I}^{AB_{0}}\otimes\Gamma_{\mathrm{post}_{x}}^{\widetilde{A}_{1}E_{0}\rightarrow B_{1}}\right)\circ\left(\mathcal{I}^{A_{1}\widetilde{A}_{1}B_{0}}\otimes\Gamma_{\mathrm{pre}}^{\widetilde{B}_{0}\rightarrow A_{0}E_{0}}\right)\left(\phi_{+}^{B_{0}\widetilde{B}_{0}}\otimes\phi_{+}^{A_{1}\widetilde{A}_{1}}\right),\label{eq:measurementchoi}
\end{equation}
where $\mathcal{I}^{AB_{0}}$ is the identity channel on $AB_{0}$
and $\phi_{+}^{B_{0}\widetilde{B}_{0}}=\sum_{j,k}\left|jj\right\rangle \left\langle kk\right|^{B_{0}\widetilde{B}_{0}}$
is the unnormalized maximally entangled state of $B_{0}\widetilde{B}_{0}$.
Here $\Gamma_{\mathrm{pre}}^{\widetilde{B}_{0}\rightarrow A_{0}E_{0}}$
is the CPTP pre-processing map, and $\Gamma_{\mathrm{post}_{x}}^{\widetilde{A}_{1}E_{0}\rightarrow B_{1}}$
is the CPTNI post-processing map. Being that $\Theta_{x}^{A\rightarrow B}$
is part of a quantum super-instrument, we also have that $\sum_{x}\Gamma_{\mathrm{post}_{x}}^{\widetilde{A}_{1}E_{0}\rightarrow B_{1}}$
is a CPTP map. The proof of this result is reported in appendix~\ref{sec:Quantum-Super-measurements}.
Note that the pre-processing $\Gamma_{\mathrm{pre}}^{\widetilde{B}_{0}\rightarrow A_{0}E_{0}}$
is independent of $x$, therefore it can be chosen to be the same
for all the supermaps in the same super-instrument.

\section{The role of causality}

To summarize, for the first time we exactly pinned down the necessary
and sufficient conditions determining which supermaps can appear in
quantum super-instruments.  Specifically, we found that only completely
CPTNI-preserving supermaps can be implemented in a quantum super-instrument.
Additionally, we showed an explicit example of a supermap that is
CPTNI-preserving, but not completely CPTNI-preserving (Eq.~\eqref{eq:example}).
Viewing CPTNI preservation as a higher-order generalization of the
CPTNI condition for quantum maps (cf.\ Fig.~\ref{fig:CPTNI-CPTNI}),
we cannot fail to note the difference between the theory of quantum
supermaps---where CPTNI maps are regarded as states---and quantum
theory. Indeed, in quantum theory, all CPTNI maps $\mathcal{E}^{A}$
are also \emph{completely} CPTNI, the latter meaning:
\begin{equation}
\mathrm{Tr}\left[\left(\mathcal{I}^{R_{0}}\otimes\mathcal{E}^{A}\right)\left(\rho^{R_{0}A_{0}}\right)\right]\leq1,\label{eq:c-TNI}
\end{equation}
for every $\rho^{R_{0}A_{0}}\in\mathfrak{D}\left(\mathcal{H}^{R_{0}A_{0}}\right)$.
The ultimate reason for this difference is that the theory of quantum
supermaps does not satisfy the fundamental property of causality \citep{Chiribella-purification}.
\begin{ax}[Causality]
The probability of a transformation occurring in an experiment is
independent of the choice of experiments performed on its output.
\end{ax}

Loosely speaking, causality means that information cannot ``come
back from the future''. One of its consequences is that all bipartite
states are non-signalling. The existence of signalling bipartite channels
\citep{Beckman,Hoban} is a clear signature that causality does not
hold in the theory of quantum supermaps. Moreover, we can get an intuitive
grasp of the failure of causality from the realization of physical
supermaps expressed in Eq.~\eqref{eq:measurementchoi} and in Fig.~\ref{fig:Completely}.
Consider a super-measurement $\left\{ \Theta_{x}^{A\rightarrow B}\right\} $
performed on a CPTNI map $\mathcal{E}^{A}$, which means that the
measurement occurs \emph{after} $\mathcal{E}^{A}$ is prepared in
a laboratory or in a quantum circuit. The presence of pre-processing
in the realization of every $\Theta_{x}^{A\rightarrow B}$ implies
that $\Theta_{x}^{A\rightarrow B}$ acts on the \emph{input} of $\mathcal{E}^{A}$
too, meaning, in some sense, that part of $\Theta_{x}^{A\rightarrow B}$
also acts \emph{before} $\mathcal{E}^{A}$. Somehow in this situation
there is \emph{not} a well-defined notion of what comes ``before''
and ``after'', so causality cannot hold; for it would select a clear
``arrow of time'' in information processing.

More precisely, causality can be proved to be equivalent to the existence
of a unique way to discard a physical system deterministically \citep{Chiribella-purification,Coecke-no-signalling,CPT,Diagrammatic},
corresponding to the deterministic POVM $\left\{ I\right\} $ in quantum
theory, viz.\ taking the (partial) trace. However, if we want to
discard a quantum channel deterministically, we need to resort to
deterministic process POVMs \citep{ProcessPOVM}, which are highly
non-unique. More specifically, all deterministic process POVMs can
be realized as the quantum circuit fragments\[
\begin{aligned}\Qcircuit @C=1em @R=.7em @!R {&\prepareC{\rho} & \qw \poloFantasmaCn{\rA_0} &\qw & & &\qw \poloFantasmaCn{\rA_1} & \measureD{\Tr}}\end{aligned}~,
\]for any choice of quantum state $\rho^{A_{0}}\in\mathfrak{D}\left(\mathcal{H}^{A_{0}}\right)$,
where the channel is put in the slot (see appendix~\ref{subsec:The-general-framework}).
This tells us that the theory of quantum supermaps is non-causal.
In particular, for bipartite channels there are some \emph{entangled}
deterministic process POVMs:\[
\begin{aligned}\Qcircuit @C=1em @R=.7em @!R {&\multiprepareC{1}{\rho} & \qw \poloFantasmaCn{\rA_0} &\qw & & &\qw \poloFantasmaCn{\rA_1} & \measureD{\Tr} \\ &\pureghost{\rho} & \qw \poloFantasmaCn{\rB_0} &\qw & & &\qw \poloFantasmaCn{\rB_1} & \measureD{\Tr}}\end{aligned}~,
\]for $\rho^{A_{0}B_{0}}\in\mathfrak{D}\left(\mathcal{H}^{A_{0}B_{0}}\right)$
entangled. We can collectively call the deterministic ways to discard
a physical object (whether a quantum state or a quantum channel) as
`deterministic effects', which will be denoted as $u$ in circuit
diagrams. It can be shown that \emph{no} entangled deterministic effects
exist in causal theories, where they are in a tensor product form
\citep{Chiribella-purification}. For example, in quantum theory the
deterministic POVM on the composite system $\mathcal{H}^{A_{0}}\otimes\mathcal{H}^{B_{0}}$
is $I^{A_{0}B_{0}}=I^{A_{0}}\otimes I^{B_{0}}$.

Using operational probabilistic theories \citep{Chiribella-purification,Chiribella-informational,hardy2011,Hardy-informational-2},
based on the notion of circuits and on the composition of physical
transformations occurring in experiments (see appendix~\ref{subsec:The-general-framework}),
we can achieve a unified approach to establishing which operations
are physical. These will be the operations on the objects of interest
(i.e.\ states or channels) that can arise in a measurement or in
a generic experiment done on them. In particular, we obtain the following
necessary and sufficient condition valid in any unrestricted theory
(e.g.\ when there are no superselection rules), whether causal or
not.
\begin{thm}
$\mathcal{A}$ is a physical operation if and only if it is `completely
positive' and\begin{equation}\label{eq:OPT physical}
\begin{aligned}\Qcircuit @C=1em @R=.7em @!R { & \multiprepareC{1}{\rho} & \qw \poloFantasmaCn{\rA} & \gate{\cA} & \qw \poloFantasmaCn{\rB} & \multimeasureD{1}{u} \\ & \pureghost{\rho} & \qw \poloFantasmaCn{\rS} & \qw &\qw &\ghost{u} }\end{aligned} ~\leq 1,
\end{equation}for every system $\mathrm{S}$, every `deterministic state' $\rho$,
and every deterministic effect $u$.
\end{thm}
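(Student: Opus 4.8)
\emph{Overall plan.} The plan is to prove the two implications separately: ``physical $\Rightarrow$ `completely positive' and Eq.~\eqref{eq:OPT physical}'' is immediate from the definition of a test, while the converse is the operational transcription of the argument behind the completion theorem stated above, with the Choi calculus replaced by the state and transformation cones of the theory and SDP duality replaced by conic-programming duality.

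\emph{Necessity.} For the forward direction I would argue as follows. If $\cA$ is physical it sits in some test $\left\{\cA_x\right\}_{x\in X}$, so $\cC:=\sum_{x}\cA_x$ is a deterministic transformation and $\cC-\cA=\sum_{x'\ne x}\cA_{x'}$ is again a legitimate transformation of the theory; in particular $\cA$, being a transformation, maps states to states in the presence of any ancilla, which is what `completely positive' means here. Now fix a system $\rS$, a deterministic state $\rho$ of $\rA\rS$, and a deterministic effect $u$ of $\rB\rS$. Since $\left(\left(\cC-\cA\right)\otimes\cI^{\rS}\right)\left(\rho\right)$ is a state of $\rB\rS$, the effect $u$ is non-negative on it, so the number in Eq.~\eqref{eq:OPT physical} is at most $u\!\left(\left(\cC\otimes\cI^{\rS}\right)\left(\rho\right)\right)$; and because $\cC$ is deterministic, $\left(\cC\otimes\cI^{\rS}\right)\left(\rho\right)$ is a deterministic state of $\rB\rS$, which paired with the deterministic effect $u$ gives exactly $1$. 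Hence Eq.~\eqref{eq:OPT physical} holds.

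\emph{Sufficiency.} For the converse, assume $\cA$ is `completely positive' and obeys Eq.~\eqref{eq:OPT physical}; it will suffice to produce a transformation $\cA'$ with $\cA+\cA'$ deterministic, for then $\left\{\cA,\cA'\right\}$ is a test containing $\cA$. I would work in the finite-dimensional real space of transformations $\rA\to\rB$ and let $\mathsf{CP}$ be the closed convex cone of ``completely positive'' such maps and $\mathsf{Ch}$ the (compact, and nonempty since every system admits a deterministic state and a deterministic effect) convex set of deterministic ones, and then argue by contradiction: if $\left(\cA+\mathsf{CP}\right)\cap\mathsf{Ch}=\emptyset$, separating the closed set $\cA+\mathsf{CP}$ from the compact set $\mathsf{Ch}$ produces a functional $\varphi$ on transformations that is $<\gamma$ on $\cA+\mathsf{CP}$ and $\ge\gamma$ on $\mathsf{Ch}$; boundedness above on the translated cone forces $\varphi\le 0$ on $\mathsf{CP}$, so $\psi:=-\varphi$ is non-negative on $\mathsf{CP}$ and obeys $\psi(\cA)>\psi(\cD)$ for every $\cD\in\mathsf{Ch}$. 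The next step is to invoke the structure of an unrestricted theory, in which $\mathsf{CP}$ is exactly the set of maps sending states to states under tensoring with any ancilla, so that its dual cone is generated by the functionals $\mathcal{T}\mapsto e\!\left(\left(\mathcal{T}\otimes\cI^{\rS}\right)\left(\sigma\right)\right)$ with $\sigma$ a state of $\rA\rS$ and $e$ an effect of $\rB\rS$; reducing $\psi$ to a finite combination of these by Carath\'eodory and bundling the terms into a single functional of the same kind by carrying a classical flag on the ancilla, then rescaling, one obtains $\psi(\mathcal{T})=u\!\left(\left(\mathcal{T}\otimes\cI^{\rS}\right)\left(\rho\right)\right)$ with $\rho$ a \emph{deterministic} state and $u$ a \emph{deterministic} effect. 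At that point $\psi(\cD)=1$ for every deterministic $\cD$ (as in the necessity part), whereas $\psi(\cA)$ is exactly the left-hand side of Eq.~\eqref{eq:OPT physical}, so $\psi(\cA)>1$ contradicts the hypothesis; hence $\left(\cA+\mathsf{CP}\right)\cap\mathsf{Ch}\ne\emptyset$ and $\cA$ is physical.

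\emph{Main obstacle.} The hard part will be precisely this last step---turning a generic non-negative functional on the transformation cone into the canonical ``prepare a deterministic state, close with a deterministic effect'' form of Eq.~\eqref{eq:OPT physical}. Making it rigorous forces one to use that the theory is unrestricted in three ways: the transformation cone must coincide with the maps preserving states under tensoring with an ancilla (so its dual is generated by the insert-a-state/close-with-an-effect functionals); ancillas of bounded dimension must already suffice (so that generating set is compact, the cone is closed, and conic duality is strong); and classical flags and conditional preparations must be available (so that a convex mixture of such functionals is realized by a single deterministic state and effect). In quantum theory all three are built into the Choi calculus and the SDP duality used for the completion theorem above; the general case needs the operational-probabilistic framework of appendix~\ref{sec:OPT-interpretation-of}.
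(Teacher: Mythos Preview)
Your necessity argument is correct and is a minor rephrasing of the paper's. The paper (in the discussion preceding Theorem~\ref{thm:necessary condition physical}) completes the state and the effect to deterministic ones and uses complete positivity to bound each of four non-negative summands by their sum, which equals $1$; you instead complete $\cA$ to a channel $\cC$ via its test and bound $\left(u\middle|\cA\otimes\cI_{\rS}\middle|\rho\right)$ by $\left(u\middle|\cC\otimes\cI_{\rS}\middle|\rho\right)=1$. These are equivalent arguments.

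For sufficiency the situation is genuinely different. The paper does \emph{not} prove sufficiency at the abstract OPT level: in Appendix~\ref{sec:OPT-interpretation-of} only the necessity direction is stated and proved as a theorem (Theorem~\ref{thm:necessary condition physical}), while sufficiency is handled by a remark that in an \emph{unrestricted} theory the two conditions are sufficient essentially by definition of ``unrestricted''. The only place where a non-trivial completion is actually carried out is the concrete instance of quantum supermaps, via the SDP strong-duality argument of Appendix~\ref{sec:Completion-of-Supermaps}, which directly exhibits a superchannel $\Phi$ with $\mathbf{J}_{\Phi}^{AB_0}\ge\mathbf{J}_{\Theta}^{AB_0}$ rather than going through a separation argument.

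Your conic-separation sketch therefore attempts more than the paper proves at the OPT level. The outline is sound, and you correctly identify the crux: an arbitrary element of the dual of the completely-positive cone need not be realisable as a single functional $\mathcal{T}\mapsto\left(u\middle|\mathcal{T}\otimes\cI_{\rS}\middle|\rho\right)$ with $\rho$ and $u$ \emph{deterministic}. The three ``unrestricted'' ingredients you list are exactly what is needed, and in the quantum-supermap instance this reduction is precisely what SDP strong duality supplies (the dual constraint ``$M^{A_0B_0}=I^{A_0}\otimes\rho^{B_0}$'' is turned into ``$\mathbf{J}_{\Theta}^{AB_0}$ is dominated by a superchannel marginal''). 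So the obstacle you flag is real; the paper sidesteps it by taking ``unrestricted'' as a standing hypothesis rather than by proving a general conic completion theorem.
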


Here, by `deterministic state' we mean the deterministic object
we are interested in; it can be a quantum state or a quantum channel.
$\mathcal{A}$ represents an operation that is performed on the objects
of interest: a quantum map in quantum theory, or a quantum supermap
if our objects of interest are quantum channels. `Completely positive'
in this setting means that the operation preserves the convex cone
generated by `states' in a complete sense. For quantum theory, it
means that $\mathcal{A}$ is CP; for quantum supermaps that $\mathcal{A}$
is a CPP supermap. Now, let us explain the meaning of Eq.~\eqref{eq:OPT physical}.
In quantum theory, $u$ is nothing but the deterministic POVM, i.e.\ tracing
out both systems, which has a tensor product form due to causality.
We obtain the simple TNI condition $\mathrm{Tr}\left[\mathcal{A}\left(\rho\right)\right]\leq1$
for all quantum states $\rho$, and not Eq.~\eqref{eq:c-TNI}. If
our objects of interest are, instead, quantum channels, this time
$u$ is a deterministic bipartite process POVM. Now, the existence
of entangled bipartite process POVMs, a consequence of the \emph{failure
of causality}, does \emph{not} allow us to reduce the condition of
Eq.~\eqref{eq:OPT physical} to a simple condition involving only
one system. Therefore, the conditions of `complete positivity' and
Eq.~\eqref{eq:OPT physical} imply that $\mathcal{A}$ must be CPP
and \emph{completely} CPTNI preserving (Eq.~\eqref{eq:c-CPTNI}),
and we cannot get rid of the `complete sense' required in this condition.

A rigorous proof and statement of this result, along with more details
on the implications of causality for the theory of quantum supermaps,
is presented in appendix~\ref{subsec:physical maps}.

\section{Conclusion}

The results we obtained in this article improve our understanding
of the operational viewpoint in quantum theory, and more generally
in physics. In particular, we showed that the correct conditions to
impose on a linear transformation to guarantee its physicality, be
it a quantum map or a quantum supermap, must always be formulated
in a \emph{complete sense}. This means that they must always involve
the tensor product with the identity transformation. Thus, for quantum
supermaps we have the CPP condition and the complete CPTNI preservation
condition. For quantum maps we have the CP condition and the complete
TNI condition of Eq.~\eqref{eq:c-TNI}. Since quantum theory satisfies
causality, Eq.~\eqref{eq:c-TNI} becomes \emph{equivalent} to the
TNI condition we impose ordinarily on quantum maps. However, the fundamental
requirement is still the one expressed by Eq.~\eqref{eq:c-TNI}.
In other words, the existence of signalling bipartite states in the
theory of quantum supermaps is the reason for the gap between CPTNI
preservation and complete CPTNI preservation; in the very same way
as the existence of entangled states in quantum theory creates a gap
between positive and completely positive maps, which are instead the
same notion in classical physics.

The fact that conditions expressed in a complete sense are the right
thing to demand is apparent if one adopts the circuit framework in
which operational probabilistic theories are formulated. Our results
confirm and strengthen the validity of this approach to the study
of the fundamental operational properties of physical theories.

\medskip{}

\paragraph{Contributions}

JB, MG, and CMS did most of the technical work, and contributed equally
to the article. CMS wrote most of the article, and had the idea of
linking the main result with the principle of causality. GC and GG
had the main idea of the article, and provided guidance and direction
to the research.

\begin{acknowledgments}
The authors acknowledge support from the Natural Sciences and Engineering
Research Council of Canada (NSERC). CMS acknowledges support from
the Pacific Institute for the Mathematical Sciences (PIMS) and from
a Faculty of Science Grand Challenge award at the University of Calgary.
GC is supported by the National Natural Science Foundation of China
through grant 11675136, the Foundational Questions Institute through
grant FQXi-RFP3-1325, the Hong Kong Research Grant Council through
grant 17300918, and the John Templeton Foundation through grant 60609,
Quantum Causal Structures. The opinions expressed in this publication
are those of the authors and do not necessarily reflect the views
of the John Templeton Foundation. 
\end{acknowledgments}

\bibliographystyle{apsrev4-1}
\bibliography{bib}

\appendix

\section{General facts about quantum maps and supermaps\label{sec:General-facts-about}}

Quantum maps describe the evolution of quantum systems, in both the
deterministic and the probabilistic case (e.g.\ when a measurement
is performed). To be consistent with the interpretation of mixed states
as probabilistic ensembles, a quantum map $\mathcal{E}^{A}$ must
be linear, $\mathcal{E}^{A}\in\mathfrak{L}^{A}$ . This is not enough,
because a quantum map must send quantum states to quantum states even
when applied only to half of a bipartite state. For this reason we
first demand that it be \emph{completely positive} (CP): for every
$\rho^{R_{0}A_{0}}\in\mathfrak{D}\left(\mathcal{H}^{R_{0}A_{0}}\right)$
it must be
\[
\left(\mathcal{I}^{R_{0}}\otimes\mathcal{E}^{A}\right)\left(\rho^{R_{0}A_{0}}\right)\geq0,
\]
where $\mathcal{I}^{R_{0}}$ is the identity map on system $R_{0}$.
This means that $\mathcal{E}^{A}$ sends positive semi-definite operators
to positive semi-definite operators even when tensored with the identity.
We also require that a map $\mathcal{E}^{A}$ be \emph{trace non-increasing}
(TNI): 
\[
\mathrm{Tr}\left[\mathcal{E}^{A}\left(\rho^{A_{0}}\right)\right]\leq1,
\]
for every $\rho^{A_{0}}\in\mathfrak{D}\left(\mathcal{H}^{A_{0}}\right)$.
In particular, if the trace is preserved, that is $\mathrm{Tr}\left[\mathcal{E}^{A}\left(\rho^{A_{0}}\right)\right]=1$,
for every $\rho^{A_{0}}\in\mathfrak{D}\left(\mathcal{H}^{A_{0}}\right)$,
we say that the map is \emph{trace-preserving} (TP). The allowed quantum
maps are those that are both CP and TNI (CPTNI). CPTP maps are also
called \emph{quantum channels}, and represent the most general deterministic
evolutions a quantum system can undergo. CPTNI maps that are not CPTP
represent non-deterministic transformations. This is what happens
in a quantum measurement, which can be seen as a collection of CPTNI
maps $\left\{ \mathcal{E}_{x}^{A}\right\} $, indexed by the outcomes
$x$ of that measurement, such that $\sum_{x}\mathcal{E}_{x}^{A}$
is a CPTP map. If we know the outcome $x$ of the measurement, then
we know that the system evolved under the CPTNI map $\mathcal{E}_{x}^{A}$.
We can therefore construct a \emph{quantum instrument}
\begin{equation}
\mathcal{E}^{A_{0}\rightarrow X_{1}A_{1}}=\sum_{x}\left|x\right\rangle \left\langle x\right|^{X_{1}}\otimes\mathcal{E}_{x}^{A},\label{eq:quantum instrument}
\end{equation}
where $\left\{ \left|x\right\rangle ^{X_{1}}\right\} $ is an orthonormal
basis of system $X_{1}$. $\mathcal{E}^{A_{0}\rightarrow X_{1}A_{1}}$
is a quantum channel with classical-quantum output. Here $X_{1}$
is the classical system, recording the measurement outcome. As such,
it represents the meter read by the experimenter performing the quantum
measurement $\left\{ \mathcal{E}_{x}^{A}\right\} $. 

These notions can be easily generalized to quantum supermaps \citep{Chiribella2008,Hierarchy-combs,Perinotti1},
namely to transformations sending quantum maps to quantum maps. Again,
these are linear maps, and an easy translation of the requirements
of CP and TNI leads to the requirement of CPP (Eq.~\eqref{eq:CPP}
in the main article) \citep{Chiribella2008,Gour2018} and TNI preservation.
Specifically, a map is CPTNI-preserving if it is CPP, and sends CPTNI
maps to CPTNI maps:
\begin{equation}
\mathrm{Tr}\left[\Theta^{A\rightarrow B}\left[\mathcal{E}^{A}\right]\left(\rho^{B_{0}}\right)\right]\leq1,\label{eq:CPTNI true}
\end{equation}
for any CPTNI map $\mathcal{E}^{A}$ and any $\rho^{B_{0}}\in\mathfrak{D}\left(\mathcal{H}^{B_{0}}\right)$.
In fact, if $\Theta^{A\rightarrow B}$ is CPP, it is enough to require
that inequality~\eqref{eq:CPTNI true} be satisfied by quantum channels
$\mathcal{E}^{A}$, namely by CPTP maps.

To see it, let $\mathcal{E}^{A}$ be a CPTNI map. We can always find
another CPTNI map $\mathcal{E}'^{A}$ such that $\mathcal{E}^{A}+\mathcal{E}'^{A}$
is CPTP. Now assume that $\Theta^{A\rightarrow B}$ is CPP, and sends
CPTP maps to CPTNI maps. Then, for every $\rho^{B_{0}}\in\mathcal{D}\left(\mathcal{H}^{B_{0}}\right)$,
\[
1\geq\mathrm{Tr}\left[\Theta^{A\rightarrow B}\left[\mathcal{E}^{A}+\mathcal{E}'^{A}\right]\left(\rho^{B_{0}}\right)\right]=\mathrm{Tr}\left[\Theta^{A\rightarrow B}\left[\mathcal{E}^{A}\right]\left(\rho^{B_{0}}\right)\right]+\mathrm{Tr}\left[\Theta^{A\rightarrow B}\left[\mathcal{E}'^{A}\right]\left(\rho^{B_{0}}\right)\right].
\]
Since $\Theta^{A\rightarrow B}$ is CPP, then $\mathrm{Tr}\left[\Theta^{A\rightarrow B}\left[\mathcal{E}^{A}\right]\left(\rho^{B_{0}}\right)\right]\geq0$
and $\mathrm{Tr}\left[\Theta^{A\rightarrow B}\left[\mathcal{E}'^{A}\right]\left(\rho^{B_{0}}\right)\right]\geq0$,
therefore we conclude that it must be
\[
\mathrm{Tr}\left[\Theta^{A\rightarrow B}\left[\mathcal{E}^{A}\right]\left(\rho^{B_{0}}\right)\right]\leq1,
\]
which means that $\Theta^{A\rightarrow B}$ satisfies Eq.~\eqref{eq:CPTNI true}.

A CPTNI-preserving supermap $\Theta^{A\rightarrow B}$ is called \emph{superchannel}
if it sends CPTP maps to CPTP maps \citep{Gour2018}. The original
definition in \citep{Switch} required that it should send quantum
channels to quantum channels in a complete sense, i.e.\ even when
tensored with the identity supermap. In other words, 
\[
\mathrm{Tr}\left[\left(\mathbf{1}^{R}\otimes\Theta^{A\rightarrow B}\right)\left[\mathcal{N}^{RA}\right]\left(\rho^{R_{0}B_{0}}\right)\right]=1,
\]
for any CPTP map $\mathcal{N}^{RA}$ and any $\rho^{R_{0}B_{0}}\in\mathfrak{D}\left(\mathcal{H}^{R_{0}B_{0}}\right)$,
where $\mathbf{1}^{R}$ is the identity supermap on $R$. Actually,
in \citep[theorem 1]{Gour2018}, using the Choi picture, it was proved
that we need not consider this requirement in a complete sense: a
CPTNI-preserving supermap $\Theta^{A\rightarrow B}$ is a superchannel
if and only if
\begin{equation}
\mathrm{Tr}\left[\Theta^{A\rightarrow B}\left[\mathcal{N}^{A}\right]\left(\rho^{B_{0}}\right)\right]=1,\label{eq:superchannel trace}
\end{equation}
for any CPTP map $\mathcal{N}^{A}$ and any $\rho^{B_{0}}\in\mathfrak{D}\left(\mathcal{H}^{B_{0}}\right)$.
In Appendix~\ref{subsec:physical maps} we will prove this result
in an alternative way, without using the Choi isomorphism.

Superchannels are intimately related to channels: it was proved that
all superchannels can be represented in terms of a pre- and a post-processing
CPTP map \citep{Chiribella2008,Gour2018}, as depicted in Fig.~\ref{fig:superchannel}.
\begin{figure}
\begin{centering}
\includegraphics[scale=0.5]{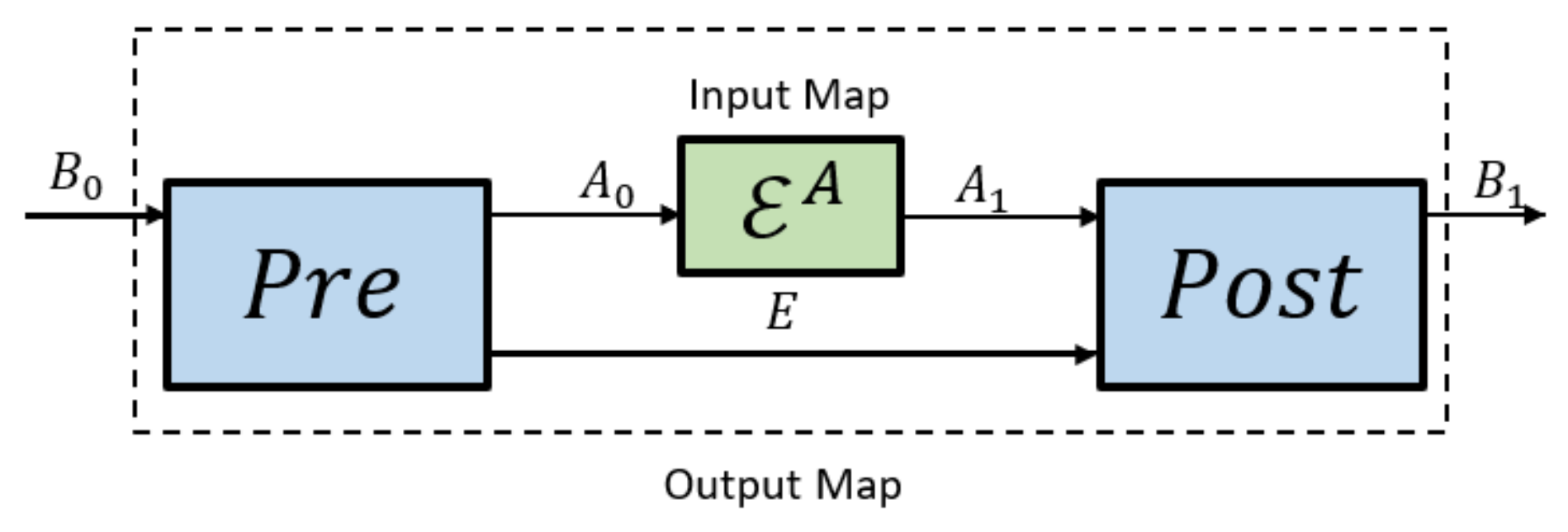}
\par\end{centering}
\caption{\label{fig:superchannel}Realization of a superchannel. Here an input
quantum channel is inserted between a CPTP pre-processing map and
a CPTP post-processing map. The output is another quantum channel.
The ancillary system $E$ plays the role of a quantum memory between
the pre- and the post-processing.}
\end{figure}
 Such a representation is called a quantum 1-comb \citep{Chiribella2008}.

Superchannels play an important role because they represent all physical
ways a quantum channel can evolve in an open system, and provide a
framework for measurements on quantum operations, called \emph{super-measurements}.
These are described by a set $\left\{ \Theta_{x}^{A\rightarrow B}\right\} $
of CPTNI-preserving supermaps such that $\sum_{x}\Theta_{x}^{A\rightarrow B}$
is a superchannel. Then we can construct a \emph{quantum super-instrument}
as the generalization of the quantum notion (see Eq.~\eqref{eq:super-instrument}
in the main article):
\[
\Upsilon^{A\rightarrow X_{1}B}\left[\mathcal{E}^{A}\right]=\sum_{x}\left|x\right\rangle \left\langle x\right|^{X_{1}}\otimes\Theta_{x}^{A\rightarrow B}\left[\mathcal{E}^{A}\right],
\]
where system $X_{1}$ again represents the classical meter, as in
Eq.~\eqref{eq:quantum instrument}, and $\mathcal{E}^{A}$ is any
CP map. The main result of this article is that, unlike CPTNI quantum
maps, \emph{not all} CPTNI-preserving supermaps can be part of a quantum
super-instrument. In Appendix~\ref{sec:OPT-interpretation-of} we
link this fact to the failure of causality \citep{Chiribella-purification}
in the theory of quantum supermaps.

\section{The Choi picture for quantum maps and supermaps}

In this appendix we collect some results about the Choi representation
of quantum maps and supermaps. Specifically, Appendix~\ref{subsec:Useful-results-about}
provides the general background information and some basic results
about the Choi isomorphism. Appendix~\ref{subsec:Some-technical-derivations}
instead focuses on the derivations related to complete CPTNI preservation
in the Choi form, and in particular on obtaining Eq.~\eqref{eq:completechoicond}
in the main article.

\subsection{Choi matrices of quantum maps and supermaps\label{subsec:Useful-results-about}}

The first ingredient to define the Choi isomorphism is to consider
the super-normalized maximally entangled state $\left|\phi_{+}\right\rangle ^{A_{0}\widetilde{A}_{0}}=\sum_{j=1}^{\left|A_{0}\right|}\left|j\right\rangle ^{A_{0}}\left|j\right\rangle ^{\widetilde{A}_{0}}$,
where $\left\{ \left|j\right\rangle ^{A_{0}}\right\} $ is a \emph{fixed}
orthonormal basis of $\mathcal{H}^{A_{0}}$ (and therefore of $\mathcal{H}^{\widetilde{A}_{0}}$
too, since $\widetilde{A}_{0}$ is just another copy of $A_{0}$).
The Choi matrix of a linear map $\mathcal{E}^{A}\in\mathfrak{L}^{A}$
is defined as
\[
J_{\mathcal{E}}^{A}:=\left(\mathcal{I}^{A_{0}}\otimes\mathcal{E}^{\widetilde{A}_{0}\rightarrow A_{1}}\right)\left(\phi_{+}^{A_{0}\widetilde{A}_{0}}\right),
\]
where $\phi_{+}^{A_{0}\widetilde{A}_{0}}:=\left|\phi_{+}\right\rangle \left\langle \phi_{+}\right|^{A_{0}\widetilde{A}_{0}}$,
and $\mathcal{I}^{A_{0}}$ is the identity channel. Again, since $\widetilde{A}_{0}$
is just another copy of $A_{0}$, the linear map $\mathcal{E}^{\widetilde{A}_{0}\rightarrow A_{1}}$
is well defined.

In particular, $\mathcal{E}^{A}$ is CP if and only if $J_{\mathcal{E}}^{A}\geq0$.
$\mathcal{E}^{A}$ is CPTP if and only if in addition one has $J_{\mathcal{E}}^{A_{0}}=I^{A_{0}}$.
Instead, $\mathcal{E}^{A}$ is CPTNI if and only if, besides $J_{\mathcal{E}}^{A}\geq0$,
one has $J_{\mathcal{E}}^{A_{0}}\leq I^{A_{0}}$. The Choi matrix
$J_{\mathcal{E}}^{A}$ encodes all the information about $\mathcal{E}^{A}$
because one can reconstruct the action of $\mathcal{E}^{A}$ on quantum
states from its Choi matrix:
\begin{equation}
\mathcal{E}^{A}\left(\rho^{A_{0}}\right)=\mathrm{Tr}_{A_{0}}\left[J_{\mathcal{E}}^{A}\left(\left(\rho^{A_{0}}\right)^{\mathrm{T}}\otimes I^{A_{1}}\right)\right],\label{eq:Choi on states}
\end{equation}
for every $\rho^{A_{0}}\in\mathfrak{D}\left(\mathcal{H}^{A_{0}}\right)$.

To define the Choi matrix of a supermap $\Theta^{A\rightarrow B}$,
we follow the approach presented in \citep{Gour2018}. Let us consider
the following basis of the space $\mathfrak{L}^{A}$:
\[
\mathcal{E}_{jklm}^{A}\left(\rho^{A_{0}}\right)=\left\langle j\middle|\rho\middle|k\right\rangle ^{A_{0}}\left|l\right\rangle \left\langle m\right|^{A_{1}},
\]
for $j,k\in\left\{ 1,\ldots,\left|A_{0}\right|\right\} $ and $l,m\in\left\{ 1,\ldots,\left|A_{1}\right|\right\} $.
The Choi matrix of the supermap $\Theta^{A\rightarrow B}$ can be
defined as
\[
\mathbf{J}_{\Theta}^{AB}:=\sum_{j,k,l,m}J_{\mathcal{E}_{jklm}}^{A}\otimes J_{\Theta\left[\mathcal{E}_{jklm}\right]}^{B}.
\]
Again, $\mathbf{J}_{\Theta}^{AB}$ encodes all the information about
$\Theta^{A\rightarrow B}$. For instance, $\Theta^{A\rightarrow B}$
is CPP if and only if $\mathbf{J}_{\Theta}^{AB}\geq0$. Moreover,
we can express the action of a supermap on a quantum map $\mathcal{E}^{A}$
using their Choi matrices: if $\mathcal{F}^{B}=\Theta^{A\rightarrow B}\left[\mathcal{E}^{A}\right]$,
we have \citep{Gour2018}:
\begin{equation}
J_{\mathcal{F}}^{B}=\mathrm{Tr}_{A}\left[\mathbf{J}_{\Theta}^{AB}\left(\left(J_{\mathcal{E}}^{A}\right)^{\mathrm{T}}\otimes I^{B}\right)\right].\label{eq:Choi on maps}
\end{equation}

A full characterization of superchannels from their Choi matrices
was given in \citep{Gour2018}: $\Theta^{A\rightarrow B}$ is a superchannel
if and only if $\mathbf{J}_{\Theta}^{AB}\ge0$, and one has $\mathbf{J}_{\Theta}^{AB_{0}}=\mathbf{J}_{\Theta}^{A_{0}B_{0}}\otimes u^{A_{1}}$
and $\mathbf{J}_{\Theta}^{A_{1}B_{0}}=I^{A_{1}B_{0}}.$ Here $u^{A_{1}}=\frac{1}{\left|A_{1}\right|}I^{A_{1}}$
is the maximally mixed state. Combining Eqs.~\eqref{eq:Choi on states}
and \eqref{eq:Choi on maps} for a CPTP map $\mathcal{N}^{A}$, we
have
\[
\Theta^{A\rightarrow B}\left[\mathcal{N}^{A}\right]\left(\rho^{B_{0}}\right)=\mathrm{Tr}_{AB_{0}}\left[\mathbf{J}_{\Theta}^{AB}\left(\left(J_{\mathcal{N}}^{A}\otimes\rho^{B_{0}}\right)^{\mathrm{T}}\otimes I^{B_{1}}\right)\right];
\]
therefore,
\begin{equation}
\mathrm{Tr}\left[\Theta^{A\rightarrow B}\left[\mathcal{N}^{A}\right]\left(\rho^{B_{0}}\right)\right]=\mathrm{Tr}_{AB_{0}}\left\{ \mathrm{Tr}_{B_{1}}\left[\mathbf{J}_{\Theta}^{AB}\left(\left(J_{\mathcal{N}}^{A}\otimes\rho^{B_{0}}\right)^{\mathrm{T}}\otimes I^{B_{1}}\right)\right]\right\} =\mathrm{Tr}\left[\mathbf{J}_{\Theta}^{AB_{0}}\left(J_{\mathcal{N}}^{A}\otimes\rho^{B_{0}}\right)^{\mathrm{T}}\right].\label{eq:trace}
\end{equation}
Hence, by Eq.~\eqref{eq:superchannel trace} $\Theta^{A\rightarrow B}$
is a superchannel if and only if
\[
\mathrm{Tr}\left[\mathbf{J}_{\Theta}^{AB_{0}}\left(J_{\mathcal{N}}^{A}\otimes\rho^{B_{0}}\right)^{\mathrm{T}}\right]=1.
\]

Similarly, we can characterize CPTNI-preserving supermaps in the Choi
picture. By Eq.~\eqref{eq:CPTNI} in the main article, a supermap
is CPTNI-preserving if $\mathrm{Tr}\left[\Theta^{A\rightarrow B}\left[\mathcal{N}^{A}\right]\left(\rho^{B_{0}}\right)\right]\leq1$,
for every CPTP map $\mathcal{N}^{A}$ and every density matrix $\rho^{B_{0}}$.
By Eq.~\eqref{eq:trace}, we can rewrite this condition in the Choi
picture as
\[
\mathrm{Tr}\left[\mathbf{J}_{\Theta}^{AB_{0}}\left(J_{\mathcal{N}}^{A}\otimes\rho^{B_{0}}\right)^{\mathrm{T}}\right]\leq1.
\]
This proves Eq.~\eqref{eq:choicond} in the main article.

\subsection{Some technical derivations about completely CPTNI-preserving supermaps\label{subsec:Some-technical-derivations}}

Now we will focus on expressing the complete CPTNI preservation condition
in the Choi picture. Looking at Eq.~\eqref{eq:c-CPTNI} in the main
article tells us that we need to find an expression for $\mathrm{Tr}\left[\left(\mathbf{1}^{R}\otimes\Theta^{A\rightarrow B}\right)\left[\mathcal{N}^{RA}\right]\left(\rho^{R_{0}B_{0}}\right)\right]$
in the Choi picture. Note that the identity supermap does not change
the systems it acts on. Therefore, to express Eq.~\eqref{eq:c-CPTNI}
in the main article in the Choi form, we only consider how $\mathcal{N}^{RA}$
is acted on by the supermap $\Theta^{A\rightarrow B}$, representing
the action of the identity superchannel with the identity matrix $I^{R}$.
Therefore, combining Eqs.~\eqref{eq:Choi on states} and \eqref{eq:Choi on maps}
this time yields:
\begin{eqnarray}
\mathrm{Tr}\left[\left(\mathbf{1}^{R}\otimes\Theta^{A\rightarrow B}\right)\left[\mathcal{N}^{RA}\right]\left(\rho^{R_{0}B_{0}}\right)\right] & = & \mathrm{Tr}\left[\left(I^{R}\otimes\mathbf{J}_{\Theta}^{AB}\right)\left(\left(J_{\mathcal{N}}^{RA}\right)^{\mathrm{T}_{A}}\otimes I^{B}\right)\left(\left(\rho^{R_{0}B_{0}}\right)^{\mathrm{T}}\otimes I^{R_{1}AB_{1}}\right)\right]\nonumber \\
 & = & \mathrm{Tr}_{R_{0}AB_{0}}\left\{ \mathrm{Tr}_{R_{1}B_{1}}\left[\left(I^{R}\otimes\mathbf{J}_{\Theta}^{AB}\right)\left(\left(J_{\mathcal{N}}^{RA}\right)^{\mathrm{T}_{A}}\otimes I^{B}\right)\left(\left(\rho^{R_{0}B_{0}}\right)^{\mathrm{T}}\otimes I^{R_{1}AB_{1}}\right)\right]\right\} \nonumber \\
 & = & \mathrm{Tr}_{R_{0}AB_{0}}\left[\left(I^{R_{0}}\otimes\mathbf{J}_{\Theta}^{AB_{0}}\right)\left(\left(J_{\mathcal{N}}^{R_{0}A}\right)^{\mathrm{T}_{A}}\otimes I^{B_{0}}\right)\left(\left(\rho^{R_{0}B_{0}}\right)^{\mathrm{T}}\otimes I^{A}\right)\right].\label{eq:c-CPTNI trace}
\end{eqnarray}
Now let us define 
\begin{equation}
M^{AB_{0}}:=\mathrm{Tr}_{R_{0}}\left[\left(\rho^{R_{0}B_{0}}\otimes I^{A}\right)\left(\left(J_{\mathcal{N}}^{R_{0}A}\right)^{\mathrm{T}_{R_{0}}}\otimes I^{B_{0}}\right)\right],\label{eq: M def}
\end{equation}
and let us calculate:
\begin{align*}
\mathrm{Tr}\left[\mathbf{J}_{\Theta}^{AB_{0}}\left(M^{AB_{0}}\right)^{\mathrm{T}}\right] & =\mathrm{Tr}_{R_{0}AB_{0}}\left[\left(I^{R_{0}}\otimes\mathbf{J}_{\Theta}^{AB_{0}}\right)\left(\left(J_{\mathcal{N}}^{R_{0}A}\right)^{\mathrm{T}_{R_{0}}}\otimes I^{B_{0}}\right)^{\mathrm{T}}\left(\rho^{R_{0}B_{0}}\otimes I^{A}\right)^{\mathrm{T}}\right]\\
 & =\mathrm{Tr}\left[\left(I^{R_{0}}\otimes\mathbf{J}_{\Theta}^{AB_{0}}\right)\left(\left(J_{\mathcal{N}}^{R_{0}A}\right)^{\mathrm{T}_{A}}\otimes I^{B_{0}}\right)\left(\left(\rho^{R_{0}B_{0}}\right)^{\mathrm{T}}\otimes I^{A}\right)\right].
\end{align*}
As we can see, this coincides with Eq.~\eqref{eq:c-CPTNI trace}.
Therefore $\mathrm{Tr}\left[\left(\mathbf{1}^{R}\otimes\Theta^{A\rightarrow B}\right)\left[\mathcal{N}^{RA}\right]\left(\rho^{R_{0}B_{0}}\right)\right]=\mathrm{Tr}\left[\mathbf{J}_{\Theta}^{AB_{0}}\left(M^{AB_{0}}\right)^{\mathrm{T}}\right]$,
where $M^{AB_{0}}$ is defined in Eq.~\eqref{eq: M def}. Now the
complete CPTNI preservation condition of Eq.~\eqref{eq:c-CPTNI}
in the main article becomes:
\begin{equation}
\mathrm{Tr}\left[\mathbf{J}_{\Theta}^{AB_{0}}\left(M^{AB_{0}}\right)^{\mathrm{T}}\right]\leq1,\label{eq:completechoicond-partial}
\end{equation}
for every $M^{AB_{0}}$ of the form~\eqref{eq: M def}. Note that
$\mathrm{Tr}\left[\mathbf{J}_{\Theta}^{AB_{0}}\left(M^{AB_{0}}\right)^{\mathrm{T}}\right]\geq0$
for every CPP supermap $\Theta^{A\rightarrow B}$, whence $M^{AB_{0}}$
is positive semi-definite. Furthermore,
\[
M^{A_{0}B_{0}}=\mathrm{Tr}_{R_{0}A_{1}}\left[\left(\rho^{R_{0}B_{0}}\otimes I^{A}\right)\left(\left(J_{\mathcal{N}}^{R_{0}A}\right)^{\mathrm{T}_{R_{0}}}\otimes I^{B_{0}}\right)\right]=\mathrm{Tr}_{R_{0}}\left[\rho^{R_{0}B_{0}}\right]\otimes I^{A_{0}}=I^{A_{0}}\otimes\rho^{B_{0}},
\]
where we have used the fact that $\mathrm{Tr}_{R_{0}A_{1}}\left[\left(J_{\mathcal{N}}^{R_{0}A}\right)^{\mathrm{T}_{R_{0}}}\right]=\mathrm{Tr}_{R_{0}A_{1}}\left[J_{\mathcal{N}}^{R_{0}A}\right]$,
and that $\mathrm{Tr}_{A_{1}}\left[J_{\mathcal{N}}^{R_{0}A}\right]=\mathrm{Tr}_{R_{1}A_{1}}\left[J_{\mathcal{N}}^{RA}\right]=I^{R_{0}}\otimes I^{A_{0}}$
because $\mathcal{N}$ is CPTP (cf.\ Appendix~\ref{subsec:Useful-results-about}).
So $M^{AB_{0}}$ has marginal $M^{A_{0}B_{0}}=I^{A_{0}}\otimes\rho^{B_{0}}$.

Now we prove a key result, namely that \emph{every} positive semi-definite
matrix $M^{AB_{0}}$ with marginal $M^{A_{0}B_{0}}=I^{A_{0}}\otimes\rho^{B_{0}}$,
where $\rho^{B_{0}}$ is any density matrix, can be written as in
Eq.~\eqref{eq: M def}. In this way, instead of stating complete
CPTNI preservation as in Eq.~\eqref{eq:completechoicond-partial}
for $M^{AB_{0}}$ of the form~\eqref{eq: M def}, we will state it
in a remarkably simpler way: $\Theta^{A\rightarrow B}$ is completely
CPTNI-preserving if and only if Eq.~\eqref{eq:completechoicond-partial}
is satisfied for \emph{any} positive semi-definite $M^{AB_{0}}$ with
marginal $M^{A_{0}B_{0}}=I^{A_{0}}\otimes\rho^{B_{0}}$. This technical
result will be crucial for the main finding of this article, namely
the characterization of physical supermaps (see Appendix~\ref{sec:Completion-of-Supermaps}).
\begin{lem}
Let $M^{AB_{0}}\geq0$ such that $M^{A_{0}B_{0}}=I^{A_{0}}\otimes\rho^{B_{0}}$,
for some $\rho^{B_{0}}\in\mathfrak{D}\left(\mathcal{H}^{B_{0}}\right)$.
Then
\[
M^{AB_{0}}=\mathrm{Tr}_{R_{0}}\left[\left(\rho^{R_{0}B_{0}}\otimes I^{A}\right)\left(\left(J_{\mathcal{N}}^{R_{0}A}\right)^{\mathrm{T}_{R_{0}}}\otimes I^{B_{0}}\right)\right],
\]
where $\mathcal{N}^{RA}$ is some CPTP map and $\rho^{R_{0}B_{0}}\in\mathfrak{D}\left(\mathcal{H}^{R_{0}B_{0}}\right)$.
\end{lem}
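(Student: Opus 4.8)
The plan is to build $\mathcal{N}^{RA}$ and $\rho^{R_0 B_0}$ by hand, exploiting the observation that the right-hand side of the claimed formula, for a suitable fixed choice of $\rho^{R_0 B_0}$, is nothing but a $\sqrt{\rho^{B_0}}$-conjugation of the $B_0$-leg of $M^{AB_0}$. As a preliminary reduction I would first note that $M^{AB_0}$ is supported on $\mathcal{H}^{A}\otimes\mathrm{supp}\,\rho^{B_0}$: letting $\Pi$ be the projector onto $\mathrm{supp}\,\rho^{B_0}$, the marginal hypothesis gives $\mathrm{Tr}[M^{AB_0}(I^{A}\otimes(I-\Pi))]=\mathrm{Tr}[(I^{A_0}\otimes\rho^{B_0})(I^{A_0}\otimes(I-\Pi))]=|A_0|\,\mathrm{Tr}[\rho^{B_0}(I-\Pi)]=0$, and $M^{AB_0}\geq0$ together with $I^{A}\otimes(I-\Pi)\geq0$ forces $M^{AB_0}(I^{A}\otimes(I-\Pi))=0$. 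Hence, replacing $\mathcal{H}^{B_0}$ by $\mathrm{supp}\,\rho^{B_0}$ throughout (the $\rho^{R_0 B_0}$ produced below remains a legitimate state on $\mathcal{H}^{R_0 B_0}$), I may assume $\rho^{B_0}$ is invertible; I also take $R_1$ to be trivial (equivalently, tensor the final map with a fixed state on $R_1$), so that $J_{\mathcal{N}}^{R_0 A}=J_{\mathcal{N}}^{RA}$ is simply the Choi matrix of a channel $R_0 A_0\to A_1$.

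Next I would set $R_0\cong B_0$, let $\phi_+^{R_0 B_0}$ be the unnormalized maximally entangled state, and put $\rho^{R_0 B_0}:=\left(I^{R_0}\otimes\sqrt{\rho^{B_0}}\right)\phi_+^{R_0 B_0}\left(I^{R_0}\otimes\sqrt{\rho^{B_0}}\right)$; a one-line computation gives $\mathrm{Tr}[\rho^{R_0 B_0}]=\mathrm{Tr}[\rho^{B_0}]=1$, so this is a bona fide density matrix. Denoting by $\mathcal{V}\colon\mathcal{H}^{R_0}\to\mathcal{H}^{B_0}$ the operator $\sqrt{\rho^{B_0}}$ read through the identification of the fixed bases of $R_0$ and $B_0$ (invertible, by the reduction above), the technical heart of the argument is the identity
\[
\mathrm{Tr}_{R_0}\left[\left(\rho^{R_0 B_0}\otimes I^{A}\right)\left(\left(X^{R_0 A}\right)^{\mathrm{T}_{R_0}}\otimes I^{B_0}\right)\right]=\left(\mathcal{V}\otimes I^{A}\right)X^{R_0 A}\left(\mathcal{V}\otimes I^{A}\right)^{\dagger},
\]
valid for \emph{every} $X^{R_0 A}\in\mathfrak{B}\left(\mathcal{H}^{R_0 A}\right)$. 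I would prove it by expanding $X^{R_0 A}=\sum_{a,b}|a\rangle\langle b|^{R_0}\otimes L_{ab}^{A}$, carrying out the two matrix products and the partial trace over $R_0$, and observing that both sides reduce to $\sum_{a,b}\sqrt{\rho^{B_0}}|a\rangle\langle b|\sqrt{\rho^{B_0}}\otimes L_{ab}^{A}$.

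Given this identity, I would define $J_{\mathcal{N}}^{R_0 A}:=\left(\mathcal{V}^{-1}\otimes I^{A}\right)M^{AB_0}\left(\mathcal{V}^{-1}\otimes I^{A}\right)^{\dagger}$, with $\mathcal{V}^{-1}=\left(\rho^{B_0}\right)^{-1/2}$ under the same basis identification. Then $J_{\mathcal{N}}^{R_0 A}\geq0$ is immediate from $M^{AB_0}\geq0$, and
\[
\mathrm{Tr}_{A_1}J_{\mathcal{N}}^{R_0 A}=\left(\mathcal{V}^{-1}\otimes I^{A_0}\right)\left(I^{A_0}\otimes\rho^{B_0}\right)\left(\mathcal{V}^{-1}\otimes I^{A_0}\right)^{\dagger}=\left(\mathcal{V}^{-1}\rho^{B_0}(\mathcal{V}^{-1})^{\dagger}\right)\otimes I^{A_0}=I^{R_0 A_0},
\]
using $\mathrm{Tr}_{A_1}M^{AB_0}=M^{A_0 B_0}=I^{A_0}\otimes\rho^{B_0}$ and $\left(\rho^{B_0}\right)^{-1/2}\rho^{B_0}\left(\rho^{B_0}\right)^{-1/2}=I$. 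By the Choi characterization of CPTP maps recalled in Appendix~\ref{subsec:Useful-results-about}, $J_{\mathcal{N}}^{R_0 A}$ is thus the Choi matrix of a CPTP map $\mathcal{N}^{RA}$ (with $R_1$ trivial). Finally, substituting $X^{R_0 A}=J_{\mathcal{N}}^{R_0 A}$ into the identity of the previous paragraph and using $\mathcal{V}\mathcal{V}^{-1}=I^{B_0}$ yields $\mathrm{Tr}_{R_0}\left[\left(\rho^{R_0 B_0}\otimes I^{A}\right)\left(\left(J_{\mathcal{N}}^{R_0 A}\right)^{\mathrm{T}_{R_0}}\otimes I^{B_0}\right)\right]=M^{AB_0}$, which is precisely the claimed representation.

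The real content is the conjugation identity: once one recognizes that, for the chosen $\rho^{R_0 B_0}$, the map $X^{R_0 A}\mapsto\mathrm{Tr}_{R_0}[(\rho^{R_0 B_0}\otimes I^{A})((X^{R_0 A})^{\mathrm{T}_{R_0}}\otimes I^{B_0})]$ is invertible conjugation by $\mathcal{V}\otimes I^{A}$, the two conditions defining a channel Choi matrix (positive semidefiniteness and the marginal $I^{R_0 A_0}$) follow directly from the two hypotheses $M^{AB_0}\geq0$ and $M^{A_0 B_0}=I^{A_0}\otimes\rho^{B_0}$. The one place I expect to need care is the non-faithful case: when $\rho^{B_0}$ is not invertible the operator $\mathcal{V}$ is no longer invertible, so one must first confine $M^{AB_0}$ to $\mathcal{H}^{A}\otimes\mathrm{supp}\,\rho^{B_0}$ (equivalently, shrink $R_0$ to $\dim\mathrm{supp}\,\rho^{B_0}$) before inverting — I expect this bookkeeping, not any conceptual issue, to be the main obstacle.
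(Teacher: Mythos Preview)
Your proof is correct and takes a genuinely different route from the paper. The paper argues via purification theory: it purifies $M^{A_{0}B_{0}}=I^{A_{0}}\otimes\rho^{B_{0}}$ as $\phi_{+}^{A_{0}\widetilde{A}_{0}}\otimes\varphi^{E_{0}B_{0}}$, purifies $M^{AB_{0}}$ separately, invokes uniqueness of purifications to obtain an isometry $\widetilde{A}_{0}E_{0}\to A_{1}F_{0}$, and then traces out $F_{0}$ to produce the CPTP map $\Gamma$ whose Choi matrix becomes $J_{\mathcal{N}}^{R_{0}A}$ after relabelling. You instead fix $\rho^{R_{0}B_{0}}$ as the canonical purification $(I\otimes\sqrt{\rho^{B_{0}}})\phi_{+}(I\otimes\sqrt{\rho^{B_{0}}})$, recognise that for this choice the right-hand side is the invertible conjugation $X\mapsto(\mathcal{V}\otimes I^{A})X(\mathcal{V}\otimes I^{A})^{\dagger}$ with $\mathcal{V}=\sqrt{\rho^{B_{0}}}$, and simply invert it to write down $J_{\mathcal{N}}^{R_{0}A}$ in closed form. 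Your approach is more elementary and explicitly constructive (one can read off $J_{\mathcal{N}}$ directly from $M^{AB_{0}}$ and $\rho^{B_{0}}$), at the price of having to reduce separately to the case where $\rho^{B_{0}}$ is invertible; the paper's purification argument handles the non-faithful case uniformly without any such reduction, and makes transparent the link to Stinespring dilation that is used again in Appendix~\ref{sec:Quantum-Super-measurements}. Both yield the same $\rho^{R_{0}B_{0}}$ (a purification of $\rho^{B_{0}}$), so the difference is really one of technique rather than of the objects constructed.
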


\begin{proof}
Let $\phi_{+}^{A_{0}\widetilde{A}_{0}}\otimes\varphi^{E_{0}B_{0}}$
be a purification of $M^{A_{0}B_{0}}=I^{A_{0}}\otimes\rho^{B_{0}}$,
where $\varphi^{E_{0}B_{0}}\in\mathfrak{D}\left(\mathcal{H}^{E_{0}B_{0}}\right)$
is a purification of $\rho^{B_{0}}$. Now let $\tau^{AB_{0}F_{0}}$
be a purification of $M^{AB_{0}}$, so $\tau^{AB_{0}F_{0}}$ is also
a purification of $M^{A_{0}B_{0}}$. Thus, these two purifications
can be related by an isometry channel $\mathcal{V}^{\widetilde{A}_{0}E_{0}\rightarrow A_{1}F_{0}}$
such that \citep{book}
\[
\tau^{AB_{0}F_{0}}=\left(\mathcal{I}^{A_{0}B_{0}}\otimes\mathcal{V}^{\widetilde{A}_{0}E_{0}\rightarrow A_{1}F_{0}}\right)\left(\phi_{+}^{A_{0}\widetilde{A}_{0}}\otimes\varphi^{E_{0}B_{0}}\right).
\]
Performing the partial trace on system $F_{0}$ yields
\begin{equation}
M^{AB_{0}}=\left(\mathcal{I}^{A_{0}B_{0}}\otimes\Gamma^{\widetilde{A}_{0}E_{0}\rightarrow A_{1}}\right)\left(\phi_{+}^{A_{0}\widetilde{A}_{0}}\otimes\varphi^{E_{0}B_{0}}\right),\label{Nforminter}
\end{equation}
where $\Gamma^{\widetilde{A}_{0}E_{0}\rightarrow A_{1}}:=\mathrm{Tr}_{F_{0}}\circ\mathcal{V}^{\widetilde{A}_{0}E_{0}\rightarrow A_{1}F_{0}}$
is a CPTP map. The action of $\Gamma^{\widetilde{A}_{0}E_{0}\rightarrow A_{1}}$
on a generic state $\chi^{\widetilde{A}_{0}E_{0}}\in\mathfrak{D}\left(\mathcal{H}^{\widetilde{A}_{0}E_{0}}\right)$
can be written in terms of its Choi matrix as
\begin{equation}
\Gamma^{\widetilde{A}_{0}E_{0}\rightarrow A_{1}}\left(\chi^{\widetilde{A}_{0}E_{0}}\right)=\mathrm{Tr}_{\widetilde{A}_{0}E_{0}}\left[J_{\Gamma}^{\widetilde{A}_{0}E_{0}A_{1}}\left(\left(\chi^{\widetilde{A}_{0}E_{0}}\right)^{\mathrm{T}}\otimes I^{A_{1}}\right)\right].\label{gammachoi}
\end{equation}
Let us substitute Eq.~\eqref{gammachoi} into Eq.~\eqref{Nforminter}.
Note that the identity channel does not change the systems it acts
on. Therefore, to express Eq.~\eqref{Nforminter} in the Choi form,
we only consider how $\phi_{+}^{A_{0}\widetilde{A}_{0}}\otimes\varphi^{E_{0}B_{0}}$
is acted on by the map $\Gamma^{\widetilde{A}_{0}E_{0}\rightarrow A_{1}}$,
representing the action of the identity channel with the identity
matrix $I^{A_{0}B_{0}}$. Thus Eq.~\eqref{Nforminter} becomes
\[
M^{AB_{0}}=\mathrm{Tr}_{\widetilde{A}_{0}E_{0}}\left[\left(I^{A_{0}B_{0}}\otimes J_{\Gamma}^{\widetilde{A}_{0}E_{0}A_{1}}\right)\left(\left(\phi_{+}^{A_{0}\widetilde{A}_{0}}\right)^{\mathrm{T}_{\widetilde{A}_{0}}}\otimes\left(\varphi^{E_{0}B_{0}}\right)^{\mathrm{T}_{E_{0}}}\otimes I^{A_{1}}\right)\right].
\]
Expanding $\phi_{+}^{A_{0}\widetilde{A}_{0}}$, and using the cyclic
property of the trace, we get:
\[
\mathrm{Tr}_{\widetilde{A}_{0}E_{0}}\left[\left(I^{A_{0}B_{0}}\otimes J_{\Gamma}^{\widetilde{A}_{0}E_{0}A_{1}}\right)\left(\left(\phi_{+}^{A_{0}\widetilde{A}_{0}}\right)^{\mathrm{T}_{\widetilde{A}_{0}}}\otimes\left(\varphi^{E_{0}B_{0}}\right)^{\mathrm{T}_{E_{0}}}\otimes I^{A_{1}}\right)\right]=
\]
\[
=\sum_{x,y}\left|x\right\rangle \left\langle y\right|^{A_{0}}\otimes\mathrm{Tr}_{E_{0}}\left[\left(I^{B_{0}}\otimes\left\langle x\middle|J_{\Gamma}^{\widetilde{A}_{0}E_{0}A_{1}}\middle|y\right\rangle ^{\widetilde{A}_{0}}\right)\left(\left(\varphi^{E_{0}B_{0}}\right)^{\mathrm{T}_{E_{0}}}\otimes I^{A}\right)\right].
\]
Since $\widetilde{A}_{0}$ is a copy of $A_{0}$,
\[
\sum_{x,y}\left|x\right\rangle \left\langle y\right|^{A_{0}}\left\langle x\middle|J_{\Gamma}^{\widetilde{A}_{0}E_{0}A_{1}}\middle|y\right\rangle ^{\widetilde{A}_{0}}=:J_{\Gamma}^{E_{0}A},
\]
where we have replaced system $\widetilde{A}_{0}$ with system $A_{0}$,
and we have set $A:=A_{0}A_{1}$ as usual. Now $\Gamma$ is regarded
as a channel from $A_{0}E_{0}$ to $A_{1}$. With this in mind, we
can write:
\[
M^{AB_{0}}=\mathrm{Tr}_{E_{0}}\left[\left(J_{\Gamma}^{E_{0}A}\otimes I^{B_{0}}\right)\left(\left(\varphi^{E_{0}B_{0}}\right)^{\mathrm{T}_{E_{0}}}\otimes I^{A}\right)\right].
\]
Taking the transpose on $E_{0}$, this expression can be rewritten
as
\[
M^{AB_{0}}=\mathrm{Tr}_{E_{0}}\left[\left(\varphi^{E_{0}B_{0}}\otimes I^{A}\right)\left(\left(J_{\Gamma}^{E_{0}A}\right)^{\mathrm{T}_{E_{0}}}\otimes I^{B_{0}}\right)\right].
\]
Now rename $E_{0}$ as $R_{0}$, and define $J_{\mathcal{N}}^{R_{0}A}:=J_{\Gamma}^{R_{0}A}$,
and $\rho^{R_{0}B_{0}}:=\varphi^{R_{0}B_{0}}$. We find that $M^{AB_{0}}$
can be written in the form of Eq.~\eqref{eq: M def}.
\end{proof}
This means that, once we require $M^{AB_{0}}$ to be positive semi-definite
with marginal $M^{A_{0}B_{0}}=I^{A_{0}}\otimes\rho^{B_{0}}$, for
some density matrix $\rho^{B_{0}}$, this automatically implies that
$M^{AB_{0}}$ has the special form of Eq.~\eqref{eq: M def}. Consequently,
we can express the requirement of complete CPTNI preservation in the
Choi form as follows: $\Theta^{A\rightarrow B}$ is complete CPTNI-preserving
if and only if $\mathbf{J}_{\Theta}^{AB}\geq0$ and $\mathrm{Tr}\left[\mathbf{J}_{\Theta}^{AB_{0}}\left(M^{AB_{0}}\right)^{\mathrm{T}}\right]\leq1$
for every positive semi-definite $M^{AB_{0}}$ with marginal $M^{A_{0}B_{0}}=I^{A_{0}}\otimes\rho^{B_{0}}$,
where $\rho^{B_{0}}\in\mathfrak{D}\left(\mathcal{H}^{B_{0}}\right)$.
This is Eq.~\eqref{eq:completechoicond} in the main article. In
particular, $\Theta^{A\rightarrow B}$ is a superchannel, which is
a completely CPTP-preserving supermap if and only if $\mathrm{Tr}\left[\mathbf{J}_{\Theta}^{AB_{0}}\left(M^{AB_{0}}\right)^{\mathrm{T}}\right]=1$
for every $M^{AB_{0}}$ as above.

Note that, among these $M^{AB_{0}}$'s we can find matrices of the
form $J_{\mathcal{N}}^{A}\otimes\rho^{B_{0}}$, where $\mathcal{N}^{A}$
is a CPTP map. These matrices are those used to check the CPTNI preservation
condition (cf.\ Eq.~\eqref{eq:choicond} in the main article). Indeed,
$J_{\mathcal{N}}^{A}\otimes\rho^{B_{0}}\geq0$, and the marginal is:
\[
\mathrm{Tr}_{A_{1}}\left[J_{\mathcal{N}}^{A}\otimes\rho^{B_{0}}\right]=\mathrm{Tr}_{A_{1}}\left[J_{\mathcal{N}}^{A}\right]\otimes\rho^{B_{0}}=I^{A_{0}}\otimes\rho^{B_{0}},
\]
because $J_{\mathcal{N}}^{A}$ is the Choi matrix of a CPTP map (see
Appendix~\ref{sec:General-facts-about}). Therefore, as it must be,
we recover in the Choi picture that CPTNI preservation is not stronger
than complete CPTNI preservation. In fact, it is strictly weaker,
as shown in Appendix~\ref{sec:counterexample}.

\section{\label{sec:counterexample}A supermap that is CPTNI-preserving, but
\emph{not} completely CPTNI-preserving}

In this appendix we present the concrete counterexample of a supermap
$\Theta^{A\rightarrow B}$ that is CPTNI-preserving, but \emph{not
completely} CPTNI-preserving. In this construction we take $\left|A_{0}\right|=\left|A_{1}\right|=\left|B_{0}\right|=2$.
Consider a supermap $\Theta^{A\rightarrow B}$ that has a Choi matrix
with marginal $\mathbf{J}_{\Theta}^{AB_{0}}=I^{A_{0}}\otimes\psi_{-}^{A_{1}B_{0}}$,
where $\psi_{-}^{A_{1}B_{0}}=\left|\psi_{-}\right\rangle \left\langle \psi_{-}\right|^{A_{1}B_{0}}$,
and $\left|\psi_{-}\right\rangle ^{A_{1}B_{0}}=\frac{1}{\sqrt{2}}\left(\left|01\right\rangle ^{A_{1}B_{0}}-\left|10\right\rangle ^{A_{1}B_{0}}\right)$
is the singlet state. Given this marginal, a possible Choi matrix
of the supermap $\Theta^{A\rightarrow B}$ is $\mathbf{J}_{\Theta}^{AB}=I^{A_{0}}\otimes\psi_{-}^{A_{1}B_{0}}\otimes u^{B_{1}}$,
where $u^{B_{1}}$ is the maximally mixed state of $B_{1}$. Now we
will prove that this supermap is CPTNI-preserving, but \emph{not completely}
CPTNI-preserving.

To this end, we first show that $\mathbf{J}_{\Theta}^{AB}$ satisfies
Eq.~\eqref{eq:choicond} in the main article. If $\mathcal{N}^{A}$
is a CPTP map and $\rho^{B_{0}}$ is a density matrix, we have 
\[
\mathrm{Tr}\left[\mathbf{J}_{\Theta}^{AB_{0}}\left(J_{\mathcal{N}}^{A}\otimes\rho^{B_{0}}\right)^{\mathrm{T}}\right]=\mathrm{Tr}\left[\left(I^{A_{0}}\otimes\psi_{-}^{A_{1}B_{0}}\right)\left(J_{\mathcal{N}}^{A}\otimes\rho^{B_{0}}\right)^{\mathrm{T}}\right].
\]
Now we express $\psi_{-}^{A_{1}B_{0}}$ in terms of the super-normalized
maximally entangled state $\phi_{+}^{A_{1}B_{0}}$:
\begin{equation}
\psi_{-}^{A_{1}B_{0}}=\frac{1}{2}\left(I^{A_{1}}\otimes Y^{B_{0}}\right)\phi_{+}^{A_{1}B_{0}}\left(I^{A_{1}}\otimes Y^{B_{0}}\right),\label{eq:psi-}
\end{equation}
where $Y^{B_{0}}$ is the Pauli $Y$ matrix. Then:
\begin{eqnarray*}
\mathrm{Tr}\left[\mathbf{J}_{\Theta}^{AB_{0}}\left(J_{\mathcal{N}}^{A}\otimes\rho^{B_{0}}\right)^{\mathrm{T}}\right] & = & \frac{1}{2}\mathrm{Tr}\left[\left(I^{A_{0}}\otimes\left(I^{A_{1}}\otimes Y^{B_{0}}\right)\phi_{+}^{A_{1}B_{0}}\left(I^{A_{1}}\otimes Y^{B_{0}}\right)\right)\left(J_{\mathcal{N}}^{A}\otimes\rho^{B_{0}}\right)^{\mathrm{T}}\right]\\
 & = & \frac{1}{2}\mathrm{Tr}_{A_{1}B_{0}}\left\{ \mathrm{Tr}_{A_{0}}\left[\left(I^{A_{0}}\otimes\left(I^{A_{1}}\otimes Y^{B_{0}}\right)\phi_{+}^{A_{1}B_{0}}\left(I^{A_{1}}\otimes Y^{B_{0}}\right)\right)\left(J_{\mathcal{N}}^{A}\otimes\rho^{B_{0}}\right)^{\mathrm{T}}\right]\right\} \\
 & = & \frac{1}{2}\mathrm{Tr}_{A_{1}B_{0}}\left[\phi_{+}^{A_{1}B_{0}}\left(\left(J_{\mathcal{N}}^{A_{1}}\right)^{\mathrm{T}}\otimes Y^{B_{0}}\left(\rho^{B_{0}}\right)^{\mathrm{T}}Y^{B_{0}}\right)\right],
\end{eqnarray*}
using the cyclic property of the trace. Now let us expand $\phi_{+}^{A_{1}B_{0}}$.
\begin{eqnarray}
\frac{1}{2}\mathrm{Tr}\left[\phi_{+}^{A_{1}B_{0}}\left(\left(J_{\mathcal{N}}^{A_{1}}\right)^{\mathrm{T}}\otimes Y^{B_{0}}\left(\rho^{B_{0}}\right)^{\mathrm{T}}Y^{B_{0}}\right)\right] & = & \frac{1}{2}\mathrm{Tr}{\textstyle \left[\sum_{x,y=1}^{2}\left|xx\right\rangle \left\langle yy\right|^{A_{1}B_{0}}\left(\left(J_{\mathcal{N}}^{A_{1}}\right)^{\mathrm{T}}\otimes Y^{B_{0}}\left(\rho^{B_{0}}\right)^{\mathrm{T}}Y^{B_{0}}\right)\right]}\nonumber \\
 & = & \frac{1}{2}\mathrm{Tr}{\textstyle \left[\sum_{x,y=1}^{2}\left\langle y\middle|\left(J_{\mathcal{N}}^{A_{1}}\right)^{\mathrm{T}}\middle|x\right\rangle ^{A_{1}}\left|x\right\rangle \left\langle y\right|^{B_{0}}Y^{B_{0}}\left(\rho^{B_{0}}\right)^{\mathrm{T}}Y^{B_{0}}\right]}\nonumber \\
 & = & \frac{1}{2}\mathrm{Tr}{\textstyle \left[\sum_{x,y=1}^{2}\left\langle x\middle|J_{\mathcal{N}}^{A_{1}}\middle|y\right\rangle ^{A_{1}}\left|x\right\rangle \left\langle y\right|^{B_{0}}Y^{B_{0}}\left(\rho^{B_{0}}\right)^{\mathrm{T}}Y^{B_{0}}\right]}.\label{eq:trick}
\end{eqnarray}
Here the expression $\sum_{x,y=1}^{2}\left\langle x\middle|J_{\mathcal{N}}^{A_{1}}\middle|y\right\rangle ^{A_{1}}\left|x\right\rangle \left\langle y\right|^{B_{0}}$
means considering $\mathcal{N}^{A}$ with its output system transformed
from $A_{1}$ to $B_{0}$. With this simplification, Eq.~\eqref{eq:trick}
reads
\[
\mathrm{Tr}\left[\mathbf{J}_{\Theta}^{AB_{0}}\left(J_{\mathcal{N}}^{A}\otimes\rho^{B_{0}}\right)^{\mathrm{T}}\right]=\frac{1}{2}\mathrm{Tr}\left[J_{\mathcal{N}}^{B_{0}}Y^{B_{0}}\left(\rho^{B_{0}}\right)^{\mathrm{T}}Y^{B_{0}}\right].
\]
Now, both $\frac{1}{2}J_{\mathcal{N}}^{B_{0}}$ and $Y^{B_{0}}\left(\rho^{B_{0}}\right)^{\mathrm{T}}Y^{B_{0}}$
are density operators, therefore
\[
\mathrm{Tr}\left[\mathbf{J}_{\Theta}^{AB_{0}}\left(J_{\mathcal{N}}^{A}\otimes\rho^{B_{0}}\right)^{\mathrm{T}}\right]=\mathrm{Tr}\left[\left(\frac{1}{2}J_{\mathcal{N}}^{B_{0}}\right)\left(Y^{B_{0}}\left(\rho^{B_{0}}\right)^{\mathrm{T}}Y^{B_{0}}\right)\right]\leq1.
\]
Hence $\mathbf{J}_{\Theta}^{AB_{0}}$ satisfies Eq.~\eqref{eq:choicond}
in the main article; therefore $\Theta^{A\rightarrow B}$ is a CPTNI-preserving
supermap.

Now we show that $\Theta^{A\rightarrow B}$ violates Eq.~\eqref{eq:completechoicond}
in the main article. To this end, let us take $M^{AB_{0}}=\left(\mathbf{J}_{\Theta}^{AB_{0}}\right)^{\mathrm{T}}$.
This choice of $M^{AB_{0}}$ complies with the two requests on $M^{AB_{0}}$
in Eq.~\eqref{eq:completechoicond} in the main article. Since $\mathbf{J}_{\Theta}^{AB_{0}}=I^{A_{0}}\otimes\psi_{-}^{A_{1}B_{0}}$,
$\left(\mathbf{J}_{\Theta}^{AB_{0}}\right)^{\mathrm{T}}$ is positive
semi-definite; and its marginal
\[
M^{A_{0}B_{0}}=\mathrm{Tr}_{A_{1}}\left[\left(\mathbf{J}_{\Theta}^{AB_{0}}\right)^{\mathrm{T}}\right]=\left(\mathrm{Tr}_{A_{1}}\left[\mathbf{J}_{\Theta}^{AB_{0}}\right]\right)^{\mathrm{T}}=\left(I^{A_{0}}\otimes u^{B_{0}}\right)^{\mathrm{T}}=I^{A_{0}}\otimes u^{B_{0}}
\]
is of the form $I^{A_{0}}\otimes\rho^{B_{0}}$, with $\rho^{B_{0}}$
density matrix. Then:
\begin{eqnarray*}
\mathrm{Tr}\left[\mathbf{J}_{\Theta}^{AB_{0}}\left(M^{AB_{0}}\right)^{\mathrm{T}}\right] & = & \mathrm{Tr}\left[\left(I^{A_{0}}\otimes\psi_{-}^{A_{1}B_{0}}\right)\left(I^{A_{0}}\otimes\psi_{-}^{A_{1}B_{0}}\right)\right]\\
 & = & \mathrm{Tr}\left[I^{A_{0}}\otimes\left(\psi_{-}^{A_{1}B_{0}}\right)^{2}\right]\\
 & = & \mathrm{Tr}_{A_{1}B_{0}}\left\{ \mathrm{Tr}_{A_{0}}\left[I^{A_{0}}\otimes\psi_{-}^{A_{1}B_{0}}\right]\right\} \\
 & = & 2\mathrm{Tr}\left[\psi_{-}^{A_{1}B_{0}}\right]\\
 & = & 2>1.
\end{eqnarray*}
This is in contrast with Eq.~\eqref{eq:completechoicond} in the
main article, therefore the supermap $\Theta^{A\rightarrow B}$\emph{
}is \emph{not} a \emph{completely} CPTNI-preserving supermap, despite
being CPTNI-preserving.

We conclude this appendix by reconstructing $\Theta^{A\rightarrow B}$
from its Choi matrix $\mathbf{J}_{\Theta}^{AB}=I^{A_{0}}\otimes\psi_{-}^{A_{1}B_{0}}\otimes u^{B_{1}}$.
By Eqs.~\eqref{eq:Choi on states} and \eqref{eq:Choi on maps},
we have, if $\mathcal{E}^{A}$ is a generic CP map, 
\begin{eqnarray*}
\Theta^{A\rightarrow B}\left[\mathcal{E}^{A}\right]\left(\rho^{B_{0}}\right) & = & \mathrm{Tr}_{AB_{0}}\left[\mathbf{J}_{\Theta}^{AB}\left(J_{\mathcal{E}}^{A}\otimes\rho^{B_{0}}\otimes I^{B_{1}}\right)^{\mathrm{T}}\right]\\
 & = & \mathrm{Tr}_{AB_{0}}\left[\left(I^{A_{0}}\otimes\psi_{-}^{A_{1}B_{0}}\right)\left(J_{\mathcal{E}}^{A}\otimes\rho^{B_{0}}\right)^{\mathrm{T}}\right]u^{B_{1}}\\
 & = & \mathrm{Tr}_{A_{1}B_{0}}\left\{ \mathrm{Tr}_{A_{0}}\left[\left(I^{A_{0}}\otimes\psi_{-}^{A_{1}B_{0}}\right)\left(J_{\mathcal{E}}^{A}\otimes\rho^{B_{0}}\right)^{\mathrm{T}}\right]\right\} u^{B_{1}}\\
 & = & \mathrm{Tr}\left[\psi_{-}^{A_{1}B_{0}}\left(J_{\mathcal{E}}^{A_{1}}\otimes\rho^{B_{0}}\right)^{\mathrm{T}}\right]u^{B_{1}}.
\end{eqnarray*}
Recalling Eq.~\eqref{eq:psi-}, we get
\[
\Theta^{A\rightarrow B}\left[\mathcal{E}^{A}\right]\left(\rho^{B_{0}}\right)=\frac{1}{2}\mathrm{Tr}\left[\phi_{+}^{A_{1}B_{0}}\left(\left(J_{\mathcal{E}}^{A_{1}}\right)^{\mathrm{T}}\otimes Y^{B_{0}}\left(\rho^{B_{0}}\right)^{\mathrm{T}}Y^{B_{0}}\right)\right]u^{B_{1}},
\]
and using an argument similar to the one in Eq.~\eqref{eq:trick},
we finally obtain
\begin{equation}
\Theta^{A\rightarrow B}\left[\mathcal{E}^{A}\right]\left(\rho^{B_{0}}\right)=\frac{1}{2}\mathrm{Tr}\left[J_{\mathcal{E}}^{B_{0}}Y^{B_{0}}\left(\rho^{B_{0}}\right)^{\mathrm{T}}Y^{B_{0}}\right]u^{B_{1}}.\label{eq:choiconv3}
\end{equation}
By Eq.~\eqref{eq:Choi on states}, $\mathcal{E}^{A_{0}\rightarrow B_{0}}\left(u^{A_{0}}\right)=\frac{1}{2}\mathrm{Tr}_{A_{0}}\left[J_{\mathcal{E}}^{A_{0}B_{0}}I^{A_{0}B_{0}}\right]=\frac{1}{2}J_{\mathcal{E}}^{B_{0}}$.
An equivalent form of Eq.~\eqref{eq:choiconv3} is, therefore,
\[
\Theta^{A\rightarrow B}\left[\mathcal{E}^{A}\right]\left(\rho^{B_{0}}\right)=\mathrm{Tr}\left[\mathcal{E}^{A_{0}\rightarrow B_{0}}\left(u^{A_{0}}\right)Y^{B_{0}}\left(\rho^{B_{0}}\right)^{\mathrm{T}}Y^{B_{0}}\right]u^{B_{1}}.
\]
This is exactly Eq.~\eqref{eq:example} in the main article.

\section{The main result\label{sec:Completion-of-Supermaps}}

In this appendix we prove the main result of this article, namely
that a supermap can be part of a super-instrument if and only if it
is completely CPTNI-preserving. To this end, it is useful to consider
the SDP~\eqref{eq:completeSDPprimal} in the main article, reported
here for the reader's convenience.
\begin{eqnarray*}
\textrm{Find} & \quad & \alpha=\max_{M}\mathrm{Tr}\left[\mathbf{J}_{\Theta}^{AB_{0}}\left(M^{AB_{0}}\right)^{\mathrm{T}}\right]\\
\textrm{Subject to:} & \quad & M^{AB_{0}}\geq0\\
 & \quad & M^{A_{0}B_{0}}=I^{A_{0}}\otimes\rho^{B_{0}}.
\end{eqnarray*}

\begin{thm}
Suppose $\Theta^{A\rightarrow B}$ is CPTNI-preserving supermap. Then
there exists another CPTNI-preserving supermap $\Theta'^{A\rightarrow B}$
such that $\Theta^{A\rightarrow B}+\Theta'^{A\rightarrow B}$ is a
superchannel if and only if $\Theta^{A\rightarrow B}$ is \emph{completely}
CPTNI-preserving.
\end{thm}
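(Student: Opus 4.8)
The plan is to prove the two implications separately, relying only on the Choi-picture characterizations recalled above — namely that a CPP supermap $\Theta^{A\to B}$ is completely CPTNI-preserving iff $\mathrm{Tr}\!\left[\mathbf{J}_\Theta^{AB_0}(M^{AB_0})^{\mathrm{T}}\right]\le1$ for every positive semi-definite $M^{AB_0}$ with $M^{A_0B_0}=I^{A_0}\otimes\rho^{B_0}$ (Eq.~\eqref{eq:completechoicond}), and that $\Psi^{A\to B}$ is a superchannel iff $\mathbf{J}_\Psi^{AB}\ge0$, $\mathbf{J}_\Psi^{AB_0}=\mathbf{J}_\Psi^{A_0B_0}\otimes u^{A_1}$, $\mathbf{J}_\Psi^{A_1B_0}=I^{A_1B_0}$ — together with strong duality for the SDP~\eqref{eq:completeSDPprimal}. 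Throughout I will use that any positive semi-definite operator on $\mathcal{H}^A\otimes\mathcal{H}^B$ is the Choi matrix of a (unique) CPP supermap, so that it suffices to build the required supermaps at the level of Choi matrices.

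\emph{Necessity} is the easy direction. If $\Theta'^{A\to B}$ is CPTNI-preserving and $\Psi:=\Theta^{A\to B}+\Theta'^{A\to B}$ is a superchannel, then for every admissible $M^{AB_0}$ one has $\mathrm{Tr}\!\left[\mathbf{J}_\Psi^{AB_0}(M^{AB_0})^{\mathrm{T}}\right]=1$, while $\mathbf{J}_{\Theta'}^{AB_0}\ge0$ (since $\Theta'^{A\to B}$ is CPP) and $(M^{AB_0})^{\mathrm{T}}\ge0$ give $\mathrm{Tr}\!\left[\mathbf{J}_{\Theta'}^{AB_0}(M^{AB_0})^{\mathrm{T}}\right]\ge0$; subtracting yields $\mathrm{Tr}\!\left[\mathbf{J}_\Theta^{AB_0}(M^{AB_0})^{\mathrm{T}}\right]\le1$, so $\Theta^{A\to B}$ is completely CPTNI-preserving by Eq.~\eqref{eq:completechoicond}.

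\emph{Sufficiency} is where the SDP enters. Assume $\Theta^{A\to B}$ is completely CPTNI-preserving, so the primal value of~\eqref{eq:completeSDPprimal} satisfies $\alpha\le1$. The primal is strictly feasible (take $M^{AB_0}=I^{A_0}\otimes u^{A_1}\otimes\rho^{B_0}$ with $\rho^{B_0}>0$, which is positive definite with marginal $I^{A_0}\otimes\rho^{B_0}$), so Slater's condition holds: strong duality applies and the dual optimum is attained at some superchannel $\Phi$ and some $r\ge0$ with $s:=r|A_0|=\beta=\alpha\le1$. Using $\mathbf{J}_\Phi^{AB_0}=\mathbf{J}_\Phi^{A_0B_0}\otimes u^{A_1}$, the first dual constraint reads $D^{AB_0}:=s\,\mathbf{J}_\Phi^{A_0B_0}\otimes u^{A_1}-\mathbf{J}_\Theta^{AB_0}\ge0$. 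I will then fix an arbitrary superchannel $\Phi_0$ and define a supermap $\Psi$ by the Choi matrix $\mathbf{J}_\Psi^{AB}:=\mathbf{J}_\Theta^{AB}+D^{AB_0}\otimes u^{B_1}+(1-s)\,\mathbf{J}_{\Phi_0}^{AB}$, which is positive semi-definite since each summand is. Tracing out $B_1$ and using $\mathbf{J}_\Theta^{AB_0}+D^{AB_0}=s\,\mathbf{J}_\Phi^{A_0B_0}\otimes u^{A_1}$ gives $\mathbf{J}_\Psi^{AB_0}=\bigl[s\,\mathbf{J}_\Phi^{A_0B_0}+(1-s)\mathbf{J}_{\Phi_0}^{A_0B_0}\bigr]\otimes u^{A_1}$, which is of the required product form; tracing out $A_0$ and using $\mathbf{J}_\Phi^{A_1B_0}=\mathbf{J}_{\Phi_0}^{A_1B_0}=I^{A_1B_0}$ gives $\mathbf{J}_\Psi^{A_1B_0}=s\,I^{A_1B_0}+(1-s)I^{A_1B_0}=I^{A_1B_0}$, so $\Psi$ is a superchannel. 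Finally I set $\Theta'^{A\to B}:=\Psi-\Theta^{A\to B}$; its Choi matrix $D^{AB_0}\otimes u^{B_1}+(1-s)\mathbf{J}_{\Phi_0}^{AB}\ge0$, so $\Theta'^{A\to B}$ is CPP, and for every admissible $M^{AB_0}$ one has $\mathrm{Tr}\!\left[\mathbf{J}_{\Theta'}^{AB_0}(M^{AB_0})^{\mathrm{T}}\right]=1-\mathrm{Tr}\!\left[\mathbf{J}_\Theta^{AB_0}(M^{AB_0})^{\mathrm{T}}\right]\le1$, using that $\Psi$ is a superchannel and $\mathbf{J}_\Theta^{AB_0},(M^{AB_0})^{\mathrm{T}}\ge0$. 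Hence $\Theta'^{A\to B}$ is CPTNI-preserving and $\Theta^{A\to B}+\Theta'^{A\to B}=\Psi$ is a superchannel.

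The main obstacle will be the duality step: one has to verify that the dual SDP displayed just before the theorem is genuinely the conic dual of~\eqref{eq:completeSDPprimal}, and that Slater's condition really holds so that the dual \emph{optimum} is attained — this is what delivers the crucial inequality $s\le1$, without which the corrective term $(1-s)\mathbf{J}_{\Phi_0}^{AB}$ would not be positive semi-definite and the completion would fail. Everything afterward — the three marginal computations certifying that $\Psi$ is a superchannel, and the final trace inequality for $\Theta'^{A\to B}$ — is routine.
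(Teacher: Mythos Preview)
Your proof is correct and follows the same strategy as the paper: the necessity direction is identical, and sufficiency proceeds via strong duality for the SDP~\eqref{eq:completeSDPprimal} to obtain a superchannel $\Phi$ and a value $s=\beta\le1$ with $s\,\mathbf{J}_\Phi^{A_0B_0}\otimes u^{A_1}\ge\mathbf{J}_\Theta^{AB_0}$. The only difference is cosmetic: the paper avoids your auxiliary superchannel $\Phi_0$ by noting that $\beta\le1$ gives $\mathbf{J}_\Phi^{A_0B_0}\otimes u^{A_1}\ge\beta\,\mathbf{J}_\Phi^{A_0B_0}\otimes u^{A_1}\ge\mathbf{J}_\Theta^{AB_0}$, so one can take $\Phi$ itself as the completing superchannel and set $\mathbf{J}_{\Theta'}^{AB_0}:=\mathbf{J}_\Phi^{AB_0}-\mathbf{J}_\Theta^{AB_0}$ directly---your construction with $\Phi_0=\Phi$ reduces to this, and in exchange you are more explicit than the paper about the full Choi matrix $\mathbf{J}_{\Theta'}^{AB}$ and about verifying Slater's condition.
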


\begin{proof}
First we will show sufficiency, namely that any completely CPTNI-preserving
supermap $\Theta^{A\rightarrow B}$ can be completed to a superchannel.
Following \citep{Barvinok}, let us write the SDP~\eqref{eq:completeSDPprimal}
in the main article in a different form. To do so, consider the linear
map $\mathcal{L}:\mathfrak{B}_{h}\left(\mathcal{H}^{AB_{0}}\right)\rightarrow\mathbb{R}\oplus\mathfrak{B}_{h}\left(\mathcal{H}^{A_{0}B_{0}}\right)$,
defined as
\[
\mathcal{L}\left(M^{AB_{0}}\right)=\left(\mathrm{Tr}\left[M^{AB_{0}}\right],M^{A_{0}B_{0}}-u^{A_{0}}\otimes M^{B_{0}}\right),
\]
for every hermitian matrix $M^{AB_{0}}$. We are working with with
positive semi-definite matrices $M^{AB_{0}}$ with marginal $M^{A_{0}B_{0}}=I^{A_{0}}\otimes\rho^{B_{0}}$,
where $\rho^{B_{0}}\in\mathfrak{D}\left(\mathcal{H}^{B_{0}}\right)$,
whence
\[
\mathrm{Tr}\left[M^{AB_{0}}\right]=\mathrm{Tr}_{A_{0}B_{0}}\left\{ \mathrm{Tr}_{A_{1}}\left[M^{AB_{0}}\right]\right\} =\mathrm{Tr}\left[M^{A_{0}B_{0}}\right]=\mathrm{Tr}\left[I^{A_{0}}\otimes\rho^{B_{0}}\right]=\left|A_{0}\right|.
\]
In addition,
\[
M^{B_{0}}=\mathrm{Tr}_{A}\left[M^{AB_{0}}\right]=\mathrm{Tr}_{A_{0}}\left[M^{A_{0}B_{0}}\right]=\mathrm{Tr}_{A_{0}}\left[I^{A_{0}}\otimes\rho^{B_{0}}\right]=\left|A_{0}\right|\rho^{B_{0}}.
\]
Using $\mathcal{L}$, we can replace the condition $M^{A_{0}B_{0}}=I^{A_{0}}\otimes\rho^{B_{0}}$
with $\mathcal{L}\left(M^{AB_{0}}\right)-\left(\left|A_{0}\right|,0^{A_{0}B_{0}}\right)=\left(0,0^{A_{0}B_{0}}\right)$.
Rewriting the SDP~\eqref{eq:completeSDPprimal} in the main article
in terms of $\mathcal{L}$, one obtains:
\begin{eqnarray*}
\textrm{Find} & \quad & \alpha=\max_{M}\mathrm{Tr}\left[\mathbf{J}_{\Theta}^{AB_{0}}\left(M^{AB_{0}}\right)^{\mathrm{T}}\right]\\
\textrm{Subject to:} & \quad & \mathcal{L}\left(M^{AB_{0}}\right)-\left(\left|A_{0}\right|,0^{A_{0}B_{0}}\right)=\left(0,0^{A_{0}B_{0}}\right)\\
 & \quad & M^{AB_{0}}\ge0.
\end{eqnarray*}
We can now construct the associated dual problem as follows. The dual
map of $\mathcal{L}$ is $\mathcal{L}^{*}:\mathbb{R}\oplus\mathfrak{B}_{h}\left(\mathcal{H}^{A_{0}B_{0}}\right)\rightarrow\mathfrak{B}_{h}\left(\mathcal{H}^{AB_{0}}\right)$
such that
\[
\mathcal{L}^{*}\left(r,\sigma^{A_{0}B_{0}}\right)=\left(rI^{A_{0}B_{0}}+\sigma^{A_{0}B_{0}}-u^{A_{0}}\otimes\sigma^{B_{0}}\right)\otimes I^{A_{1}},
\]
where $\left(r,\sigma^{A_{0}B_{0}}\right)\in\mathbb{R}\oplus\mathfrak{B}_{h}\left(\mathcal{H}^{A_{0}B_{0}}\right)$.
The dual problem is then:
\begin{eqnarray*}
\textrm{Find} & \quad & \beta=\min\left\langle \left(r,\sigma^{A_{0}B_{0}}\right),\left(\left|A_{0}\right|,0\right)\right\rangle \\
\textrm{Subject to:} & \quad & \left(rI^{A_{0}B_{0}}+\sigma^{A_{0}B_{0}}-u^{A_{0}}\otimes\sigma^{B_{0}}\right)\otimes I^{A_{1}}-\mathbf{J}_{\Theta}^{AB_{0}}\geq0\\
 & \quad & r\in\mathbb{R}\\
 & \quad & \sigma^{A_{0}B_{0}}\in\mathfrak{B}_{h}\left(\mathcal{H}^{A_{0}B_{0}}\right),
\end{eqnarray*}
where the inner product $\left\langle \left(r,\sigma^{A_{0}B_{0}}\right),\left(s,\tau^{A_{0}B_{0}}\right)\right\rangle $
is defined as
\[
\left\langle \left(r,\sigma^{A_{0}B_{0}}\right),\left(s,\tau^{A_{0}B_{0}}\right)\right\rangle =rs+\mathrm{Tr}\left[\sigma^{A_{0}B_{0}}\tau^{A_{0}B_{0}}\right].
\]
With this in mind, the dual problem simplifies to:
\begin{eqnarray}
\textrm{Find} & \quad & \beta=\left|A_{0}\right|\min r\nonumber \\
\textrm{Subject to:} & \quad & \left(rI^{A_{0}B_{0}}+\sigma^{A_{0}B_{0}}-u^{A_{0}}\otimes\sigma^{B_{0}}\right)\otimes I^{A_{1}}\geq\mathbf{J}_{\Theta}^{AB_{0}}\label{eq:Dual}\\
 & \quad & r\in\mathbb{R}\nonumber \\
 & \quad & \sigma^{A_{0}B_{0}}\in\mathfrak{B}_{h}\left(\mathcal{H}^{A_{0}B_{0}}\right).\nonumber 
\end{eqnarray}
Notice that the matrix $rI^{A_{0}B_{0}}+\sigma^{A_{0}B_{0}}-u^{A_{0}}\otimes\sigma^{B_{0}}$
must be positive semi-definite, otherwise the first constraint could
not be satisfied. In particular this implies $r\geq0$. Indeed, if
$r<0$, for some $\sigma^{A_{0}B_{0}}$ the matrix $rI^{A_{0}B_{0}}+\sigma^{A_{0}B_{0}}-u^{A_{0}}\otimes\sigma^{B_{0}}$
would have negative eigenvalues. Factoring $r\left|A_{0}\right|$
out of the first term of the constraint in Eq.~\eqref{eq:Dual},
we get
\[
\left(rI^{A_{0}B_{0}}+\sigma^{A_{0}B_{0}}-u^{A_{0}}\otimes\sigma^{B_{0}}\right)\otimes I^{A_{1}}=r\left|A_{0}\right|\left(u^{A_{0}}\otimes I^{B_{0}}+\sigma'^{A_{0}B_{0}}-u^{A_{0}}\otimes\sigma'^{B_{0}}\right)\otimes I^{A_{1}},
\]
where $\sigma'^{A_{0}B_{0}}:=\frac{1}{r\left|A_{0}\right|}\sigma^{A_{0}B_{0}}$
if $r\neq0$. Note that this does not alter the constraint on the
dual SDP, so we can forget the primes, and rewrite Eq.~\eqref{eq:Dual}
as:
\begin{eqnarray*}
\textrm{Find} & \quad & \beta=\left|A_{0}\right|\min r\\
\textrm{Subject to:} & \quad & r\left|A_{0}\right|\left(u^{A_{0}}\otimes I^{B_{0}}+\sigma^{A_{0}B_{0}}-u^{A_{0}}\otimes\sigma^{B_{0}}\right)\otimes I^{A_{1}}\geq\mathbf{J}_{\Theta}^{AB_{0}}\\
 & \quad & r\geq0\\
 & \quad & \sigma^{A_{0}B_{0}}\in\mathfrak{B}_{h}\left(\mathcal{H}^{A_{0}B_{0}}\right).
\end{eqnarray*}
In particular, this implies that $u^{A_{0}}\otimes I^{B_{0}}+\sigma^{A_{0}B_{0}}-u^{A_{0}}\otimes\sigma^{B_{0}}\geq0$.
Now let us define
\begin{equation}
\mathbf{J}_{\Phi}^{AB_{0}}:=\left(u^{A_{0}}\otimes I^{B_{0}}+\sigma^{A_{0}B_{0}}-u^{A_{0}}\otimes\sigma^{B_{0}}\right)\otimes I^{A_{1}}.\label{eq:=00005CJPhi}
\end{equation}
Note that $\mathbf{J}_{\Phi}^{AB_{0}}=\mathbf{J}_{\Phi}^{A_{0}B_{0}}\otimes u^{A_{1}}$
because
\[
\mathbf{J}_{\Phi}^{A_{0}B_{0}}=\mathrm{Tr}_{A_{1}}\left[\mathbf{J}_{\Phi}^{AB_{0}}\right]=\left|A_{1}\right|\left(u^{A_{0}}\otimes I^{B_{0}}+\sigma^{A_{0}B_{0}}-u^{A_{0}}\otimes\sigma^{B_{0}}\right).
\]
Moreover,
\[
\mathbf{J}_{\Phi}^{A_{1}B_{0}}=\mathrm{Tr}_{A_{0}}\left[\mathbf{J}_{\Phi}^{AB_{0}}\right]=\left(I^{B_{0}}+\sigma^{B_{0}}-\sigma^{B_{0}}\right)\otimes I^{A_{1}}=I^{A_{1}B_{0}}.
\]
Since $\mathbf{J}_{\Phi}^{A_{0}B_{0}}\geq0$, by Appendix~\ref{subsec:Useful-results-about}
$\mathbf{J}_{\Phi}^{AB_{0}}$ is the marginal Choi matrix of a superchannel
$\Phi^{A\rightarrow B}$. Eq.~\eqref{eq:=00005CJPhi} can be taken
as the definition of the marginal $\mathbf{J}_{\Phi}^{AB_{0}}$ of
the Choi matrix of \emph{any} superchannel. This is because any such
marginal $\mathbf{J}_{\Phi}^{AB_{0}}$ can be written as in Eq.~\eqref{eq:=00005CJPhi}
for some hermitian matrix $\sigma^{A_{0}B_{0}}$: it is enough to
take $\sigma^{A_{0}B_{0}}$ to be $\frac{1}{\left|A_{1}\right|}\mathbf{J}_{\Phi}^{A_{0}B_{0}}$.
Indeed, substituting $\sigma^{A_{0}B_{0}}=\frac{1}{\left|A_{1}\right|}\mathbf{J}_{\Phi}^{A_{0}B_{0}}$
in the right-hand side of Eq.~\eqref{eq:=00005CJPhi} yields:
\begin{align*}
\left|A_{1}\right|\left(u^{A_{0}}\otimes I^{B_{0}}+\frac{1}{\left|A_{1}\right|}\mathbf{J}_{\Phi}^{A_{0}B_{0}}-\frac{1}{\left|A_{1}\right|}u^{A_{0}}\otimes\mathbf{J}_{\Phi}^{B_{0}}\right)\otimes u^{A_{1}} & =\left(\left|A_{1}\right|u^{A_{0}}\otimes I^{B_{0}}+\mathbf{J}_{\Phi}^{A_{0}B_{0}}-u^{A_{0}}\otimes\mathbf{J}_{\Phi}^{B_{0}}\right)\otimes u^{A_{1}}\\
 & =\left(\left|A_{1}\right|u^{A_{0}}\otimes I^{B_{0}}+\mathbf{J}_{\Phi}^{A_{0}B_{0}}-u^{A_{0}}\otimes\mathrm{Tr}_{AB_{1}}\left[\mathbf{J}_{\Phi}^{AB}\right]\right)\otimes u^{A_{1}}\\
 & =\left(\left|A_{1}\right|u^{A_{0}}\otimes I^{B_{0}}+\mathbf{J}_{\Phi}^{A_{0}B_{0}}-u^{A_{0}}\otimes\mathrm{Tr}_{A_{1}}\left\{ \mathrm{Tr}_{A_{0}B_{1}}\left[\mathbf{J}_{\Phi}^{AB}\right]\right\} \right)\otimes u^{A_{1}}\\
 & =\left(\left|A_{1}\right|u^{A_{0}}\otimes I^{B_{0}}+\mathbf{J}_{\Phi}^{A_{0}B_{0}}-u^{A_{0}}\otimes\mathrm{Tr}_{A_{1}}\left[\mathbf{J}_{\Phi}^{A_{1}B_{0}}\right]\right)\otimes u^{A_{1}}\\
 & =\left(\left|A_{1}\right|u^{A_{0}}\otimes I^{B_{0}}+\mathbf{J}_{\Phi}^{A_{0}B_{0}}-u^{A_{0}}\otimes\mathrm{Tr}_{A_{1}}\left[I^{A_{1}B_{0}}\right]\right)\otimes u^{A_{1}}\\
 & =\left(\left|A_{1}\right|u^{A_{0}}\otimes I^{B_{0}}+\mathbf{J}_{\Phi}^{A_{0}B_{0}}-\left|A_{1}\right|u^{A_{0}}\otimes I^{B_{0}}\right)\otimes u^{A_{1}}\\
 & =\mathbf{J}_{\Phi}^{A_{0}B_{0}}\otimes u^{A_{1}}.
\end{align*}
Therefore, in the light of these remarks, the dual SDP can equivalently
be formulated in the following terms:
\begin{eqnarray*}
\textrm{Find} & \quad & \beta=\left|A_{0}\right|\min r\\
\textrm{Subject to:} & \quad & r\left|A_{0}\right|\mathbf{J}_{\Phi}^{A_{0}B_{0}}\otimes u^{A_{1}}\geq\mathbf{J}_{\Theta}^{AB_{0}}\\
 & \quad & \mathbf{J}_{\Phi}^{A_{0}B_{0}}\geq0\\
 & \quad & \mathbf{J}_{\Phi}^{A_{1}B_{0}}=I^{A_{1}B_{0}}\\
 & \quad & r\geq0.
\end{eqnarray*}
Strong duality states that the primal and dual problem have the same
optimal solution, therefore $\alpha=\beta$. Since $\Theta^{A\rightarrow B}$
is completely CPTNI-preserving, $\alpha=\max_{M}\mathrm{Tr}\left[\mathbf{J}_{\Theta}^{AB_{0}}\left(M^{AB_{0}}\right)^{\mathrm{T}}\right]\leq1$.
Hence $\beta\leq1$. Clearly taking $r\left|A_{0}\right|=\beta$ satisfies
the constraint $r\left|A_{0}\right|\mathbf{J}_{\Phi}^{A_{0}B_{0}}\otimes u^{A_{1}}\geq\mathbf{J}_{\Theta}^{AB_{0}}$,
and we have
\[
\mathbf{J}_{\Phi}^{A_{0}B_{0}}\otimes u^{A_{1}}\geq\beta\mathbf{J}_{\Phi}^{A_{0}B_{0}}\otimes u^{A_{1}}\geq\mathbf{J}_{\Theta}^{AB_{0}},
\]
because $\beta\leq1$. Now define $\Theta'^{A\rightarrow B}$ to be
a new supermap such that $\mathbf{J}_{\Theta'}^{AB_{0}}:=\mathbf{J}_{\Phi}^{A_{0}B_{0}}\otimes u^{A_{1}}-\mathbf{J}_{\Theta}^{AB_{0}}$.
By construction $\mathbf{J}_{\Theta'}^{AB_{0}}\geq0$; and by substituting
$\mathbf{J}_{\Theta'}^{AB_{0}}$ into the left-hand side of Eq.~\eqref{eq:choicond}
in the main article one obtains:
\begin{eqnarray*}
\mathrm{Tr}\left[\mathbf{J}_{\Theta'}^{AB_{0}}\left(J_{\mathcal{N}}^{A}\otimes\rho^{B_{0}}\right)^{\mathrm{T}}\right] & = & \mathrm{Tr}\left[\left(\mathbf{J}_{\Phi}^{AB_{0}}-\mathbf{J}_{\Theta}^{AB_{0}}\right)\left(J_{\mathcal{N}}^{A}\otimes\rho^{B_{0}}\right)^{\mathrm{T}}\right]\\
 & = & \mathrm{Tr}\left[\mathbf{J}_{\Phi}^{AB_{0}}\left(J_{\mathcal{N}}^{A}\otimes\rho^{B_{0}}\right)^{\mathrm{T}}\right]-\mathrm{Tr}\left[\mathbf{J}_{\Theta}^{AB_{0}}\left(J_{\mathcal{N}}^{A}\otimes\rho^{B_{0}}\right)^{\mathrm{T}}\right]\\
 & = & 1-\mathrm{Tr}\left[\mathbf{J}_{\Theta}^{AB_{0}}\left(J_{\mathcal{N}}^{A}\otimes\rho^{B_{0}}\right)^{\mathrm{T}}\right],
\end{eqnarray*}
where we have used the fact that $\Phi^{A\rightarrow B}$ is a superchannel
(see Appendix~\ref{subsec:Useful-results-about}). Now, $\mathrm{Tr}\left[\mathbf{J}_{\Theta}^{AB_{0}}\left(J_{\mathcal{N}}^{A}\otimes\rho^{B_{0}}\right)^{\mathrm{T}}\right]\geq0$
because $\Theta^{A\rightarrow B}$ is CPP. Therefore $\mathrm{Tr}\left[\mathbf{J}_{\Theta'}^{AB_{0}}\left(J_{\mathcal{N}}^{A}\otimes\rho^{B_{0}}\right)^{\mathrm{T}}\right]\leq1$
for every $J_{\mathcal{N}}^{A}\otimes\rho^{B_{0}}$, thus $\Theta'^{A\rightarrow B}$
is CPTNI-preserving.

To conclude the proof, let us prove necessity. Assume that $\Theta^{A\rightarrow B}$
is a CPTNI-preserving supermap such that $\Phi^{A\rightarrow B}=\Theta^{A\rightarrow B}+\Theta'^{A\rightarrow B}$
is a superchannel, where $\Theta'^{A\rightarrow B}$ is another CPTNI-preserving
supermap. We will prove that $\Theta^{A\rightarrow B}$ must be \emph{completely}
CPTNI-preserving. In the Choi picture we have
\begin{equation}
\mathbf{J}_{\Theta}^{AB_{0}}+\mathbf{J}_{\Theta'}^{AB_{0}}=\mathbf{J}_{\Phi}^{AB_{0}}.\label{eq:choi}
\end{equation}
Let us multiply both sides of Eq.~\eqref{eq:choi} by the transpose
of any matrix $M^{AB_{0}}\geq0$ with marginal $M^{A_{0}B_{0}}=I^{A_{0}}\otimes\rho^{B_{0}}$,
$\rho^{B_{0}}\in\mathfrak{D}\left(\mathcal{H}^{B_{0}}\right)$, and
then take the trace.
\begin{equation}
\mathrm{Tr}\left[\mathbf{J}_{\Theta}^{AB_{0}}\left(M^{AB_{0}}\right)^{\mathrm{T}}\right]+\mathrm{Tr}\left[\mathbf{J}_{\Theta'}^{AB_{0}}\left(M^{AB_{0}}\right)^{\mathrm{T}}\right]=\mathrm{Tr}\left[\mathbf{J}_{\Phi}^{AB_{0}}\left(M^{AB_{0}}\right)^{\mathrm{T}}\right]\label{eq:choiM}
\end{equation}
By the results in Appendix~\ref{subsec:Some-technical-derivations},
the right-hand side is 1 because $\Phi^{A\rightarrow B}$ is a superchannel.
Thus Eq.~\eqref{eq:choiM} becomes
\[
\mathrm{Tr}\left[\mathbf{J}_{\Theta}^{AB_{0}}\left(M^{AB_{0}}\right)^{\mathrm{T}}\right]+\mathrm{Tr}\left[\mathbf{J}_{\Theta'}^{AB_{0}}\left(M^{AB_{0}}\right)^{\mathrm{T}}\right]=1,
\]
which implies $\mathrm{Tr}\left[\mathbf{J}_{\Theta}^{AB_{0}}\left(M^{AB_{0}}\right)^{\mathrm{T}}\right]\leq1$
for all $M^{AB_{0}}$ because $\Theta^{A\rightarrow B}$ is CPP. Therefore
$\Theta^{A\rightarrow B}$ satisfies Eq.~\eqref{eq:completechoicond}
in the main article, which means that it is completely CPTNI-preserving.
This concludes the proof.
\end{proof}
Applying the statement of this theorem to $\Theta'^{A\rightarrow B}$,
we get that $\Theta'^{A\rightarrow B}$ is completely CPTNI-preserving
too.

\section{Quantum super-instruments\label{sec:Quantum-Super-measurements}}

In this appendix we re-derive one of the results of \citep{Chiribella2008},
but in a different way. This new proof is based on our main result:
every completely CPTNI-preserving supermap can be completed to a superchannel.
Specifically, we show that each completely CPTNI-preserving supermap
$\Theta_{x}^{A\rightarrow B}$ in a super-measurement $\left\{ \Theta_{x}^{A\rightarrow B}\right\} $
can be expressed in terms of a CPTP pre-processing channel, independent
of $x$, and a CPTNI post-processing map, as depicted in Fig.~\ref{fig:superchannel}.
\begin{prop}
\label{prop:super-measurement}The Choi matrix $\mathbf{J}_{\Theta_{x}}^{AB}$
of each completely CPTNI-preserving supermap $\Theta_{x}^{A\rightarrow B}$
in a super-measurement $\left\{ \Theta_{x}^{A\rightarrow B}\right\} _{x\in X}$
can be written in terms of a common CPTP pre-processing map $\Gamma_{\mathrm{pre}}^{\widetilde{B}_{0}\rightarrow A_{0}E_{0}}$,
and a CPTNI post-processing map $\Gamma_{\mathrm{post}_{x}}^{\widetilde{A}_{1}E_{0}\rightarrow B_{1}}$
as
\[
\mathbf{J}_{\Theta_{x}}^{AB}=\left(\mathcal{I}^{AB_{0}}\otimes\Gamma_{\mathrm{post}_{x}}^{\widetilde{A}_{1}E_{0}\rightarrow B_{1}}\right)\circ\left(\mathcal{I}^{A_{1}\widetilde{A}_{1}B_{0}}\otimes\Gamma_{\mathrm{pre}}^{\widetilde{B}_{0}\rightarrow A_{0}E_{0}}\right)\left(\phi_{+}^{B_{0}\widetilde{B}_{0}}\otimes\phi_{+}^{A_{1}\widetilde{A}_{1}}\right).
\]
\end{prop}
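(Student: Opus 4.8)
The plan is to build on the hypothesis that $\Phi^{A\to B}:=\sum_{x\in X}\Theta_{x}^{A\to B}$ is a superchannel, together with the pre-/post-processing ($1$-comb) realization of superchannels recalled in Appendix~\ref{sec:General-facts-about} (Fig.~\ref{fig:superchannel}). First I would fix such a realization of $\Phi$ and dilate both processing maps: the pre-processing can be taken to be an \emph{isometry channel} $\Gamma_{\mathrm{pre}}^{\widetilde{B}_0\to A_0 E_0}$ --- absorbing its Stinespring environment into an enlarged memory $E_0$, which the post-processing then simply discards --- and the post-processing to have the form $\sigma\mapsto\mathrm{Tr}_{D}\bigl[U_{\mathrm{post}}\,\sigma\,U_{\mathrm{post}}^{\dagger}\bigr]$ for an isometry $U_{\mathrm{post}}^{\widetilde{A}_1 E_0\to B_1 D}$ with environment $D$ traced out. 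Setting $\Xi^{A_0 A_1\widetilde{A}_1 B_0 E_0}:=\bigl(\mathcal{I}^{A_1\widetilde{A}_1 B_0}\otimes\Gamma_{\mathrm{pre}}^{\widetilde{B}_0\to A_0 E_0}\bigr)\bigl(\phi_{+}^{B_0\widetilde{B}_0}\otimes\phi_{+}^{A_1\widetilde{A}_1}\bigr)$, which is now a \emph{pure} (unnormalized) state because all three ingredients are pure, and $\widehat{\Xi}^{ABD}:=\bigl(\mathcal{I}^{AB_0}\otimes U_{\mathrm{post}}(\cdot)U_{\mathrm{post}}^{\dagger}\bigr)(\Xi)$, which is therefore also pure, one gets $\mathbf{J}_{\Phi}^{AB}=\mathrm{Tr}_{D}\widehat{\Xi}^{ABD}$.

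The decisive step is to lift the Choi-level decomposition $\sum_{x}\mathbf{J}_{\Theta_{x}}^{AB}=\mathbf{J}_{\Phi}^{AB}$ to the post-processing. Since each $\mathbf{J}_{\Theta_{x}}^{AB}\ge0$ (the $\Theta_{x}$ are CPP) and they sum to $\mathbf{J}_{\Phi}^{AB}$, we have $\mathbf{J}_{\Theta_{x}}^{AB}\le\mathbf{J}_{\Phi}^{AB}=\mathrm{Tr}_{D}\widehat{\Xi}^{ABD}$ for every $x$. Because $\widehat{\Xi}^{ABD}$ is pure, the Hughston--Jozsa--Wootters (steering) theorem supplies a POVM $\{M_{x}^{D}\}_{x\in X}$ on $D$ with $\mathrm{Tr}_{D}\bigl[(I^{AB}\otimes M_{x}^{D})\widehat{\Xi}^{ABD}\bigr]=\mathbf{J}_{\Theta_{x}}^{AB}$ for all $x$. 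I would then \emph{define} $\Gamma_{\mathrm{post}_{x}}^{\widetilde{A}_1 E_0\to B_1}(\sigma):=\mathrm{Tr}_{D}\bigl[(I^{B_1}\otimes M_{x}^{D})\,U_{\mathrm{post}}\,\sigma\,U_{\mathrm{post}}^{\dagger}\bigr]$: this is completely positive by construction, trace non-increasing because $0\le M_{x}^{D}\le I^{D}$ and $U_{\mathrm{post}}$ is an isometry, and $\sum_{x}\Gamma_{\mathrm{post}_{x}}$ equals the CPTP post-processing $\mathrm{Tr}_{D}\circ\,U_{\mathrm{post}}(\cdot)U_{\mathrm{post}}^{\dagger}$. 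Tracing through the definitions yields $\bigl(\mathcal{I}^{AB_0}\otimes\Gamma_{\mathrm{post}_{x}}^{\widetilde{A}_1 E_0\to B_1}\bigr)(\Xi)=\mathrm{Tr}_{D}\bigl[(I^{AB}\otimes M_{x}^{D})\widehat{\Xi}^{ABD}\bigr]=\mathbf{J}_{\Theta_{x}}^{AB}$, and unfolding $\Xi$ reproduces exactly the asserted formula, with the common pre-processing $\Gamma_{\mathrm{pre}}^{\widetilde{B}_0\to A_0 E_0}$ manifestly independent of $x$. This is where our main result enters conceptually: in the light of it, the super-measurement hypothesis tells us that each $\Theta_{x}$ is completely CPTNI-preserving and can be completed to a superchannel --- here, to the very superchannel $\Phi$ it already sits inside --- which is what legitimizes applying the comb machinery to the individual pieces.

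I expect the main obstacle to be precisely this lifting step: a decomposition of $\mathbf{J}_{\Phi}^{AB}$ into positive pieces does \emph{not} in general correspond to a decomposition of the post-processing channel that leaves the pre-processing fixed --- were $\widehat{\Xi}^{ABD}$ merely a mixed state, a POVM on $D$ alone could realize only coarse-grainings of one particular ensemble of its marginal, not an arbitrary positive decomposition. Forcing $\widehat{\Xi}^{ABD}$ to be a genuine vector state, by dilating \emph{both} processing maps, is exactly what makes Hughston--Jozsa--Wootters applicable and thereby produces the outcome-dependent CPTNI post-processings while holding the pre-processing common; it also yields the companion fact that $\sum_{x}\Gamma_{\mathrm{post}_{x}}$ is CPTP. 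The remaining work --- tracking which tensor factors the various identity channels and partial traces act on, so that $E_0$ and $D$ land in the correct registers and the $\mathcal{I}^{A_1\widetilde{A}_1 B_0}$, $\mathcal{I}^{AB_0}$ structure comes out right --- is routine bookkeeping.
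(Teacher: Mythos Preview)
Your proposal is correct and follows essentially the same route as the paper's proof: dilate the comb realization of the superchannel $\Phi=\sum_x\Theta_x$ so that its Choi matrix acquires a \emph{pure} preimage, and then exploit the freedom in purifications to steer the environment to the individual pieces $\mathbf{J}_{\Theta_x}^{AB}$, thereby defining the outcome-dependent CPTNI post-processings while keeping the pre-processing common. The only cosmetic difference is that you invoke the Hughston--Jozsa--Wootters theorem as a black box, whereas the paper spells out the underlying equivalence-of-purifications argument explicitly (purifying the block-diagonal operator $\sum_x|x\rangle\langle x|^{X_1}\otimes\mathbf{J}_{\Theta_x}^{AB}$, relating it to the Stinespring-dilated comb by an isometry $G_0\to X_1 F_0$, and then projecting on $|x\rangle\langle x|^{X_1}$).
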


\begin{proof}
Define $\Theta^{A\rightarrow B}:=\sum_{x\in X}\Theta_{x}^{A\rightarrow B}$,
which we know to be a superchannel. In the Choi picture this is translated
into $\mathbf{J}_{\Theta}^{AB}=\sum_{x}\mathbf{J}_{\Theta_{x}}^{AB}$.
In \citep[theorem 1]{Gour2018} one of the authors showed that the
Choi matrix of a superchannel can be written in terms of its pre-processing
$\Gamma_{\mathrm{pre}}^{B_{0}\rightarrow A_{0}E_{0}}$ and post-processing
$\Gamma_{\mathrm{post}}^{A_{1}E_{0}\rightarrow B_{1}}$ as
\[
\mathbf{J}_{\Theta}^{AB}=\left(\mathcal{I}^{AB_{0}}\otimes\Gamma_{\mathrm{post}}^{\widetilde{A}_{1}E_{0}\rightarrow B_{1}}\right)\left(\psi^{A_{0}B_{0}E_{0}}\otimes\phi_{+}^{A_{1}\widetilde{A}_{1}}\right),
\]
where $\psi^{A_{0}B_{0}E_{0}}:=\left(\mathcal{I}^{AB_{0}}\otimes\Gamma_{\mathrm{pre}}^{\widetilde{B}_{0}\rightarrow A_{0}E_{0}}\right)\left(\phi_{+}^{B_{0}\widetilde{B}_{0}}\right)$.
$\psi^{A_{0}B_{0}E_{0}}$ can be shown to be a purification of $\frac{1}{\left|A_{1}\right|}\mathbf{J}_{\Theta}^{A_{0}B_{0}}$
\citep{Gour2018}. Now, summing over all outcomes $x\in X$, let us
construct the matrix $\sum_{x\in X}\left|x\right\rangle \left\langle x\right|^{X_{1}}\otimes\mathbf{J}_{\Theta_{x}}^{AB}$,
where $\left\{ \left|x\right\rangle ^{X_{1}}\right\} _{x=1}^{\left|X_{1}\right|}$
is an orthonormal basis of $\mathcal{H}^{X_{1}}$. Let $\varphi^{X_{1}ABF_{0}}$
be a purification of $\sum_{x}\left|x\right\rangle \left\langle x\right|^{X_{1}}\otimes\mathbf{J}_{\Theta_{x}}^{AB}$.
Note that $\varphi^{X_{1}ABF_{0}}$ is a purification of $\mathbf{J}_{\Theta}^{AB}$
too, because
\[
\mathrm{Tr}_{X_{1}F_{0}}\left[\varphi^{X_{1}ABF_{0}}\right]=\mathrm{Tr}_{X_{1}}\left\{ \mathrm{Tr}_{F_{0}}\left[\varphi^{X_{1}ABF_{0}}\right]\right\} =\mathrm{Tr}_{X_{1}}\left[\sum_{x}\left|x\right\rangle \left\langle x\right|^{X_{1}}\otimes\mathbf{J}_{\Theta_{x}}^{AB}\right]=\sum_{x}\mathbf{J}_{\Theta_{x}}^{AB}=\mathbf{J}_{\Theta}^{AB}.
\]
If we take the isometry $\mathcal{V}^{\widetilde{A}_{1}E_{0}\rightarrow B_{1}G_{0}}$
to be a Stinespring dilation of $\Gamma_{\mathrm{post}}^{\widetilde{A}_{1}E_{0}\rightarrow B_{1}}$,
namely $\Gamma_{\mathrm{post}}^{\widetilde{A}_{1}E_{0}\rightarrow B_{1}}=\mathrm{Tr}_{G_{0}}\circ\mathcal{V}^{\widetilde{A}_{1}E_{0}\rightarrow B_{1}G_{0}}$,
then
\[
\chi^{ABG_{0}}:=\left(\mathcal{I}^{AB_{0}}\otimes\mathcal{V}^{\widetilde{A}_{1}E_{0}\rightarrow B_{1}G_{0}}\right)\left(\psi^{A_{0}B_{0}E_{0}}\otimes\phi_{+}^{A_{1}\widetilde{A}_{1}}\right)
\]
is another purification of $\mathbf{J}_{\Theta}^{AB}$. Indeed, 
\begin{align*}
\mathrm{Tr}_{G_{0}}\left[\chi^{ABG_{0}}\right] & =\mathcal{I}^{AB_{0}}\otimes\left(\mathrm{Tr}_{G_{0}}\circ\mathcal{V}^{\widetilde{A}_{1}E_{0}\rightarrow B_{1}G_{0}}\right)\left(\psi^{A_{0}B_{0}E_{0}}\otimes\phi_{+}^{A_{1}\widetilde{A}_{1}}\right)\\
 & =\left(\mathcal{I}^{AB_{0}}\otimes\Gamma_{\mathrm{post}}^{\widetilde{A}_{1}E_{0}\rightarrow B_{1}}\right)\left(\psi^{A_{0}B_{0}E_{0}}\otimes\phi_{+}^{A_{1}\widetilde{A}_{1}}\right)\\
 & =\mathbf{J}_{\Theta}^{AB}.
\end{align*}
Since both $\varphi^{X_{1}ABF_{0}}$ and $\chi^{ABG_{0}}$ are purifications
of $\mathbf{J}_{\Theta}^{AB}$, they are related by an isometry channel
$\mathcal{U}^{G_{0}\rightarrow X_{1}F_{0}}$ such that
\begin{eqnarray*}
\varphi^{X_{1}ABF_{0}} & = & \left(\mathcal{I}^{AB}\otimes\mathcal{U}^{G_{0}\rightarrow X_{1}F_{0}}\right)\left(\chi^{ABG_{0}}\right)\\
 & = & \left(\mathcal{I}^{AB}\otimes\mathcal{U}^{G_{0}\rightarrow X_{1}F_{0}}\right)\left(\mathcal{I}^{AB_{0}}\otimes\mathcal{V}^{\widetilde{A}_{1}E_{0}\rightarrow B_{1}G_{0}}\right)\left(\psi^{A_{0}B_{0}E_{0}}\otimes\phi_{+}^{A_{1}\widetilde{A}_{1}}\right)\\
 & =: & \left(\mathcal{I}^{AB_{0}}\otimes\mathcal{W}^{\widetilde{A}_{1}E_{0}\rightarrow B_{1}X_{1}F_{0}}\right)\left(\psi^{A_{0}B_{0}E_{0}}\otimes\phi_{+}^{A_{1}\widetilde{A}_{1}}\right),
\end{eqnarray*}
where we have defined $\mathcal{W}^{\widetilde{A}_{1}E_{0}\rightarrow B_{1}X_{1}F_{0}}:=\mathcal{U}^{G_{0}\rightarrow X_{1}F_{0}}\circ\mathcal{V}^{\widetilde{A}_{1}E_{0}\rightarrow B_{1}G_{0}}$,
which is another isometry channel, and another Stinespring dilation
of $\Gamma_{\mathrm{post}}^{\widetilde{A}_{1}E_{0}\rightarrow B_{1}}$.
Now let us trace out system $F_{0}$, recalling that $\mathrm{Tr}_{F_{0}}\left[\varphi^{X_{1}ABF_{0}}\right]=\sum_{y}\left|y\right\rangle \left\langle y\right|^{X_{1}}\otimes\mathbf{J}_{\Theta_{y}}^{AB}$,
where we have changed the index from $x$ to $y$ for convenience.
We get
\begin{equation}
\sum_{y}\left|y\right\rangle \left\langle y\right|^{X_{1}}\otimes\mathbf{J}_{\Theta_{y}}^{AB}=\left(\mathcal{I}^{AB_{0}}\otimes\widetilde{\Gamma}^{\widetilde{A}_{1}E_{0}\rightarrow B_{1}X_{1}}\right)\left(\psi^{A_{0}B_{0}E_{0}}\otimes\phi_{+}^{A_{1}\widetilde{A}_{1}}\right),\label{eq:Gamma tilde}
\end{equation}
where $\widetilde{\Gamma}^{\widetilde{A}_{1}E_{0}\rightarrow B_{1}X_{1}}:=\mathrm{Tr}_{F_{0}}\circ\mathcal{W}^{\widetilde{A}_{1}E_{0}\rightarrow B_{1}X_{1}F_{0}}$
is a CPTP map. To get $\mathbf{J}_{\Theta_{x}}^{AB}$, we apply the
projector $\left|x\right\rangle \left\langle x\right|^{X_{1}}$ to
both sides of Eq.~\eqref{eq:Gamma tilde}, tracing over $X_{1}$:
\begin{align}
\mathbf{J}_{\Theta_{x}}^{AB} & =\mathrm{Tr}_{X_{1}}\left[\left|x\right\rangle \left\langle x\right|^{X_{1}}\left(\mathcal{I}^{AB_{0}}\otimes\widetilde{\Gamma}^{\widetilde{A}_{1}E_{0}\rightarrow B_{1}X_{1}}\right)\left(\psi^{A_{0}B_{0}E_{0}}\otimes\phi_{+}^{A_{1}\widetilde{A}_{1}}\right)\right]\nonumber \\
 & =\left[\mathcal{I}^{AB_{0}}\otimes\left(\mathrm{Tr}_{X_{1}}\left|x\right\rangle \left\langle x\right|^{X_{1}}\circ\widetilde{\Gamma}^{\widetilde{A}_{1}E_{0}\rightarrow B_{1}X_{1}}\right)\right]\left(\psi^{A_{0}B_{0}E_{0}}\otimes\phi_{+}^{A_{1}\widetilde{A}_{1}}\right).\label{eq:=00005CJ_x}
\end{align}
Now let us define $\Gamma_{\mathrm{post}_{x}}^{\widetilde{A}_{1}E_{0}\rightarrow B_{1}}:=\mathrm{Tr}_{X_{1}}\left|x\right\rangle \left\langle x\right|^{X_{1}}\circ\widetilde{\Gamma}^{\widetilde{A}_{1}E_{0}\rightarrow B_{1}X_{1}}$,
which is a CPTNI map whose action on a density matrix $\rho^{\widetilde{A}_{1}E_{0}}$
is
\[
\Gamma_{\mathrm{post}_{x}}^{\widetilde{A}_{1}E_{0}\rightarrow B_{1}}\left(\rho^{\widetilde{A}_{1}E_{0}}\right)=\left\langle x\middle|\widetilde{\Gamma}^{\widetilde{A}_{1}E_{0}\rightarrow B_{1}X_{1}}\left(\rho^{\widetilde{A}_{1}E_{0}}\right)\middle|x\right\rangle ^{X_{1}}.
\]
Therefore Eq.~\eqref{eq:=00005CJ_x} becomes
\[
\mathbf{J}_{\Theta_{x}}^{AB}=\left(\mathcal{I}^{AB_{0}}\otimes\Gamma_{\mathrm{post}_{x}}^{\widetilde{A}_{1}E_{0}\rightarrow B_{1}}\right)\left(\psi^{A_{0}B_{0}E_{0}}\otimes\phi_{+}^{A_{1}\widetilde{A}_{1}}\right).
\]
Recalling $\psi^{A_{0}B_{0}E_{0}}=\left(\mathcal{I}^{AB_{0}}\otimes\Gamma_{\mathrm{pre}}^{\widetilde{B}_{0}\rightarrow A_{0}E_{0}}\right)\left(\phi_{+}^{B_{0}\widetilde{B}_{0}}\right)$,
where $\Gamma_{\mathrm{pre}}^{\widetilde{B}_{0}\rightarrow A_{0}E_{0}}$
is the pre-processing of the superchannel $\Theta^{A\rightarrow B}$,
we get the thesis:
\[
\mathbf{J}_{\Theta_{x}}^{AB}=\left(\mathcal{I}^{AB_{0}}\otimes\Gamma_{\mathrm{post}_{x}}^{\widetilde{A}_{1}E_{0}\rightarrow B_{1}}\right)\circ\left(\mathcal{I}^{A_{1}\widetilde{A}_{1}B_{0}}\otimes\Gamma_{\mathrm{pre}}^{\widetilde{B}_{0}\rightarrow A_{0}E_{0}}\right)\left(\phi_{+}^{B_{0}\widetilde{B}_{0}}\otimes\phi_{+}^{A_{1}\widetilde{A}_{1}}\right).
\]
\end{proof}
Therefore we can realize every completely CPTNI-preserving supermap
$\Theta_{x}^{A\rightarrow B}$ that is part of a quantum super-instrument
as a quantum 1-comb, as in Fig.~\ref{fig:superchannel}. More precisely
we have\begin{equation}\label{eq:Theta_x comb}
\Theta^{A\rightarrow B}_{x}=~\begin{aligned}\Qcircuit @C=1em @R=.7em @!R {& \qw \poloFantasmaCn{B_0}  & \multigate{1}{\Gamma_{\mathrm{pre}}} & \qw \poloFantasmaCn{A_0} &\qw & & &\qw \poloFantasmaCn{A_1} & \multigate{1}{\Gamma_{\mathrm{post}_{x}}} & \qw \poloFantasmaCn{B_1} &\qw \\ && \pureghost{\Gamma_{\mathrm{pre}}} & \qw \poloFantasmaCn{E_0}  & \qw &\qw &\qw &\qw & \ghost{\Gamma_{\mathrm{post}_{x}}}}\end{aligned}~,
\end{equation}where $\Gamma_{\mathrm{pre}}^{B_{0}\rightarrow A_{0}E_{0}}$ is the
CPTP pre-processing of the superchannel $\Theta^{A\rightarrow B}=\sum_{x}\Theta_{x}^{A\rightarrow B}$.
The pre-processing of a completely CPTNI-preserving supermap is therefore
independent of $x$ and common to all the supermaps in the same quantum
super-instrument. In fact, even the post-processing is almost shared
by all supermaps in the same super-instrument: it is given by $\Gamma_{\mathrm{post}_{x}}^{A_{1}E_{0}\rightarrow B_{1}}=\mathrm{Tr}_{X_{1}}\left|x\right\rangle \left\langle x\right|^{X_{1}}\circ\widetilde{\Gamma}^{A_{1}E_{0}\rightarrow B_{1}X_{1}}$,
namely by a reading performed on the classical output $X_{1}$ of
$\widetilde{\Gamma}^{A_{1}E_{0}\rightarrow B_{1}X_{1}}$. $\widetilde{\Gamma}^{A_{1}E_{0}\rightarrow B_{1}X_{1}}$
depends only on the superchannel $\Theta^{A\rightarrow B}$, so it
is common to all the supermaps in the same super-instrument. Eq.~\eqref{eq:Theta_x comb}
then becomes\[
\Theta^{A\rightarrow B}_{x}=~\begin{aligned}\Qcircuit @C=1em @R=.7em @!R {& \qw \poloFantasmaCn{B_0}  & \multigate{1}{\Gamma_{\mathrm{pre}}} & \qw \poloFantasmaCn{A_0} &\qw & & &\qw \poloFantasmaCn{A_1} & \multigate{1}{\widetilde{\Gamma}} & \qw \poloFantasmaCn{B_1} &\qw \\ && \pureghost{\Gamma_{\mathrm{pre}}} & \qw \poloFantasmaCn{E_0}  & \qw &\qw &\qw &\qw & \ghost{\widetilde{\Gamma}}& \qw \poloFantasmaCn{X_1} &\measureD{x}}\end{aligned}~.
\]

From the proof of proposition~\ref{prop:super-measurement} we have
\begin{equation}
\mathrm{Tr}_{X_{1}}\circ\widetilde{\Gamma}^{A_{1}E_{0}\rightarrow B_{1}X_{1}}=\mathrm{Tr}_{X_{1}F_{0}}\circ\mathcal{W}^{A_{1}E_{0}\rightarrow B_{1}X_{1}F_{0}}=\Gamma_{\mathrm{post}}^{A_{1}E_{0}\rightarrow B_{1}},\label{eq:Gamma tilde - Gamma post}
\end{equation}
because $\mathcal{W}^{A_{1}E_{0}\rightarrow B_{1}X_{1}F_{0}}$ is
a Stinespring dilation of $\Gamma_{\mathrm{post}}^{A_{1}E_{0}\rightarrow B_{1}}$.
Therefore, if we forget the outcome $x$ of the super-measurement,
Eq.~\eqref{eq:Gamma tilde - Gamma post} yields
\[
\sum_{x}\Gamma_{\mathrm{post}_{x}}^{A_{1}E_{0}\rightarrow B_{1}}=\mathrm{Tr}_{X_{1}}\circ\widetilde{\Gamma}^{A_{1}E_{0}\rightarrow B_{1}X_{1}}=\Gamma_{\mathrm{post}}^{A_{1}E_{0}\rightarrow B_{1}},
\]
and we recover the post-processing channel $\Gamma_{\mathrm{post}}^{A_{1}E_{0}\rightarrow B_{1}}$
of $\Theta^{A\rightarrow B}.$

\section{OPT interpretation of the result\label{sec:OPT-interpretation-of}}

The theory of quantum supermaps, where generic evolutions of quantum
maps are described by supermaps, can be analysed using the framework
of operational probabilistic theories (OPTs) \citep{Chiribella-purification,Chiribella-informational,Chiribella14,QuantumFromPrinciples,hardy2011,Hardy-informational-2,hardy2013},
which is a formalism to describe arbitrary physical theories admitting
probabilistic processes. OPTs differ from the convex set approach
to general probabilistic theories \citep{Barrett,Barnum-2,Barnum2016}
in that they take the composition of physical processes and systems
as a primitive. Mathematically, this is based on the graphical language
of circuits \citep{Coecke-Kindergarten,Coecke-Picturalism,Selinger,Coecke2016}
and probability theory.

\subsection{The general framework\label{subsec:The-general-framework}}

OPTs describe the experiments that can be performed on a given set
of systems by a given set of physical processes. The framework is
based on a primitive notion of composition, whereby every pair of
physical systems $\mathrm{A}$ and $\mathrm{B}$ can be combined into
a composite system $\mathrm{AB}$. Physical processes can be connected
in sequence or in parallel to build circuits, in the very same way
as the corresponding devices are connected in a laboratory to build
an experiment; for instance\begin{equation}\label{eq:circuit} \begin{aligned}\Qcircuit @C=1em @R=.7em @!R { & \multiprepareC{1}{\rho} & \qw \poloFantasmaCn{\rA} & \gate{\cA} & \qw \poloFantasmaCn{\rA'} & \gate{\cA'} & \qw \poloFantasmaCn{\rA''} &\measureD{a} \\ & \pureghost{\rho} & \qw \poloFantasmaCn{\rB} & \gate{\cB} & \qw \poloFantasmaCn{\rB'} &\qw &\qw &\measureD{b} }\end{aligned}~. \end{equation}In
this example, $\mathrm{A}$, $\mathrm{A}'$, $\mathrm{A}''$, $\mathrm{B}$,
and $\mathrm{B}'$ are \emph{systems}, $\rho$ is a bipartite \emph{state},
$\mathcal{A}$, $\mathcal{A}'$ and $\mathcal{B}$ are \emph{transformations},
$a$ and $b$ are \emph{effects}. Note that inputs are on the left
and outputs on the right. 

For generic systems $\mathrm{A}$ and $\mathrm{B}$, we denote: 
\begin{itemize}
\item the set of states of system $\mathrm{A}$ by $\mathsf{St}\left(\mathrm{A}\right)$;
\item the set of effects on $\mathrm{A}$ by $\mathsf{Eff}\left(\mathrm{A}\right)$;
\item the set of transformations from $\mathrm{A}$ to $\mathrm{B}$ by
$\mathsf{Transf}\left(\mathrm{A},\mathrm{B}\right)$;
\item the sequential composition of two transformations $\mathcal{A}$ and
$\mathcal{B}$ by $\mathcal{B}\circ\mathcal{A}$ (or $\mathcal{B}\mathcal{A}$,
for short), with the input of $\mathcal{B}$ matching the output of
$\mathcal{A}$;
\item the identity transformation on system $\mathrm{A}$ by $\mathcal{I}_{\mathrm{A}}$,
represented with a plain wire $\Qcircuit @C=1em @R=.7em @!R { & \qw \poloFantasmaCn{\rA} & \qw }$;
\item the parallel composition (or tensor product) of the transformations
$\mathcal{A}$ and $\mathcal{B}$ by $\mathcal{A}\otimes\mathcal{B}$.
\end{itemize}
Among the list of valid physical systems, every OPT includes the trivial
system $\mathrm{I}$, corresponding to the degrees of freedom ignored
by theory, and to the lack of input (or output) system. States (resp.\ effects)
are transformations with the trivial system as input (resp.\ output).

A circuit with no external wires, as in Eq.~\eqref{eq:circuit},
is identified with a real number in the interval $\left[0,1\right]$,
interpreted as the probability of the joint occurrence of all the
transformations present in the circuit. We will often use the notation
$\left(a\middle|\rho\right)$ to denote the circuit \[ \left(a\middle|\rho\right)~:=\!\!\!\!\begin{aligned}\Qcircuit @C=1em @R=.7em @!R { & \prepareC{\rho} & \qw \poloFantasmaCn{\rA} &\measureD{a}}\end{aligned}~, \]and
the notation $\left(b\middle|\mathcal{C}\middle|\rho\right)$ to mean
the circuit\[\begin{aligned}\Qcircuit @C=1em @R=.7em @!R { & \prepareC{\rho} & \qw \poloFantasmaCn{\rA} &\gate{\cC} &\qw \poloFantasmaCn{\rB} &\measureD{b}}\end{aligned}~. \]

Let us clarify these concepts in quantum theory.
\begin{example}
In quantum theory, we associate a Hilbert space $\mathcal{H}^{A}$
with every system $\mathrm{A}$. States are positive semi-definite
operators $\rho$ with $\mathrm{Tr}\left[\rho\right]\leq1$. The reason
why we also consider states with trace less than 1 will be explained
in example~\ref{exa:quantum tests}. An effect is, instead, represented
by a positive semi-definite operator $E$, with $E\leq I$, where
$I$ is the identity operator. The pairing between states and effects
is given by the trace: $\left(E\middle|\rho\right)=\mathrm{Tr}\left[E\rho\right]$.
\end{example}

The fact that some circuits represent real numbers induces a notion
of sum for transformations, so that the sets $\mathsf{St}\left(\mathrm{A}\right)$,
$\mathsf{Transf}\left(\mathrm{A},\mathrm{B}\right)$, and $\mathsf{Eff}\left(\mathrm{A}\right)$
become spanning sets of\emph{ real} vector spaces. We will denote
the vector space of states as $\mathsf{St}_{\mathbb{R}}\left(\mathrm{A}\right)$
and the vector space of transformations as $\mathsf{Transf}_{\mathbb{R}}\left(\mathrm{A},\mathrm{B}\right)$.
Effects become linear functionals on $\mathsf{St}_{\mathbb{R}}\left(\mathrm{A}\right)$,
and transformations in $\mathsf{Transf}\left(\mathrm{A},\mathrm{B}\right)$
are linear transformations from $\mathsf{St}_{\mathbb{R}}\left(\mathrm{A}\right)$
to $\mathsf{St}_{\mathbb{R}}\left(\mathrm{B}\right)$.

If we restrict ourselves to linear combinations of states with non-negative
coefficients (conical combinations), we obtain a proper convex cone
\citep{Chiribella-purification}, called the cone of states $\mathsf{St}_{+}\left(\mathrm{A}\right)$.
Note that effects take non-negative values on the cone of states.
Indeed if $\xi\in\mathsf{St}_{+}\left(\mathrm{A}\right)$, then $\xi$
is a conical combination of some states $\rho_{i}$: $\xi=\sum_{i}\lambda_{i}\rho_{i}$,
where $\lambda_{i}\geq0$ for every $i$. Therefore when an effect
$a$ acts on $\xi$, we have
\[
\left(a\middle|\xi\right)=\sum_{i}\lambda_{i}\left(a\middle|\rho_{i}\right)\geq0,
\]
as $\lambda_{i}\geq0$, and $0\leq\left(a\middle|\rho_{i}\right)\leq1$,
because an effect yields a probability when applied to a state.
\begin{example}
In quantum theory, $\mathsf{St}_{\mathbb{R}}\left(\mathrm{A}\right)$
is the vector space of hermitian matrices on $\mathcal{H}^{A}$, and
$\mathsf{St}_{+}\left(\mathrm{A}\right)$ is the cone of positive
semi-definite matrices.
\end{example}

In general, an experiment in a laboratory can be non-deterministic,
i.e.\ it can result into a set of alternative transformations applied
to the input system, heralded by different outcomes, which can (at
least in principle) be accessed by an experimenter. General non-deterministic
processes are described by \emph{tests}: a test from $\mathrm{A}$
to $\mathrm{B}$ is a collection of transformations $\left\{ \mathcal{C}_{x}\right\} _{x\in X}$
from $\mathrm{A}$ to $\mathrm{B}$, where $X$ is the set of outcomes.
If $\mathrm{A}$ (resp.\ $\mathrm{B}$) is the trivial system, the
test is called a \emph{preparation-test} (resp.\ \emph{observation-test}).
If the set of outcomes $X$ contains a single element, we say that
the test is \emph{deterministic}, because only one transformation
can occur, and we can predict the outcome of the experiment. We refer
to deterministic transformations as \emph{channels}. If we sum over
all the transformations in a test we get a deterministic transformation,
viz.\ a channel: $\mathcal{C}:=\sum_{x\in X}\mathcal{C}_{x}$. This
is because the sum of all the transformations arising in a test can
be viewed as the full coarse-graining over all outcomes \citep{Chiribella-purification},
resulting in a new, deterministic, test. 
\begin{example}
\label{exa:quantum tests}In quantum theory, a channel from $\mathrm{A}_{0}$
to $\mathrm{A}_{1}$ is a CPTP map from $\mathfrak{B}\left(\mathcal{H}^{A_{0}}\right)$
to $\mathfrak{B}\left(\mathcal{H}^{A_{1}}\right)$. A test from $\mathrm{A}_{0}$
to $\mathrm{A}_{1}$ is a collection of CPTNI maps from $\mathfrak{B}\left(\mathcal{H}^{A_{0}}\right)$
to $\mathfrak{B}\left(\mathcal{H}^{A_{1}}\right)$ summing to a CPTP
map. Note that this is consistent with the fact that the sum over
all the transformations in a test yields a channel.

Deterministic states are positive semi-definite operators $\rho$
with $\mathrm{Tr}\left[\rho\right]=1$. A non-deterministic preparation-test
is a collection of positive semi-definite operators $\rho_{i}$ with
$\mathrm{Tr}\left[\rho_{i}\right]<1$ (non-deterministic states) that
sum to a deterministic state $\rho$. This is essentially a random
preparation: a state $\rho_{i}$ is prepared with a probability given
by $\mathrm{Tr}\left[\rho_{i}\right]$. This is why we consider all
positive semi-definite operators $\rho$ with $\mathrm{Tr}\left[\rho\right]\leq1$
as states.

Observation-tests are POVMs. In quantum theory there is only one deterministic
effect: the identity $I$ (more precisely it is the functional $\mathrm{Tr}\left[I\bullet\right]$).
This is not a coincidence, but it follows from the fact that quantum
theory is a causal theory (see definition~\ref{def:causality}).
\end{example}

Among all theories, \emph{causal} theories \citep{Chiribella-purification}
are particularly important: in these theories, loosely speaking, information
cannot come back from the future. They are particularly simple in
their structure, and, generally speaking, they are well understood.
Causality can also be shown to imply no-signalling in space-like separated
systems \citep{Chiribella-purification}. The formal statement of
the property of causality is as follows.
\begin{ax}[Causality \citep{Chiribella-purification}]
\label{def:causality}For every state $\rho$, take two observation-tests
$\left\{ a_{x}\right\} _{x\in X}$ and $\left\{ b_{y}\right\} _{y\in Y}$.
One has
\[
\sum_{x\in X}\left(a_{x}\middle|\rho\right)=\sum_{y\in Y}\left(b_{y}\middle|\rho\right).
\]
\end{ax}

Causality can be equivalently characterized in terms of deterministic
effects: an OPT is \emph{causal} if and only if, for every system
$\mathrm{A}$, there is a unique deterministic effect $u_{\mathrm{A}}$
\citep{Chiribella-purification}. This characterization is very practical
to work with.
\begin{example}
In quantum theory there is only one deterministic effect, the identity
operator (or the trace functional). Hence quantum theory is causal.
\end{example}

Causal theories enjoy an important property: the unique deterministic
effect for a composite system $\mathrm{AB}$ always factorizes as
the parallel composition of the deterministic effects on $\mathrm{A}$
and on $\mathrm{B}$. In symbols, $u_{\mathrm{AB}}=u_{\mathrm{A}}\otimes u_{\mathrm{B}}$.
This is because if $u_{\mathrm{A}}$ and $u_{\mathrm{B}}$ are the
deterministic effects of $\mathrm{A}$ and $\mathrm{B}$, then $u_{\mathrm{A}}\otimes u_{\mathrm{B}}$
is a deterministic effect on $\mathrm{AB}$. Since the theory is causal,
there is a unique deterministic effect on $\mathrm{AB}$, so $u_{\mathrm{A}}\otimes u_{\mathrm{B}}$
is \emph{the} deterministic effect of $\mathrm{AB}$.

Moreover, in causal theories there is a nice characterization of channels:
a transformation $\mathcal{C}\in\mathsf{Transf}\left(\mathrm{A},\mathrm{B}\right)$
is a channel if and only if \citep{Chiribella-purification}
\begin{equation}
u_{\mathrm{B}}\mathcal{C}=u_{\mathrm{A}}.\label{eq:characterization of channels}
\end{equation}
In quantum theory, since $u$ is the trace, this condition amounts
to saying that channels are trace-preserving.

Let us conclude this section by showing how the theory of quantum
supermaps fits into the OPT formalism.
\begin{example}
In the theory of quantum supermaps, every system $\mathrm{A}$ is
a pair of input and output quantum systems $\mathrm{A}=\left(A_{0},A_{1}\right)$;
deterministic states are CPTP maps, and non-deterministic ones are
CPTNI maps. The cone of states is given by all CP maps. Transformations
in this theory are supermaps \citep{Chiribella2008,Switch,Perinotti1,Perinotti2,Gour2018}.
As our results show, it is not immediate to pin down the mathematical
properties that make a generic linear supermap from $\mathrm{A}$
to $\mathrm{B}$ physical. We will analyse this issue from the OPT
perspective in the next subsection.

Now let us show that the theory of quantum supermaps is \emph{not}
causal. Suppose we want to construct a deterministic effect in this
theory. According to \citep{Circuit-architecture,Chiribella2008,Hierarchy-combs},
to this end it is enough to consider a 1-comb made of deterministic
quantum operations, which means a circuit fragment of the form\[
\begin{aligned}\Qcircuit @C=1em @R=.7em @!R {& \qw \poloFantasmaCn{\rB_0}  & \multigate{1}{\cA_0} & \qw \poloFantasmaCn{\rA_0} &\qw & & &\qw \poloFantasmaCn{\rA_1} & \multigate{1}{\cA_1} & \qw \poloFantasmaCn{\rB_1} &\qw \\ && \pureghost{\cA_0} & \qw \poloFantasmaCn{\rS}  & \qw &\qw &\qw &\qw & \ghost{\cA_1}}\end{aligned}~,
\]where both $\mathcal{A}_{0}$ and $\mathcal{A}_{1}$ are deterministic
quantum operations. Since this comb must output a probability, its
pre-processing $\mathcal{A}_{0}$ must be a deterministic bipartite
quantum state $\rho\in\mathsf{St}\left(\mathrm{A}_{0}\mathrm{S}\right)$,
and its post-processing $\mathcal{A}_{1}$ must be a deterministic
bipartite quantum effect $u\in\mathsf{Eff}\left(\mathrm{A}_{1}\mathrm{S}\right)$,
for some system $\mathrm{S}$:\[
\begin{aligned}\Qcircuit @C=1em @R=.7em @!R {&\multiprepareC{1}{\rho} & \qw \poloFantasmaCn{\rA_0} &\qw & & &\qw \poloFantasmaCn{\rA_1} & \multimeasureD{1}{u}\\ &\pureghost{\rho} & \qw \poloFantasmaCn{\rS}  & \qw &\qw &\qw &\qw & \ghost{u}}\end{aligned}~.
\]Now recall that in causal theories the deterministic effect of a bipartite
system $\mathrm{A}_{1}\mathrm{S}$ factorizes as $u_{\mathrm{A}_{1}}\otimes u_{\mathrm{S}}$,
and that $u$ is nothing but the trace (cf.\ example~\ref{exa:quantum tests}).
Then\[
\begin{aligned}\Qcircuit @C=1em @R=.7em @!R {&\multiprepareC{1}{\rho} & \qw \poloFantasmaCn{\rA_0} &\qw & & &\qw \poloFantasmaCn{\rA_1} & \multimeasureD{1}{u}\\ &\pureghost{\rho} & \qw \poloFantasmaCn{\rS}  & \qw &\qw &\qw &\qw & \ghost{u}}\end{aligned}~=\!\!\!\!\begin{aligned}\Qcircuit @C=1em @R=.7em @!R {&\multiprepareC{1}{\rho} & \qw \poloFantasmaCn{\rA_0} &\qw & & &\qw \poloFantasmaCn{\rA_1} & \measureD{\Tr}\\ &\pureghost{\rho} & \qw \poloFantasmaCn{\rS}  & \qw &\qw &\qw &\qw & \measureD{\Tr}}\end{aligned}~=\!\!\!\!\begin{aligned}\Qcircuit @C=1em @R=.7em @!R {&\prepareC{\rho'} & \qw \poloFantasmaCn{\rA_0} &\qw & & &\qw \poloFantasmaCn{\rA_1} & \measureD{\Tr}}\end{aligned}~,
\]where $\rho'=\mathrm{Tr}_{\mathrm{S}}\left[\rho\right]$. In this
way, for any choice of $\rho\in\mathfrak{D}\left(\mathcal{H}^{A_{0}S}\right)$
we obtain \emph{all} quantum states $\rho'\in\mathfrak{D}\left(\mathcal{H}^{A_{0}}\right)$.
Hence, the generic deterministic effect on system $\mathrm{A}=\left(A_{0},A_{1}\right)$
of the theory of quantum supermaps is of the form\[
u_{\rho}=\!\!\!\!\begin{aligned}\Qcircuit @C=1em @R=.7em @!R {&\prepareC{\rho} & \qw \poloFantasmaCn{\rA_0} &\qw & & &\qw \poloFantasmaCn{\rA_1} & \measureD{\Tr}}\end{aligned}~,
\]for any quantum state $\rho\in\mathfrak{D}\left(\mathcal{H}^{A_{0}}\right)$.
This means that there is a whole family of deterministic effects,
labelled by quantum states. Therefore, the theory of quantum supermaps
is \emph{not} causal, a fact that is confirmed by the presence of
signalling bipartite quantum channels \citep{Beckman}. The failure
of causality implies here that there are some deterministic effects
for a bipartite system $\mathrm{AB}=\left(A_{0},A_{1}\right)\left(B_{0},B_{1}\right)$
that do not factorize. Indeed, if we take an entangled bipartite quantum
state $\rho\in\mathfrak{D}\left(\mathcal{H}^{A_{0}B_{0}}\right)$,
the associated deterministic effect is\begin{equation}\label{eq:bipartite u}
u_{\rho}=\!\!\!\!\begin{aligned}\Qcircuit @C=1em @R=.7em @!R {&\multiprepareC{1}{\rho} & \qw \poloFantasmaCn{\rA_0} &\qw & & &\qw \poloFantasmaCn{\rA_1} & \measureD{\Tr} \\ &\pureghost{\rho} & \qw \poloFantasmaCn{\rB_0} &\qw & & &\qw \poloFantasmaCn{\rB_1} & \measureD{\Tr}}\end{aligned}~,
\end{equation}which does not factorize. This fact will play an important role in
Appendix~\ref{subsec:physical maps}, and it is ultimately the reason
why we need the CPTNI preservation condition in a complete sense.
\end{example}

\subsection{\label{subsec:physical maps}Necessary conditions for physical transformations}

In the OPT approach, however we construct a diagram, this represents
a physical object: a valid state, a valid transformation, a valid
effect. Specializing our analysis to transformations from a system
$\mathrm{A}$ to a system $\mathrm{B}$, a linear map $\mathcal{A}$
from $\mathsf{St}_{\mathbb{R}}\left(\mathrm{A}\right)$ to $\mathsf{St}_{\mathbb{R}}\left(\mathrm{B}\right)$
is a valid physical transformation only if\begin{equation} \label{eq:state} \begin{aligned}\Qcircuit @C=1em @R=.7em @!R { & \multiprepareC{1}{\rho} & \qw \poloFantasmaCn{\rA} & \gate{\cA} & \qw \poloFantasmaCn{\rB} & \qw \\ & \pureghost{\rho} & \qw \poloFantasmaCn{\rS} & \qw &\qw &\qw }\end{aligned} \end{equation}is
a valid state of system $\mathrm{BS}$, for every choice of $\rho$
and $\mathrm{S}$. Here we will derive some \emph{necessary} conditions
to guarantee this. In particular, if \eqref{eq:state} is a valid
state, for every bipartite effect $E\in\mathsf{Eff}\left(\mathrm{BS}\right)$
we have\begin{equation} \label{eq:state-inequality}
0\leq\!\!\!\!\begin{aligned}\Qcircuit @C=1em @R=.7em @!R { & \multiprepareC{1}{\rho} & \qw \poloFantasmaCn{\rA} & \gate{\cA} & \qw \poloFantasmaCn{\rB} & \multimeasureD{1}{E} \\ & \pureghost{\rho} & \qw \poloFantasmaCn{\rS} & \qw &\qw &\ghost{E} }\end{aligned} ~\leq 1,
\end{equation}because this is the probability of $E$ occurring on $\left(\mathcal{A}\otimes\mathcal{I}_{\mathrm{S}}\right)\rho$.
\begin{rem}
Condition~\eqref{eq:state-inequality} is only \emph{necessary},
but in general \emph{not sufficient} to guarantee that \eqref{eq:state}
represents a valid physical state. Indeed, the theory may have additional
restrictions on the allowed states, as it happens in the presence
of superselection rules \citep{Preskill-superselection,Fermionic2,TowardsThermo,Purity,ScandoloPhD}.
If the theory is completely unrestricted, such as quantum theory or
the theory of quantum supermaps, condition~\eqref{eq:state-inequality}
and those we derive in the following will be sufficient as well.
\end{rem}

Let us analyse the two inequalities in \eqref{eq:state-inequality}
separately. If $\left(\mathcal{A}\otimes\mathcal{I}_{\mathrm{S}}\right)\rho$
is in the cone of states of $\mathrm{BS}$, then we immediately have\[
\begin{aligned}\Qcircuit @C=1em @R=.7em @!R { & \multiprepareC{1}{\rho} & \qw \poloFantasmaCn{\rA} & \gate{\cA} & \qw \poloFantasmaCn{\rB} & \multimeasureD{1}{E} \\ & \pureghost{\rho} & \qw \poloFantasmaCn{\rS} & \qw &\qw &\ghost{E} }\end{aligned} ~\geq 0,
\]for every effect $E\in\mathsf{Eff}\left(\mathrm{BS}\right)$.
\begin{defn}
\label{def:CP-OPT}We say that a transformation $\mathcal{A}$ in
$\mathsf{Transf}_{\mathbb{R}}\left(\mathrm{A},\mathrm{B}\right)$
is \emph{completely positive} if, for every system $\mathrm{S}$ and
every element $\xi\in\mathsf{St}_{+}\left(\mathrm{AS}\right)$, we
have $\left(\mathcal{A}\otimes\mathcal{I}_{\mathrm{S}}\right)\xi\in\mathsf{St}_{+}\left(\mathrm{BS}\right)$.
\end{defn}

In words, a completely positive transformation is a linear transformation
that maps elements in the input cone of states to elements in the
output cone of states in a complete sense, i.e.\ even when there
is an ancillary system $\mathrm{S}$. This is clearly a necessary
condition for a transformation to be physical.

Note that it is equivalent to define complete positivity just on states
in $\mathsf{St}\left(\mathrm{AS}\right)$, instead of on generic elements
of $\xi\in\mathsf{St}_{+}\left(\mathrm{AS}\right)$: $\mathcal{A}$
is completely positive if and only if, for every system $\mathrm{S}$
and every state $\rho\in\mathsf{St}\left(\mathrm{AS}\right)$, we
have $\left(\mathcal{A}\otimes\mathcal{I}_{\mathrm{S}}\right)\rho\in\mathsf{St}_{+}\left(\mathrm{BS}\right)$.
To see the non-trivial implication, recall that if $\xi$ is a generic
element of $\mathsf{St}_{+}\left(\mathrm{AS}\right)$, it can be written
as a conical combination of states $\rho_{i}$ of $\mathrm{AS}$:
$\xi=\sum_{i}\lambda_{i}\rho_{i}$, with $\lambda_{i}\geq0$ for every
$i$. Then, if we know that $\left(\mathcal{A}\otimes\mathcal{I}_{\mathrm{S}}\right)\rho\in\mathsf{St}_{+}\left(\mathrm{BS}\right)$
for every $\rho\in\mathsf{St}\left(\mathrm{AS}\right)$, we have
\[
\left(\mathcal{A}\otimes\mathcal{I}_{\mathrm{S}}\right)\xi=\sum_{i}\lambda_{i}\left(\mathcal{A}\otimes\mathcal{I}_{\mathrm{S}}\right)\rho_{i}\in\mathsf{St}_{+}\left(\mathrm{BS}\right),
\]
because $\mathsf{St}_{+}\left(\mathrm{BS}\right)$ is closed under
conical combinations.
\begin{example}
In quantum theory, the cone of states is the cone of positive semi-definite
operators; therefore completely positive transformations in the sense
of definition~\ref{def:CP-OPT} are exactly CP maps.

In the theory of quantum supermaps, the cone of states is the cone
of CP maps. In this case, completely positive transformations are
CPP supermaps \citep{Chiribella2008,Gour2018}.
\end{example}

Now let us analyse the second inequality in \eqref{eq:state-inequality},
namely\begin{equation}\label{eq:inequality2}
\begin{aligned}\Qcircuit @C=1em @R=.7em @!R { & \multiprepareC{1}{\rho} & \qw \poloFantasmaCn{\rA} & \gate{\cA} & \qw \poloFantasmaCn{\rB} & \multimeasureD{1}{E} \\ & \pureghost{\rho} & \qw \poloFantasmaCn{\rS} & \qw &\qw &\ghost{E} }\end{aligned} ~\leq 1,
\end{equation}for every effect $E\in\mathsf{Eff}\left(\mathrm{BS}\right)$. Assume
$\mathcal{A}$ is completely positive. Then, demanding the validity
of inequality~\eqref{eq:inequality2} for every state $\rho\in\mathsf{St}\left(\mathrm{AS}\right)$
and every effect $E\in\mathsf{Eff}\left(\mathrm{BS}\right)$ is equivalent
to demanding its validity when $\rho$ is any \emph{deterministic}
state and $E$ any \emph{deterministic} effect. To see the non-trivial
implication, recall that if $\rho$ is non-deterministic, it arises
in a preparation-test $\left\{ \rho,\rho'\right\} $. Similarly, if
$E$ is non-deterministic, it arises in an observation-test $\left\{ E,E'\right\} $.
Clearly $\widetilde{\rho}=\rho+\rho'$ is a deterministic state, and
$\widetilde{E}=E+E'$ is a deterministic effect. Then\begin{eqnarray*}
1&\geq&\begin{aligned}\Qcircuit @C=1em @R=.7em @!R { & \multiprepareC{1}{\widetilde{\rho}} & \qw \poloFantasmaCn{\rA} & \gate{\cA} & \qw \poloFantasmaCn{\rB} & \multimeasureD{1}{\widetilde{E}} \\ & \pureghost{\widetilde{\rho}} & \qw \poloFantasmaCn{\rS} & \qw &\qw &\ghost{\widetilde{E}} }\end{aligned} \\ &=&\!\!\!\!\begin{aligned}\Qcircuit @C=1em @R=.7em @!R { & \multiprepareC{1}{\rho} & \qw \poloFantasmaCn{\rA} & \gate{\cA} & \qw \poloFantasmaCn{\rB} & \multimeasureD{1}{E} \\ & \pureghost{\rho} & \qw \poloFantasmaCn{\rS} & \qw &\qw &\ghost{E} }\end{aligned}~+\!\!\!\!\begin{aligned}\Qcircuit @C=1em @R=.7em @!R { & \multiprepareC{1}{\rho} & \qw \poloFantasmaCn{\rA} & \gate{\cA} & \qw \poloFantasmaCn{\rB} & \multimeasureD{1}{E'} \\ & \pureghost{\rho} & \qw \poloFantasmaCn{\rS} & \qw &\qw &\ghost{E'} }\end{aligned} \\ &+& \!\!\!\!\begin{aligned}\Qcircuit @C=1em @R=.7em @!R { & \multiprepareC{1}{\rho'} & \qw \poloFantasmaCn{\rA} & \gate{\cA} & \qw \poloFantasmaCn{\rB} & \multimeasureD{1}{E} \\ & \pureghost{\rho'} & \qw \poloFantasmaCn{\rS} & \qw &\qw &\ghost{E} }\end{aligned}~+\!\!\!\!\begin{aligned}\Qcircuit @C=1em @R=.7em @!R { & \multiprepareC{1}{\rho'} & \qw \poloFantasmaCn{\rA} & \gate{\cA} & \qw \poloFantasmaCn{\rB} & \multimeasureD{1}{E'} \\ & \pureghost{\rho'} & \qw \poloFantasmaCn{\rS} & \qw &\qw &\ghost{E'} }\end{aligned}~.
\end{eqnarray*}Now, each term in the right-hand side is non-negative because $\mathcal{A}$
is completely positive. It follows that each term is also less than
or equal to 1, and specifically\[
\begin{aligned}\Qcircuit @C=1em @R=.7em @!R { & \multiprepareC{1}{\rho} & \qw \poloFantasmaCn{\rA} & \gate{\cA} & \qw \poloFantasmaCn{\rB} & \multimeasureD{1}{E} \\ & \pureghost{\rho} & \qw \poloFantasmaCn{\rS} & \qw &\qw &\ghost{E} }\end{aligned} ~\leq 1.
\]We summarize these necessary requirements in the following theorem.
\begin{thm}
\label{thm:necessary condition physical}Let $\mathcal{A}\in\mathsf{Transf}_{\mathbb{R}}\left(\mathrm{A},\mathrm{B}\right)$.
Then $\mathcal{A}$ is a physical transformation only if both these
conditions are satisfied:
\begin{enumerate}
\item \label{enu:CPP}$\left(\mathcal{A}\otimes\mathcal{I}_{\mathrm{S}}\right)\rho\in\mathsf{St}_{+}\left(\mathrm{BS}\right)$
for every system $\mathrm{S}$ and every state $\rho\in\mathsf{St}\left(\mathrm{AS}\right)$;
\item \label{enu:2}\[
\begin{aligned}\Qcircuit @C=1em @R=.7em @!R { & \multiprepareC{1}{\rho} & \qw \poloFantasmaCn{\rA} & \gate{\cA} & \qw \poloFantasmaCn{\rB} & \multimeasureD{1}{u} \\ & \pureghost{\rho} & \qw \poloFantasmaCn{\rS} & \qw &\qw &\ghost{u} }\end{aligned} ~\leq 1,
\]for every system $\mathrm{S}$, every \emph{deterministic} state $\rho\in\mathsf{St}\left(\mathrm{AS}\right)$,
and every \emph{deterministic} effect $u\in\mathsf{Eff}\left(\mathrm{BS}\right)$.
\end{enumerate}
\end{thm}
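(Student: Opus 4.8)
The plan is essentially to assemble the discussion preceding the statement into one argument, proceeding in two stages. Since $\mathcal{A}$ is physical, the bipartite diagram \eqref{eq:state} is by definition a valid state of $\mathrm{BS}$ for every ancilla $\mathrm{S}$ and every $\rho\in\mathsf{St}\left(\mathrm{AS}\right)$; composing it with an arbitrary effect $E\in\mathsf{Eff}\left(\mathrm{BS}\right)$ yields a closed circuit, hence a number in $\left[0,1\right]$, which is exactly \eqref{eq:state-inequality}, now known to hold for all $\rho$, all $\mathrm{S}$, and all $E$.

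For condition~\ref{enu:CPP}, I would simply observe that, being a valid state, \eqref{eq:state} lies in $\mathsf{St}\left(\mathrm{BS}\right)$ and hence in the cone $\mathsf{St}_{+}\left(\mathrm{BS}\right)$; thus $\left(\mathcal{A}\otimes\mathcal{I}_{\mathrm{S}}\right)\rho\in\mathsf{St}_{+}\left(\mathrm{BS}\right)$ for every $\rho\in\mathsf{St}\left(\mathrm{AS}\right)$ and every $\mathrm{S}$, which is precisely condition~\ref{enu:CPP} (equivalently, this is the content of the left inequality in \eqref{eq:state-inequality}, saying every effect is non-negative on $\left(\mathcal{A}\otimes\mathcal{I}_{\mathrm{S}}\right)\rho$). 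By the remark recorded just before Definition~\ref{def:CP-OPT}, this is the same as complete positivity tested on the whole cone $\mathsf{St}_{+}\left(\mathrm{AS}\right)$.

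For condition~\ref{enu:2}, I would use complete positivity (just established) to reduce the right inequality in \eqref{eq:state-inequality} from arbitrary $\rho$ and $E$ to deterministic ones. A non-deterministic state $\rho\in\mathsf{St}\left(\mathrm{AS}\right)$ arises in some preparation-test, so there is a complementary $\rho'$ with $\widetilde{\rho}:=\rho+\rho'$ deterministic; likewise a non-deterministic effect $E$ arises in an observation-test, giving $E'$ with $\widetilde{E}:=E+E'$ deterministic. Feeding $\left(\widetilde{\rho},\widetilde{E}\right)$ into \eqref{eq:state-inequality} and expanding by bilinearity of the circuit pairing writes $1\geq\left(\widetilde{E}\,|\,\mathcal{A}\otimes\mathcal{I}_{\mathrm{S}}\,|\,\widetilde{\rho}\right)$ as a sum of four terms; each is non-negative because $\mathcal{A}$ is completely positive, so each of them, and in particular the one for $\left(\rho,E\right)$, is $\leq1$. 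Hence the right inequality holds for arbitrary $\rho$ and $E$, and a fortiori when $\rho$ is deterministic and $E=u$ is a deterministic effect, which is condition~\ref{enu:2}; the two stages together prove the theorem. The step I expect to need the most care is exactly this reduction: one must invoke the structural facts that every non-deterministic state and every non-deterministic effect completes within a test to a deterministic one, and note the logical dependency that the four-term splitting only goes through once complete positivity has been secured; everything else is routine bookkeeping with tests and bilinearity.
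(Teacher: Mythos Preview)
Your proof is correct and assembles exactly the discussion the paper places before the theorem. One logical remark: the four-term reduction via $\widetilde{\rho}$ and $\widetilde{E}$ is not actually needed to obtain condition~\ref{enu:2}. Once you have \eqref{eq:state-inequality} for all $\rho$ and $E$ from physicality, condition~\ref{enu:2} follows immediately by specializing to deterministic $\rho$ and $u$. The paper's reduction runs in the \emph{opposite} direction---it shows that the deterministic case implies the arbitrary one, so that condition~\ref{enu:2} loses no content despite being stated only for deterministic inputs---whereas your argument reproves the arbitrary case from itself via the deterministic intermediary, which is valid but circular.
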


Note that in particular, condition~\ref{enu:2} implies that\begin{equation}\label{eq:3 weak}
\begin{aligned}\Qcircuit @C=1em @R=.7em @!R { & \prepareC{\rho} & \qw \poloFantasmaCn{\rA} & \gate{\cA} & \qw \poloFantasmaCn{\rB} & \measureD{u}}\end{aligned} ~\leq 1,
\end{equation}for it is enough to take $\mathrm{S}$ to be the trivial system $\mathrm{I}$.
However, in general, this latter condition is \emph{weaker} than condition~\ref{enu:2},
such as in the theory of quantum supermaps. Let us analyse the role
of conditions~\ref{enu:CPP}, \ref{enu:2}, and \eqref{eq:3 weak}
in this theory.
\begin{example}
First of all, since the theory of quantum supermaps has no restrictions,
the conditions in theorem~\ref{thm:necessary condition physical}
become \emph{sufficient} as well. We have already examined condition~\ref{enu:CPP}.
Let us focus on condition~\ref{enu:2}, and unfold its meaning. In
this case, $\rho$ is actually a bipartite channel $\mathcal{N}$,
and $\mathcal{A}$ acts as a supermap $\Theta$ on half of $\mathcal{N}$.
Recalling Eq.~\eqref{eq:bipartite u}, condition~\ref{enu:2} becomes
$\mathrm{Tr}_{B_{1}S_{1}}\left[\left(\Theta^{A\rightarrow B}\otimes\mathbf{1}^{S}\right)\left[\mathcal{N}^{AS}\right]\left(\rho^{B_{0}S_{0}}\right)\right]\leq1$.
This is nothing but requiring that $\Theta$ be completely CPTNI-preserving
(cf.\ Eq.~\eqref{eq:c-CPTNI} in the main article).

In conclusion, the two conditions of theorem \ref{thm:necessary condition physical}
are exactly the two conditions we found in this article. Note that
condition~\eqref{eq:3 weak}, expressing CPTNI preservation (but
not in a complete sense), is \emph{weaker} than condition~\ref{enu:2},
as there is no way to recover condition~\ref{enu:2} from condition~\eqref{eq:3 weak}.
This is essentially because not all bipartite deterministic effects
can be reduced to single-system deterministic effects (cf.\ Eq.~\eqref{eq:bipartite u}).
Thus condition~\eqref{eq:3 weak} \emph{cannot} be used to assess
whether a candidate supermap is physical or not, so CPTNI preservation
is not enough.
\end{example}

If theorem~\ref{thm:necessary condition physical} is valid in all
physical theories, why do we not need to impose the trace non-increasing
condition in a complete sense in quantum theory? This is because the
theory is causal. Indeed in all causal theories, condition~\ref{enu:2}
becomes equivalent to condition~\eqref{eq:3 weak}.
\begin{prop}
\label{prop:TNI causal}In a causal theory with deterministic effect
$u$, one has\[
\begin{aligned}\Qcircuit @C=1em @R=.7em @!R { & \multiprepareC{1}{\rho} & \qw \poloFantasmaCn{\rA} & \gate{\cA} & \qw \poloFantasmaCn{\rB} & \multimeasureD{1}{u} \\ & \pureghost{\rho} & \qw \poloFantasmaCn{\rS} & \qw &\qw &\ghost{u} }\end{aligned} ~\leq 1,
\]for every system $\mathrm{S}$ and every deterministic state $\rho\in\mathsf{St}\left(\mathrm{AS}\right)$,
if and only if\[
\begin{aligned}\Qcircuit @C=1em @R=.7em @!R { & \prepareC{\rho} & \qw \poloFantasmaCn{\rA} & \gate{\cA} & \qw \poloFantasmaCn{\rB} & \measureD{u}}\end{aligned} ~\leq 1.
\]for every deterministic state $\rho\in\mathsf{St}\left(\mathrm{A}\right)$.
\end{prop}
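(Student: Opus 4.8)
The plan is to prove the two implications of the equivalence separately. The implication from the bipartite inequality to the single-system one is immediate: it suffices to take the ancillary system $\mathrm{S}$ to be the trivial system $\mathrm{I}$ in the hypothesis, so that $\rho$ becomes a deterministic state of $\mathrm{A}$ and $u$ a deterministic effect of $\mathrm{B}$, and one reads off $\left(u\middle|\mathcal{A}\middle|\rho\right)\leq1$ for every deterministic $\rho\in\mathsf{St}\left(\mathrm{A}\right)$.

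For the converse I would use causality through two of its consequences that are recalled earlier in this appendix: first, that in a causal theory the unique deterministic effect on a composite system factorizes, $u_{\mathrm{B}\mathrm{S}}=u_{\mathrm{B}}\otimes u_{\mathrm{S}}$; and second, the characterization~\eqref{eq:characterization of channels} of channels as exactly those transformations that compose with the deterministic effect to give the deterministic effect. Fix an arbitrary system $\mathrm{S}$ and a deterministic state $\rho\in\mathsf{St}\left(\mathrm{A}\mathrm{S}\right)$. By the factorization of the deterministic effect, the quantity to be bounded is $\left(u_{\mathrm{B}}\otimes u_{\mathrm{S}}\middle|\mathcal{A}\otimes\mathcal{I}_{\mathrm{S}}\middle|\rho\right)$; using the interchange law between sequential and parallel composition, the deterministic effect $u_{\mathrm{S}}$ slides down the wire $\mathrm{S}$ to act directly on $\rho$. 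Writing $\rho':=\left(\mathcal{I}_{\mathrm{A}}\otimes u_{\mathrm{S}}\right)\rho$ for the resulting marginal on $\mathrm{A}$, this gives
\[
\begin{aligned}\Qcircuit @C=1em @R=.7em @!R { & \multiprepareC{1}{\rho} & \qw \poloFantasmaCn{\rA} & \gate{\cA} & \qw \poloFantasmaCn{\rB} & \multimeasureD{1}{u} \\ & \pureghost{\rho} & \qw \poloFantasmaCn{\rS} & \qw &\qw &\ghost{u} }\end{aligned}~=~\left(u_{\mathrm{B}}\middle|\mathcal{A}\middle|\rho'\right).
\]

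The only step requiring care is checking that $\rho'$ is again a \emph{deterministic} state of $\mathrm{A}$, and this is where the characterization~\eqref{eq:characterization of channels} enters. The deterministic effect $u_{\mathrm{S}}$ is a deterministic transformation from $\mathrm{S}$ to the trivial system $\mathrm{I}$, i.e.\ a channel; hence $\mathcal{I}_{\mathrm{A}}\otimes u_{\mathrm{S}}$ is a channel from $\mathrm{A}\mathrm{S}$ to $\mathrm{A}\mathrm{I}=\mathrm{A}$, since $u_{\mathrm{A}}\left(\mathcal{I}_{\mathrm{A}}\otimes u_{\mathrm{S}}\right)=u_{\mathrm{A}}\otimes u_{\mathrm{S}}=u_{\mathrm{A}\mathrm{S}}$ by~\eqref{eq:characterization of channels} and the factorization of $u$. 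Composing this channel with the deterministic state $\rho$ yields a deterministic state, so $\rho'\in\mathsf{St}\left(\mathrm{A}\right)$ is deterministic, the single-system hypothesis applies to it, and $\left(u_{\mathrm{B}}\middle|\mathcal{A}\middle|\rho'\right)\leq1$. Combining this with the displayed identity shows that the bipartite inequality holds for every $\mathrm{S}$ and every deterministic $\rho\in\mathsf{St}\left(\mathrm{A}\mathrm{S}\right)$, completing the equivalence. I do not expect a genuine obstacle: the closest thing to one is precisely the verification that the $\mathrm{A}$-marginal of a deterministic bipartite state is again deterministic, which is exactly the point where causality is used, everything else being a direct unfolding of definitions.
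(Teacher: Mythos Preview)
Your proof is correct and follows essentially the same route as the paper: factorize the bipartite deterministic effect as $u_{\mathrm{B}}\otimes u_{\mathrm{S}}$, slide $u_{\mathrm{S}}$ onto the bipartite state to obtain its $\mathrm{A}$-marginal, and then apply the single-system hypothesis. The only difference is that you spell out, via~\eqref{eq:characterization of channels}, why the marginal of a deterministic bipartite state is again deterministic, whereas the paper simply asserts this step.
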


\begin{proof}
We have already seen one implication (necessity), now let us focus
on the other. Assume condition~\eqref{eq:3 weak} holds. Take an
arbitrary system $\mathrm{S}$ and an arbitrary deterministic state
$\Sigma\in\mathsf{St}\left(\mathrm{AS}\right)$. Then\[
\begin{aligned}\Qcircuit @C=1em @R=.7em @!R { & \multiprepareC{1}{\Sigma} & \qw \poloFantasmaCn{\rA} & \gate{\cA} & \qw \poloFantasmaCn{\rB} & \multimeasureD{1}{u} \\ & \pureghost{\Sigma} & \qw \poloFantasmaCn{\rS} & \qw &\qw &\ghost{u} }\end{aligned} ~=\!\!\!\!\begin{aligned}\Qcircuit @C=1em @R=.7em @!R { & \multiprepareC{1}{\Sigma} & \qw \poloFantasmaCn{\rA} & \gate{\cA} & \qw \poloFantasmaCn{\rB} & \measureD{u} \\ & \pureghost{\Sigma} & \qw \poloFantasmaCn{\rS} & \qw &\qw &\measureD{u} }\end{aligned}~=\!\!\!\!\begin{aligned}\Qcircuit @C=1em @R=.7em @!R { & \prepareC{\rho} & \qw \poloFantasmaCn{\rA} & \gate{\cA} & \qw \poloFantasmaCn{\rB} & \measureD{u} }\end{aligned}~\leq 1,
\]where we have used the fact that the deterministic effect of a composite
system factorizes, and that\[
\begin{aligned}\Qcircuit @C=1em @R=.7em @!R { & \multiprepareC{1}{\Sigma} & \qw \poloFantasmaCn{\rA} & \qw \\ & \pureghost{\Sigma} & \qw \poloFantasmaCn{\rS} & \measureD{u} }\end{aligned} ~=:\!\!\!\!\begin{aligned}\Qcircuit @C=1em @R=.7em @!R { & \prepareC{\rho} & \qw \poloFantasmaCn{\rA} & \qw}\end{aligned}
\]is a deterministic state.
\end{proof}
In other words, for causal theories condition~\ref{enu:2} can be
formulated only for single systems, without the need of an ancillary
system $\mathrm{S}$. Recall that in quantum theory $u$ is the trace,
so condition~\eqref{eq:3 weak} means that $\mathcal{A}$ is trace-non-increasing.
Proposition~\ref{prop:TNI causal} is the ultimate reason why in
quantum theory it is enough to require that a CP map be TNI (on single
system) rather than \emph{completely} TNI. In conclusion, the ultimate
origin of the unexpected behaviour of the theory of quantum supermaps
is the failure of causality.

However, in \citep{Gour2018} one of the authors showed that for a
CPP map to be a superchannel, instead, it is not necessary to demand
that it be completely TPP, but it is enough that it be TPP. Why do
we not need CPTP preservation in a complete sense for superchannels?
Let us understand it using the OPT formalism.

Clearly, a superchannel $\Theta^{A\rightarrow B}$ must send channels
to channels in a complete sense: for any bipartite quantum channel
$\mathcal{N}^{AB}$, $\mathbf{1}^{R}\otimes\Theta^{A\rightarrow B}\left[\mathcal{N}^{RA}\right]=\mathcal{M}^{RB}$,
where $\mathcal{M}^{RB}$ is still a quantum channel. By Eq.~\eqref{eq:characterization of channels},
this is true if and only if
\begin{equation}
\left(\mathrm{Tr}_{R_{1}}\otimes\mathrm{Tr}_{B_{1}}\right)\circ\left(\mathbf{1}^{R}\otimes\Theta^{A\rightarrow B}\left[\mathcal{N}^{RA}\right]\right)=\mathrm{Tr}_{R_{0}}\otimes\mathrm{Tr}_{B_{0}},\label{eq:CTPP}
\end{equation}
where we have denoted the deterministic effect $u$ explicitly as
the trace. Now let us try to prove Eq.~\eqref{eq:CTPP} knowing that
$\Theta^{A\rightarrow B}$ is \emph{just} TPP. Now consider the following
channel:\begin{equation}\label{Psi'}
\begin{aligned}\Qcircuit @C=1em @R=.7em @!R {& \qw \poloFantasmaCn{\rA_0}  & \gate{\mathcal{N}'} & \qw \poloFantasmaCn{\rA_1} &\qw }\end{aligned}~:=\!\!\!\!\begin{aligned}\Qcircuit @C=1em @R=.7em @!R {&\prepareC{\rho_0}& \qw \poloFantasmaCn{\rR_0}  & \multigate{1}{\mathcal{N}} & \qw \poloFantasmaCn{\rR_1} &\measureD{\Tr}  \\ &&\qw\poloFantasmaCn{\rA_0} & \ghost{\mathcal{N}} & \qw \poloFantasmaCn{\rA_1}  & \qw }\end{aligned}~,
\end{equation}where $\rho_{0}$ is some density matrix on $R_{0}$. Since $\Theta^{A\rightarrow B}$
is TPP, we have that $\mathcal{M}'^{B}:=\Theta^{A\rightarrow B}\left[\mathcal{N}'^{A}\right]$
is still a quantum channel. In other words
\[
\mathrm{Tr}_{B_{1}}\circ\Theta^{A\rightarrow B}\left[\mathcal{N}'^{A}\right]=\mathrm{Tr}_{B_{0}}.
\]
Then, if we take a density matrix $\sigma_{0}\in\mathfrak{D}\left(\mathcal{H}^{B_{0}}\right)$,
we have
\[
\mathrm{Tr}_{B_{1}}\circ\Theta^{A\rightarrow B}\left[\mathcal{N}'^{A}\right]\left(\sigma_{0}^{B_{0}}\right)=\mathrm{Tr}_{B_{0}}\left[\sigma_{0}^{B_{0}}\right]=1.
\]
Now, recalling the definition of $\mathcal{N}'^{A}$ in Eq.~\eqref{Psi'},
we have
\begin{equation}
\mathrm{Tr}_{R_{1}}\mathrm{Tr}_{B_{1}}\left(\mathbf{1}^{R}\otimes\Theta^{A\rightarrow B}\left[\mathcal{N}^{RA}\right]\right)\left(\rho_{0}^{R_{0}}\otimes\sigma_{0}^{B_{0}}\right)=1,\label{eq:1 product}
\end{equation}
for any $\rho_{0}\in\mathfrak{D}\left(\mathcal{H}^{R_{0}}\right)$
and any $\sigma_{0}\in\mathfrak{D}\left(\mathcal{H}^{B_{0}}\right)$.
If we manage to prove that
\[
\mathrm{Tr}_{R_{1}}\mathrm{Tr}_{B_{1}}\left(\mathbf{1}^{R}\otimes\Theta^{A\rightarrow B}\left[\mathcal{N}^{RA}\right]\right)\left(\tau^{R_{0}B_{0}}\right)=1
\]
for every bipartite state $\tau^{R_{0}B_{0}}$, then the validity
of Eq.~\eqref{eq:CTPP} is shown. Now, recall that in quantum theory
every bipartite state can be written as an affine combination of product
states. Therefore $\tau^{R_{0}B_{0}}=\sum_{j}\lambda_{j}\rho_{j}^{R_{0}}\otimes\sigma_{j}^{B_{0}}$,
with $\sum_{j}\lambda_{j}=1$. Therefore
\begin{eqnarray*}
\mathrm{Tr}_{R_{1}}\mathrm{Tr}_{B_{1}}\left(\mathbf{1}^{R}\otimes\Theta^{A\rightarrow B}\left[\mathcal{N}^{RA}\right]\right)\left(\tau^{R_{0}B_{0}}\right) & = & \mathrm{Tr}_{R_{1}}\mathrm{Tr}_{B_{1}}\left(\mathbf{1}^{R}\otimes\Theta^{A\rightarrow B}\left[\mathcal{N}^{RA}\right]\right)\left(\sum_{j}\lambda_{j}\rho_{j}^{R_{0}}\otimes\sigma_{j}^{B_{0}}\right)\\
 & = & \sum_{j}\lambda_{j}\mathrm{Tr}_{R_{1}}\mathrm{Tr}_{B_{1}}\left(\mathbf{1}^{R}\otimes\Theta^{A\rightarrow B}\left[\mathcal{N}^{RA}\right]\right)\left(\rho_{j}^{R_{0}}\otimes\sigma_{j}^{B_{0}}\right)\\
 & = & \sum_{j}\lambda_{j}\\
 & = & 1,
\end{eqnarray*}
where we have used Eq.~\eqref{eq:1 product}. This proves Eq.~\eqref{eq:CTPP},
so for quantum superchannels it is indeed enough to require that they
be TPP. Note that this proof does not use any quantum feature except
causality, which allows us to characterize quantum channels as CPTP
maps, and local tomography \citep{Chiribella-purification,hardy2013},
a property that guarantees that every deterministic bipartite state
can be written as an affine combination of deterministic product states.

The same proof also shows that any attempt to adapt it to supermaps
transforming quantum channels to CPTNI maps is bound to fail: even
if $\mathrm{Tr}_{R_{1}}\mathrm{Tr}_{B_{1}}\left(\mathbf{1}^{R}\otimes\Theta^{A\rightarrow B}\left[\mathcal{N}^{RA}\right]\right)\left(\rho_{0}^{R_{0}}\otimes\sigma_{0}^{B_{0}}\right)\leq1$,
we cannot conclude that $\mathrm{Tr}_{R_{1}}\mathrm{Tr}_{B_{1}}\left(\mathbf{1}^{R}\otimes\Theta^{A\rightarrow B}\left[\mathcal{N}^{RA}\right]\right)\left(\tau^{R_{0}B_{0}}\right)\leq1$
for every bipartite state $\tau^{R_{0}B_{0}}$. The reason is that
we are only dealing with an \emph{affine} combination, possibly even
containing negative terms. This does not allow us to conclude anything
about $\sum_{j}\lambda_{j}\mathrm{Tr}_{R_{1}}\mathrm{Tr}_{B_{1}}\left(\mathbf{1}^{R}\otimes\Theta^{A\rightarrow B}\left[\mathcal{N}^{RA}\right]\right)\left(\rho_{j}^{R_{0}}\otimes\sigma_{j}^{B_{0}}\right)$.

We conclude this Appendix with an interesting remark: sometimes, even
with a non-causal theory, the weaker condition~\eqref{eq:3 weak}
is enough to characterize which completely positive transformations
are physical, in that it becomes equivalent to the stronger condition~\ref{enu:2}
in theorem~\ref{thm:necessary condition physical}. This happens
when the only deterministic states of the theory are \emph{separable}
\citep{Chiribella-purification,Chiribella-Scandolo-entanglement}:
i.e.\ they can be written as a \emph{convex} combination of product
deterministic states. In this case, suppose we know that condition~\eqref{eq:3 weak}
holds. Let us assess $\left(u\middle|\mathcal{A}\otimes\mathcal{I}_{\mathrm{S}}\middle|\Sigma\right)$,
where $\mathrm{S}$ is an arbitrary system, $\Sigma\in\mathsf{St}\left(\mathrm{AS}\right)$
is an arbitrary deterministic state, and $u\in\mathsf{Eff}\left(\mathrm{BS}\right)$
is an arbitrary deterministic effect. We have\[
\begin{aligned}\Qcircuit @C=1em @R=.7em @!R { & \multiprepareC{1}{\Sigma} & \qw \poloFantasmaCn{\rA} & \gate{\cA} & \qw \poloFantasmaCn{\rB} & \multimeasureD{1}{u} \\ & \pureghost{\Sigma} & \qw \poloFantasmaCn{\rS} & \qw &\qw &\ghost{u} }\end{aligned} ~=\sum_{j}p_j\!\!\!\!\begin{aligned}\Qcircuit @C=1em @R=.7em @!R { & \prepareC{\alpha_j} & \qw \poloFantasmaCn{\rA} & \gate{\cA} & \qw \poloFantasmaCn{\rB} & \multimeasureD{1}{u} \\ & \prepareC{\sigma_j} & \qw \poloFantasmaCn{\rS} & \qw &\qw &\ghost{u} }\end{aligned}~=:\sum_{j}p_j\!\!\!\!\begin{aligned}\Qcircuit @C=1em @R=.7em @!R { & \prepareC{\alpha_j} & \qw \poloFantasmaCn{\rA} & \gate{\cA} & \qw \poloFantasmaCn{\rB} & \measureD{u_j} }\end{aligned}~,
\]where $\left\{ p_{j}\right\} $ is a probability distribution, $\alpha_{j}$
and $\sigma_{j}$ are deterministic states, and $u_{j}$ is the deterministic
effect defined as $u_{j}:=u_{\mathrm{BS}}\left(\mathcal{I}_{\mathrm{B}}\otimes\sigma_{j,\mathrm{S}}\right)$.
Now, each term $\left(u_{j}\middle|\mathcal{A}\middle|\rho_{j}\right)\leq1$
by condition~\eqref{eq:3 weak}, so any convex combination of them
will yield a number less than or equal to 1. In this case we were
able to prove that condition~\eqref{eq:3 weak} implies condition~\ref{enu:2}
of theorem~\ref{thm:necessary condition physical}.

We can follow the same argument when, dually, all deterministic effects
are separable. This is the case of classical supermaps, where the
non-product states in the realization of bipartite deterministic effects
(Eq.~\eqref{eq:bipartite u}) are all separable. Again, let us assume
condition~\eqref{eq:3 weak} holds, and let us assess $\left(u\middle|\mathcal{A}\otimes\mathcal{I}_{\mathrm{S}}\middle|\Sigma\right)$,
where $\mathrm{S}$ is an arbitrary system, $\Sigma\in\mathsf{St}\left(\mathrm{AS}\right)$
is an arbitrary deterministic state, and $u\in\mathsf{Eff}\left(\mathrm{BS}\right)$
is an arbitrary deterministic effect, as above. One has\[
\begin{aligned}\Qcircuit @C=1em @R=.7em @!R { & \multiprepareC{1}{\Sigma} & \qw \poloFantasmaCn{\rA} & \gate{\cA} & \qw \poloFantasmaCn{\rB} & \multimeasureD{1}{u} \\ & \pureghost{\Sigma} & \qw \poloFantasmaCn{\rS} & \qw &\qw &\ghost{u} }\end{aligned} ~=\sum_{j}p_j\!\!\!\!\begin{aligned}\Qcircuit @C=1em @R=.7em @!R { & \multiprepareC{1}{\Sigma} & \qw \poloFantasmaCn{\rA} & \gate{\cA} & \qw \poloFantasmaCn{\rB} & \measureD{u_{j,\rB}} \\ & \pureghost{\Sigma} & \qw \poloFantasmaCn{\rS} & \qw &\qw &\measureD{u_{j,\rS}} }\end{aligned}~=:\sum_{j}p_j\!\!\!\!\begin{aligned}\Qcircuit @C=1em @R=.7em @!R { & \prepareC{\sigma_j} & \qw \poloFantasmaCn{\rA} & \gate{\cA} & \qw \poloFantasmaCn{\rB} & \measureD{u_{j,\rB}} }\end{aligned}~\leq 1,
\]where $\left\{ p_{j}\right\} $ is a probability distribution, $u_{j,\mathrm{B}}$
and $u_{j,\mathrm{S}}$ are deterministic effects, and $\sigma_{j}$
is a deterministic state, defined as $\sigma_{j}:=\left(\mathcal{I}_{\mathrm{A}}\otimes u_{j,\mathrm{S}}\right)\Sigma$.
The inequality follows again from condition~\eqref{eq:3 weak}.

\end{document}